\newenvironment{proof}{\noindent\bf Proof. \rm}{\hfill $\mbox{\boldmath{$ \square$}}$}
\newenvironment{dem}{\noindent\bf Proof.\rm}{\hfill $\mbox{\boldmath{$ \square$}}$}
\newtheorem{coro}{\bf Corollary}
\newtheorem{theo}[coro]{\bf Theorem}
\newtheorem{prop}[coro]{\bf Proposition}
\newtheorem{defi}[coro]{\bf Definition}
\newtheorem{lem}[coro]{\bf Lemma}
\newtheorem{definition}[coro]{\bf Definition}
\def\S{\mathbf{S}}
\def\P{\mathcal{P}}
\def\F{\mathcal{F}}
\def\im{\!\rightarrow\!}
\title{ \large \textbf An algebraic study of the first order version of some implicational fragments of the three-valued \L ukasiewicz logic}
\author{Aldo Figallo-Orellano and Juan Sebasti\'an Slagter}
\date{}
\begin{document}
\maketitle

\begin{abstract}

MV-algebras are an algebraic semantics for \L ukasiewicz logic and MV-algebras generated by a finite chain are Heyting algebras where  the G\"odel implication can be written in terms of De Morgan and  Moisil's  modal operators. In our work, a fragment of trivalent \L ukasiewicz logic is studied. The propositional and first-order logic is presented. The maximal consistent theories are studied as  Monteiro's maximal deductive systems of the Lindenbaum-Tarski algebra, in both cases. Consequently, the adequacy theorem with respect to the suitable  algebraic structures is proven.
\end{abstract}

\section{Introduction and Preliminaries}

In 1923, David Hilbert proposed studying the implicative fragment of {\em classical propositional calculus}. This fragment is well-known as {\em positive  propositional calculus} and its study started in 1934 by D. Hilbert and P. Bernays. The following axiom schemas define this calculus

\begin{itemize}
  \item[(E1)] $\alpha \to (\beta \to \alpha)$,
  \item[(E2)] $(\alpha \to (\beta \to  \gamma))\to ((\alpha\to \beta)\to (\alpha \to \gamma))$.
\end{itemize}

and the inference rule {\em modus ponens} is

\begin{itemize}
  \item[(MP)] $\displaystyle \frac{\alpha, \alpha \to \beta}{\beta}$.
\end{itemize}

In 1950, L. Henkin introduced the {\em implicative models} as algebraic models of  the positive implicative calculus. Later, A. Monteiro  renamed them as {\em Hilbert algebras} and his Ph. D. student A. Diego (\cite{AD1}) made one of the most important contributions to these algebraic structures.

On the other hand, I. Thomas in \cite{IT} considered the $n$-valued positive implicative calculus, with signature $\{\to, 1\}$, as a calculus that  has a characteristic matrix $\langle A,\{1\}\rangle$ where $\{1\}$ is the set of designated elements and the algebra $A=(\mathbb{C}_n, \to, 1)$ is defined as follows

$$\mathbb{C}_n=\{0,\frac{1}{n}, \frac{2}{n}, \cdots, \frac{n-1}{n}, 1\}$$

and

$$x \to y= \begin{cases}1 & \text{if } x\leq y \\y & y<x \end{cases},$$

This author proved that for this calculus, we have to add the following axiom to the positive implicative calculus:

\begin{itemize}
\item[(E3)] $T_n(\alpha_0,\cdots ,\alpha_{n-1})=\beta_{n-2}\im(\beta_{n-3}\im(\cdots \im(\beta_0\im \alpha_0)\cdots))$, where
\item[] $\beta_{i}=(\alpha_i\im \alpha_{i+1})\im \alpha_0$ for all $i, 0\leq i\leq n-2$.
\end{itemize}

The algebraic counterpart of $n$-valued positive implicative calculus was studied by Luiz Monteiro in \cite{LM1} where the axiom (E3) is translated by the equation $T_n=1$ to Hilbert algebras. In particular, in the $n=3$ case, the variety is generated by an algebra that has this set $\mathbb{C}_3=\{0,\frac{1}{2}, 1\}$ as support and an implication $\to$ defined by the following table:

\begin{center}
\begin{tabular}{c | c  c  c}

$\to$ & $0$ & $\frac{1}{2}$ & $1$\\
\hline

$0$ & $1$ & $1$ &  $1$\\
$\frac{1}{2}$ & $0$ & $1$ &  $1$\\
$1$ & $0$ & $\frac{1}{2}$ & $1$
\end{tabular}\\
\vspace{0,3cm}
Table 1
\end{center}
 
It is clear that $3$-valued Hilbert algebras are Hilbert algebras that verify the following identity:

\begin{itemize}
\item[(IT3)] $((x\to y)\to z)\to (((z\to x)\to z)\to z) = 1$.
\end{itemize}

It is important to note that the implication defined in  table 1 characterizes the implicative fragment of $3$-valued G\"odel  logic.

On the other hand, infinite-valued \L ukasiewicz logic {\bf \L}, introduced for philosophical reasons by Jan \L ukasiewicz, is among the most important and widely studied of all non-classical logics. Later,  MV-algebras were introduced by C. Chang  in order to prove completeness with respect to the calculus {\bf \L}. These algebras are term equivalent to Wajsberg algebras. Besides, Komori introduced $CN$-algebra as algebraic models to {\bf \L} in terms of implication and negation. 

Recall that an algebra ${\cal A} = \langle A, \Rightarrow, \sim, 1\rangle$ is said to be a  Wajsberg algebra if it satisfies the following identities (see \cite{FRST,CDM}):

\begin{itemize}
  \item[(w1)] $1\Rightarrow x = x$, 
  \item[(w2)] $(x\Rightarrow y)\Rightarrow ((y\Rightarrow z)\Rightarrow (x\Rightarrow z)) =1$, 
  \item[(w3)] $ (x\Rightarrow y)\Rightarrow y = (y\Rightarrow x)\Rightarrow x$, 
  \item[(w4)] $(\sim y \Rightarrow \sim x)\Rightarrow (x\Rightarrow y) = 1$.
\end{itemize}

We can define  other operations in a Wajsberg algebra. Indeed, $1 = \sim 0$, $x \oplus y = \sim x\Rightarrow y  $,  $x\odot y = \sim(\sim x \oplus \sim y)$,  $x\vee y = \sim (\sim x \oplus y) \oplus y = (x\Rightarrow y)\Rightarrow y$, $x\wedge y = \sim (\sim x \vee \sim y) $, where $\wedge$ and $\vee$ are lattice-operations. If we consider the operations  $\oplus$ and $\odot$ as primitive operations, then we have $(A, \oplus, \odot, \sim, 0)$ is an MV-algebra in Chang's formulation. Conversely, any MV-algebra in Chang's formulation produces one Wajsberg algebra by the appropriate definitions of $\sim$ and $\Rightarrow$ (\cite{FRST}), where $x\Rightarrow y = \sim x \oplus y $. Besides, it is well-known that the category of MV-algebra is equivalent to the  category of $l$-groups with strong unit.

Let us remark that the variety of MV-algebras generated by an MV-chain of length $n< \omega$ is often denoted by  MV$_{n}$-algebras. This notion was introduced by Grigolia and can be axiomatized by adding two new axioms (w5) and (w6)  to the axioms of MV-algebras:

\begin{itemize}
  \item[(w5)] $x^{n-1} = x^n$,
  \item[(w6)] $n(x^j\oplus (\sim x \odot \sim x^{j-1}))=1$ for $1<j<n$ and $j$ does not divide $n$.
\end{itemize}

Besides, we denote by $C_{n}$ the MV$_{n}$-algebra whose universe is $\{0,\frac{1}{n}, \frac{2}{n}, \cdots, \frac{n-1}{n}, 1\} $ endowed with the operations $x\Rightarrow y:=min\{1,1-x+y\}$, $\sim x:=1-x$. Also, it is well known that if  $(A, \oplus ,\sim, 1)$ is an  MV$_{n}$-algebra then $\L_{n} (A)=\langle A, \wedge, \vee, \sim, \sigma_0, \ldots, \sigma_{n-1},0,1\rangle $ is an $n$-valued Lukasiewicz-Moisil algebra (see \cite{BFS}),  where the operators $\sigma_i:A\to A$ are lattice-homomorphisms,  for $1\leq i\leq n$, are defined in terms of the MV-operations and are called Moisil operators. On the other hand, it is well-known that the G\"odel implication can be written in terms of the Moisil and De Morgan's operations as follows
$$ x\Rightarrow y= x\vee \sim \sigma_{n-1} y \vee (\sigma_{n-1} y \wedge \sigma_{n-1} x \wedge \sim \sigma_n y) \vee \cdots  \vee (\sigma_1 y \wedge \sigma_1 x \wedge \sim \sigma_1 y)\vee ( \sigma_0 x \wedge  \sigma_0 y) (\ast).$$

This implication was discovered by Cignoli in his Ph. D. thesis, as we can see in the following papers \cite{RC0,RC,RC1}.
It is worth mentioning that Cignoli in \cite{RC1} studied the first-order $n$-valued  \L ukasiewicz logic. In that work, it was presented  the $n$-valued  \L ukasiewicz logic as extension of intuitionistic calculus, this fact was commented  in the abstract of \cite{RC}.  Indeed, Cignoli was based on the fact that MV$_{n}$-algebras can be defined in terms of symmetric Heyting algebras with Moisil operations adding a special set of operations (see \cite[Definition 2.1]{RC1}); that he called them  $n$-valued proper \L ukasiewicz algebras; that is to say, obviously, these  algebras are term equivalent to  MV$_{n}$-algebras.  The latter facts allowed him to present soundness and completeness Theorems for first-order  $n$-valued  \L ukasiewicz logic (\cite{RC1})  by means of the Rasiowa's technique for the standard models. Much more recently, Iorgulescu  studied the connection between  MV$_{n}$-algebras and $n$-valued Lukasiewicz-Moisil algebra in \cite{AI}.

On the other hand, in 1941, G. Moisil introduced $3$-valued \L ukasiewicz algebras (or $3$-valued \L ukasiewicz-Moisil algebras)  as algebraic models of  $3$-valued logic proposed by \L ukasiewicz (\cite{GrM}). It is well-known, and part of folklore, that the class of $3$-valued \L ukasiewicz algebras is term equivalent to the one of $3$-valued MV-algebras (see, for instance, \cite{BFS}). Recall that an algebra $(A, \wedge, \vee, \sim, \nabla, 0, 1)$  is a $3$-valued \L ukasiewicz algebras if the following conditions hold: (L0) $x\vee 1=1$, (L1) $x\wedge(x\vee y)=x$, (L2) $x\wedge(y\vee z)=(z\wedge x)\vee(y\wedge x)$, (L3) $\sim \sim x=x$, (L4) $\sim(x\wedge y)=\sim x\vee \sim y$, (L5) $\sim x \vee \nabla x=1$, (L6) $\sim x \wedge x=\sim x\wedge \nabla x$, and (L7) $\nabla(x\wedge y)=\nabla x\wedge \nabla y$.

 Besides, it is well-known  that the algebra  $(A, \wedge, \vee, \sim, 0, 1)$  is a De Morgan algebra if  (L0) to  (L4) hold (\cite[Definition 2.6]{BFS}). On the other hand,  the characteristic matrix of logic from trivalent \L ukasiewicz algebras has the operator  $\wedge$, $\vee$, $\sim$,   $\nabla$ (
possibility operator) and $\triangle$ (necessity operator) over the chain $\mathbb{C}_3=\{0,\frac{1}{2}, 1\}$, and they are defined by the next table:

\begin{center}
\begin{tabular}{c | c | c | c}

$x$ & $\sim x$ & $\nabla x$ & $\triangle x$\\
\hline

$0$ & $1$ & $0$ &  $0$\\
$\frac{1}{2}$ &  $\frac{1}{2}$ & $1$ & $0$\\
$1$ & $0$ & $1$ & $1$
\end{tabular}\\
\vspace{0,3cm}
Table 2
\end{center}

In addition, the implication  $\Rightarrow$ defined above can be obtained from the operator    $\wedge$, $\vee$, $\sim$,  $\nabla$ and $\triangle$ by the following formula:

$$x\Rightarrow y = \triangle \sim x \vee y \vee (\nabla \sim x \wedge \nabla y).$$

Besides, it is easy to check that  $\nabla x = ( x\to \triangle x) \to \triangle x$. From latter and the fact that the implication  can be written in terms of operations from $3$-valued \L ukasiewicz algebras,  the authors of \cite{FRS,FRS1} were motivated to study the interesting implicational fragments of the $3$-valued \L ukasiewicz logic. In general, for some technical aspects of  \L ukasiewicz-Moisil algebras,  the reader can consult \cite{BFS}.

The rest of the paper is organized as follows. In the next section, we introduce the class of modal $3$-valued Hilbert algebras with infimum, where the modal operator is the same considered by Moisil. Besides, we prove that variety of these algebras is semisimple and we determine the generating algebras. Later on, using our algebraic results, we present Hilbert Calculus has algebraic counterparts to these algebras introduced in this section.  In section 3, we introduce and study the class of modal $3$-valued Hilbert algebra with supremum and also, as application of our algebraic work, we present Hilbert calculus for the fragment with disjunction  soundness and completeness, in a strong version, with respect to this class of algebras. Finally, in Section 4,  we study the first order logic for the fragment with disjunction by means of an adaptation of the Rasiowa's technique (\cite{RA}) using our algebraic work for the propositional case.

For the sake of motivating these notes and roughly speaking, this work is developed using Henkin's notion of maximal consistent theory as Monteiro's maximal deductive system of Lindenbaum-Tarski algebra. Monteiro  named  it {\em Syst\`emes deductifs li\'es \`a ''$a$''} , where $a$ is an element of some given algebra such that the congruences are determined by deductive systems \cite[pag. 19]{AM}. We use  this notion, applying Monteiro's technique, in the Section \ref{weak} in order to prove this variety is semisimple and in the proof of the completeness. It is important to note that the relation of Henkin's maximal consistent theories and Monteiro's maximal deductive systems is only verified in some semisimple varities of algebras studied in Monteiro's school. In addition, for instance,  Nelson algebras, Heyting algebras, Hilbert algebras, residuated lattices, the implicational algebraic systems so-called {\em standard models} (considered in \cite{RA}) and others classes of algebras from non-semisimple varieties, this relation is not verified. This fact was one of our reasons for studying  the algebraic systems introduced in this note. In the Rasiowa's book, one can see the algebraic study of first-order of the logics of {\em standard models} and in order to present the algebraic models as models for these first-order logics. This work needs to prove the existence of the complete structures such as the Dedekind-Macneille completion for Boolean algebras or Heyting algebras in order to interpret the quantified forlumas. Using this method Cignoli needed to find the completation for $n$-valued \L ukasiewic algebras. In contrast, our technique simplifies the proof of the completeness theorem using the fact that the simple algebras are complete lattice. Moreover, we can apply the technique to the Cignoli's works, what is more,  it is possible to apply to several semisimple varieties of algebras studied in Monteiro's school. By the way, these observations will be part of the future works.


\section{\large Trivalent modal Hilbert algebras with infimum}

In this section, we  introduce trivalent modal Hilbert algebras with infimum, for short $iH^{\triangle}_3$-algebra. Using Monteiro's characterization of maximal congruences (see Definition \ref{Ligado}), we  prove that the variety of $iH^{\triangle}_3$-algebra is semisimple. Then, it will be presented a propositional calculi that has the class of  $iH_3$-algebra as algebraic counterparts.

 For the sake of brevity, in what follows, we only introduce those essential notions of {\em Hilbert algebras} that we need, thought not in full detail. Anyway, for more information about these algebras the reader can consult the bibliography. 
 
Now, recall that a Hilbert algebra is an algebra $(A,\to ,1)$ such that for all $x,y,z \in A$  verifies: 

(H1) $x \to (y \to x)=1$, 

(H2) $(x \to (y \to z)) \to ((x \to y) \to (x \to z))=1$, 

(H3) if $x \to y=1$, $y \to x=1$, then $x=y$. 

\

\noindent The following lemma is well-known

\begin{lem} \label{lema1cap1} Let $A$ be a Hilbert algebra. The following properties are satisfied for every $x,y,z\in A$:
\begin{enumerate}
  \item [\textup{(H4)}] If $x=1$ and $x\to y=1$, then $y=1$,
  \item [\textup{(H5)}] the  relation  $\leq$ defined by $x\leq y$ iff  $x\to y=1$, it is an order on  $A$ and  $1$ is the last element,
  \item [\textup{(H6)}] $x\to x=1$, \textup{(H7)} $x\leq y\to x$, \textup{(H8)} $x\to(y\to z)\leq(x\to y)\to(x\to z)$, \textup{(H9)} $x\to 1=1$,
  \item [\textup{(H10)}]  $x\leq y$ implies $z\to x\leq z\to y$, \textup{(H11)} $x\leq y\to z$ implies $y\leq x\to z$, 
  \item[\textup{(H12)}] $x\to((x\to y)\to y)=1$, \textup{(H13)} $1\to x=x$, 
  \item[\textup{(H14)}] $x\leq y$ implies $y\to z\leq x\to z$, \textup{(H15)} $x\to(y\to z)=y\to(x\to z)$,
  \item [\textup{(H16)}] $x\to(x\to y)=x\to y$, \textup{(H17)} $(x\to y)\to((y\to x)\to x)=(y\to x)\to((x\to y)\to y)$, 
  \item [\textup{(H18)}] $x\to(y\to z)=(x\to y)\to(x\to z)$, \textup{(H19)} $((x\to y)\to y)\to y=x\to y$.
\end{enumerate}
\end{lem}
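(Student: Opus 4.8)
The plan is to derive the properties (H4)--(H19) essentially in the order in which they are listed, starting only from the defining identities (H1)--(H3); all of this is classical (it goes back to Diego~\cite{AD1}; see also Rasiowa~\cite{RA}), so the real work is organising a chain of instantiations of (H1) and (H2), closed off by the antisymmetry (H3). I would split the argument into three stages.

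\emph{Stage 1 (the bootstrap).} First establish the two identities $x\to x=1$ (which is (H6)) and $1\to x=x$ (which is (H13)). This is the only subtle point, since detachment --- which is precisely (H4) --- is not yet available. One argues by a well-chosen instantiation of (H2): taking $z:=x$ in (H2) and using the instance $x\to(y\to x)=1$ of (H1) gives $1\to((x\to y)\to(x\to x))=1$, and combining this with further instances of (H1) (in particular $x\to(1\to x)=1$) and the antisymmetry (H3) yields $1\to x=x$, and then $x\to x=1$. This is Diego's argument, which I would reproduce.

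\emph{Stage 2 (the order).} Granting Stage 1, (H4) is immediate: if $x=1$ and $x\to y=1$ then $y=1\to y=x\to y=1$. Next, $x\to 1=1$, because $1\to(x\to 1)=1$ by (H1) and $1\to(x\to 1)=x\to 1$ by (H13); this is (H9). The relation $\le$ given by $x\le y$ when $x\to y=1$ is then reflexive by (H6), antisymmetric by (H3), and transitive --- instantiate (H2) and discharge the two antecedents using (H9) and (H4) --- so (H5) holds; and (H7), (H8) are merely (H1), (H2) read through $\le$. Monotonicity in the right argument, (H10), follows by the same pattern from (H2), (H9) and (H4); and after deriving the syllogism identity $(x\to y)\to((y\to z)\to(x\to z))=1$ from (H1), (H2) and (H4), antitonicity in the left argument, (H14), follows likewise. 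The exchange law (H15) then comes out of the chain $x\to(y\to z)\le(x\to y)\to(x\to z)\le y\to(x\to z)$ --- using (H8), the instance $y\le x\to y$ of (H1), and (H14) --- together with its mirror image under $x\leftrightarrow y$ and (H3); and (H11) is then immediate from (H15).

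\emph{Stage 3 (the remaining identities, and the obstacle).} These are short consequences of Stages 1--2. For (H16): by (H8), $x\to(x\to y)\le(x\to x)\to(x\to y)=1\to(x\to y)=x\to y$ (using (H6) and (H13)), while $x\to y\le x\to(x\to y)$ is an instance of (H1). For (H12): applying (H15) to $x\to((x\to y)\to y)$ rewrites it as $(x\to y)\to(x\to y)=1$. For (H18): one inclusion is (H8), and the other is $(x\to y)\to(x\to z)\le y\to(x\to z)=x\to(y\to z)$, by (H14) and (H15). For (H19): $x\to y\le((x\to y)\to y)\to y$ is an instance of (H12), and the reverse inclusion comes from applying (H14) to the instance $x\le(x\to y)\to y$ of (H12). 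For (H17): rewrite its left-hand side by (H15) as $(y\to x)\to((x\to y)\to x)$; then $(x\to y)\to x\le(x\to y)\to y$ --- because, by (H18), the implication $((x\to y)\to x)\to((x\to y)\to y)$ equals $(x\to y)\to(x\to y)=1$ --- so (H10) makes the left-hand side at most the right-hand side, and symmetry in $x$ and $y$ gives the equality. The main obstacle is Stage 1: extracting $x\to x=1$ and $1\to x=x$ from the three axioms without circularity needs the nonobvious instantiation indicated above; everything afterwards is a mechanical cascade of the same few moves --- instantiate (H1) or (H2), discharge antecedents with what is already proved, and invoke (H3).
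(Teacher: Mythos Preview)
Your proposal is correct: the three-stage derivation you outline is the standard route to (H4)--(H19) from the axioms (H1)--(H3), and each step checks out (the only cosmetic slip is that the ``syllogism identity'' you quote for (H14) should read $(y\to z)\to((x\to y)\to(x\to z))=1$, which is the form your argument actually uses). The paper itself does not give a proof at all but simply refers the reader to~\cite{FRST}; your write-up therefore supplies strictly more detail than the paper, while following exactly the classical line of argument (Diego~\cite{AD1}) that the cited references use.
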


The proof of last lemma can be found in \cite{FRST}. Now, recall that (\cite{FRS})

\begin{defi}\label{modal1} An algebra $(A,\to , \triangle, 1) $   is said to be a $3$-valued modal  Hilbert algebra if its reduct  $(A,\to,  1)$ is a $3$-valued Hilbert algebra  and $\triangle$ verifies the following identities: 

{\rm (M1)} $\triangle x \to x=1$,  

{\rm (M2)} $(( y \to \triangle y)\to (x \to \triangle\triangle x))\to \triangle (x\to y)= \triangle x \to \triangle\triangle y$, and 

{\rm (M3)} $(\triangle x\to \triangle y)\to \triangle x = \triangle x$. 

\noindent Besides, we define a new conective by $\nabla x = ( x\to \triangle x) \to \triangle x$.

\end{defi}

On the other hand, in \cite{AVF}, the authors introduced and studied the class of Hilbert algebras such that each pair of elements has infimum. Then,

\begin{defi} \label{def2.2}
An algebra $\langle A,\to,\wedge, 1\rangle $  is said to be an $iH_3$-algebra if the following conditions hold:
\begin{itemize}
  \item [\rm (1)] the reduct $\langle A,\to,1 \rangle$ is a Hilbert algebra such that the axiom (IT3)  is satisfied.
  
  \item [\rm (2)] the following identities hold: $(iH_1)$ $x\wedge(y\wedge z)=(x\wedge y)\wedge z$, $(iH_2)$ $x\wedge x=x$, $(iH_3)$ $x\wedge(x\to y)=x\wedge y$, and $(iH_4)$ $(x\to(y\wedge z))\to((x\to z)\wedge(x\to y))=1$.
 
 \end{itemize}
\end{defi}

Let us observe that all $iH_3$-algebra $A$ and every $x,y\in A$, we can define the supremum of $\{x,y\}$ in the following way:

$$x\vee y \overset{def}{=} ((x\to y)\to y) \wedge ((y\to x)\to x).$$

Indeed, let $a,b\in A$ and put $c=((a\to b)\to b)\wedge((b\to a)\to a)$. Since  $x\leq(x\to y)\to y$ and $x\leq(y\to x)\to x$ hold and  there exists the infimum  $((x\to y)\to y)\wedge((y\to x)\to x)$, then $c$ is upper bound of the set  $\{a,b\}$. Now, let us suppose  that $d$ is another upper bound of  $\{a,b\}$ such that  $c\nleq d$. Thus, there exists an irreducible deductive system  $P$ such that $c\in P$ and $d\notin P$ \cite[Corolario 1]{aD}. Besides, since  $a,b\leq d$ then  $a,b\notin P$. On the other hand, as  $A$ is a trivalent Hilbert algebra and according  to \cite[Th\'eor\`eme 4.1]{AM3}, we have $ a\to b\in P \,\, {\rm or} \,\, b\to a\in P.$
Now, if we suppose that $a\to b\in P$ and since $c\leq(b\to a)\to a$, then we can infer that  $a\in P$, which is a contradiction. If we consider the case $b\to a\in P$, we obtain again a contradiction. Thus, $c$ is the supremum of  $\{a,b\}$. Therefore,  all $iH_3$-algebra is a {\em relatively pseudocomplemented lattice} (see \cite{RA}), being as $x\wedge z\leq y$ iff $x\leq z\to y$. From the latter, we have  that each $iH_3$-algebra is a distributive lattice.

\begin{defi} \label{def2.2}
An algebra $(A,\to,\wedge, \triangle,1)$  is called a trivalent modal Hilbert algebra with infimum (for short, $iH^{\triangle}_3$-algebra) if the reduct $(A,\to,\wedge, 1)$ is an $iH_3$-algebra and the reduct $(A,\to, \triangle,1)$ is a $\triangle H_3$-algebra. 
\end{defi}

We note with $i\mathbb{H}^{\triangle}_3$ the variety  of  $iH^{\triangle}_3$-algebras.

\begin{lem}\label{lem2.1}
Let $A$ be a $iH^{\triangle}_3$-algebra. The following properties are satisfied for every $x,y,z\in A$:
\begin{itemize}
\item[1.] $x\leq y$ iff $x\to y=1$ iff $x\wedge y=x$, 2. $x\to (y \to z)= (x\wedge y)\to z$, 3. $x\to (x\wedge y)=x\to y$,
\item[4.] $(x\wedge y)\to(x\to y)=1$, 5. $(x\to y)\to( (z\wedge x)\to	 (z\wedge y))=1$, 6. $(x\wedge y)\to x=1$,
\item[7.] $(x\wedge y)\to y=1$, 8. $1\wedge x=x$, 9. $x\to(y\to(x\wedge y))=1$, 10. $\triangle 1=1$,

\item[11.] $\triangle (x\to y)\to (\triangle x \to \triangle y)=1$, 12. $\nabla (x\wedge y)= \nabla x \wedge \nabla y$, 
\item[13.] $\triangle (x\wedge y) = \triangle x \wedge \triangle y$, 14. $(\nabla x \to x) \wedge \nabla x = x$, 15. $x\to(x\wedge y)=x\to y$.
\end{itemize} 
\end{lem}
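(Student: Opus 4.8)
The plan is to prove the fifteen properties of Lemma~\ref{lem2.1} essentially as a book-keeping exercise, relying on the Hilbert algebra identities (H1)--(H19) already collected in Lemma~\ref{lema1cap1}, the defining identities $(iH_1)$--$(iH_4)$, and the modal axioms (M1)--(M3). Several of the items are close to immediate, so I would group them by the tool they need rather than doing them in listed order.

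\textbf{Lattice-theoretic items.} First I would dispatch item~1: the equivalence $x\le y$ iff $x\to y=1$ is just (H5), and the equivalence with $x\wedge y=x$ follows because $iH_3$-algebras are distributive lattices (established in the discussion after Definition~\ref{def2.2}) with the order induced by $\to$; concretely, $x\wedge y=x$ forces $x\le y$, and conversely $x\le y$ together with $x\wedge x=x$ (that is $(iH_2)$) and the fact that $\wedge$ is the infimum gives $x\wedge y=x$. Item~8, $1\wedge x=x$, is immediate since $x\le 1$ by (H5). Items~6 and~7, $(x\wedge y)\to x=1$ and $(x\wedge y)\to y=1$, are just the statement that $x\wedge y$ is a lower bound, rephrased via item~1. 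Item~9 follows from $x\wedge y\le x\wedge y$ and the adjunction $x\wedge z\le y$ iff $x\le z\to y$ (the residuation noted in the text). Item~5 is monotonicity of $z\wedge(-)$ composed with item~1; item~4, $(x\wedge y)\to(x\to y)=1$, follows from $(iH_3)$ $x\wedge(x\to y)=x\wedge y$ together with item~1. Item~2, $x\to(y\to z)=(x\wedge y)\to z$, is the residuation identity written as an equality of elements and can be got from $(iH_3)$ and (H15)/(H18); item~3 and item~15 are literally the same statement, $x\to(x\wedge y)=x\to y$, which follows by applying $(-)\to$ to $(iH_3)$, or directly from $(iH_4)$ and item~1.

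\textbf{Modal items.} Item~10, $\triangle 1=1$: from (M1) we get $\triangle 1\to 1=1$ trivially by (H9), and to get $\triangle 1=1$ I would use (M3) with $x=y=1$, or more simply observe that $1\le\triangle 1$ follows by instantiating (M2) appropriately; in any case $\triangle 1=1$ is standard for this signature. Item~11, $\triangle(x\to y)\to(\triangle x\to\triangle y)=1$, is the ``$\triangle$ is a modal operator'' property and should come out of (M2) by specialization --- the cleanest route is to first derive $\triangle\triangle x=\triangle x$ (from (M1) applied to $\triangle x$ together with (M2)/(M3)) and then read (M2) with this simplification. Item~13, $\triangle(x\wedge y)=\triangle x\wedge\triangle y$: the inequality $\triangle(x\wedge y)\le\triangle x\wedge\triangle y$ follows from item~11 and monotonicity (apply $\triangle$ to $x\wedge y\le x$ and to $x\wedge y\le y$, using items~6 and~7), while the reverse inequality $\triangle x\wedge\triangle y\le\triangle(x\wedge y)$ requires item~11 again together with item~9, i.e. $\triangle x\le\triangle(y\to(x\wedge y))$ and then distributing $\triangle$ across $\to$. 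Item~12, $\nabla(x\wedge y)=\nabla x\wedge\nabla y$, then follows by unfolding $\nabla z=(z\to\triangle z)\to\triangle z$ and using item~13 plus the Hilbert identities; this is the step I would expect to be the most computational, since it mixes $\to$, $\wedge$ and $\triangle$. Item~14, $(\nabla x\to x)\wedge\nabla x=x$, is the algebraic form of the trivalent Moisil identity (L6) rewritten via $(iH_3)$: indeed $(\nabla x\to x)\wedge\nabla x=\nabla x\wedge x$ by $(iH_3)$, and then one shows $\nabla x\wedge x=x$, i.e. $x\le\nabla x$, which follows from (M1) for $\triangle x\le x$ combined with the definition of $\nabla$ and (H12).

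\textbf{Main obstacle.} The genuinely delicate points are items~11, 12 and~13, because they are precisely where the somewhat opaque axiom (M2) has to be used in anger; everything else is either pure Hilbert-algebra manipulation or pure lattice manipulation. My strategy for (M2) is to extract from it, once and for all, the three consequences $\triangle\triangle x=\triangle x$, $\triangle(x\to y)\le\triangle x\to\triangle y$, and $\triangle x\le x$ (this last from (M1)), and then never touch (M2) again --- all of items~10--14 should follow from these three facts together with $(iH_1)$--$(iH_4)$ and Lemma~\ref{lema1cap1}. If (M2) does not yield $\triangle\triangle x=\triangle x$ cleanly, the fallback is to combine (M2) with (M3), since (M3) is a Moisil-type axiom that in the three-element model pins $\triangle$ down to the table in Table~2, and semisimplicity/subdirect representation (to be proven later in the section) would let one verify the remaining identities by a finite check on $\mathbb{C}_3$ --- but I would prefer the direct equational derivation here to keep the section self-contained.
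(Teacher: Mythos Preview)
Your approach is a genuinely different one from the paper's. The paper disposes of items 1--9 in one line by invoking the Representation Theorem~2.6 of \cite{LM1} for $n$-valued Hilbert algebras (together with the distributive-lattice structure already established), and then asserts that items 10--15 ``follow from the very definitions.'' In other words, the paper's proof is essentially semantic: it reduces the lattice-theoretic identities to a check on the subdirectly irreducible chains, using an external representation result for the $\triangle$-free reduct. Your plan, by contrast, is a direct equational derivation from (H1)--(H19), $(iH_1)$--$(iH_4)$ and (M1)--(M3). This buys self-containment and makes transparent exactly which axioms carry which identities; the paper's route buys brevity but imports a nontrivial external theorem.

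One point you should be careful about: your stated fallback for items 11--13 --- namely, checking them on $\mathbb{C}_3$ via the semisimplicity/subdirect representation ``to be proven later in the section'' --- would be circular in \emph{this} paper. Lemma~\ref{DmdetCong} explicitly cites items~10 and~11 of the present lemma, and Theorem~\ref{TeoM0S} uses items~9 and~12; both results are needed for the semisimplicity argument. So the fallback is not available to you, and your direct equational route through (M2) really has to succeed on its own. (The paper avoids this circularity because the representation theorem it cites is for the $\{\to,1\}$-reduct and is proved elsewhere, independent of the modal layer.) Your outline for extracting $\triangle\triangle x=\triangle x$ and $\triangle(x\to y)\le\triangle x\to\triangle y$ from (M2)/(M3) is the right idea, but be aware that these derivations are the substance of \cite{FRS} and are not short; the paper's phrase ``from the very definitions'' understates the work involved.
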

\begin{proof}
Taking into account  the fact  all trivalent Hilbert algebras are  distributive lattices and the Representation Theorem 2.6 from \cite{LM1}, we can prove  the condition (1.) to (9.). The rest of the proof follows the very definitions.
\end{proof}

\

\begin{defi}\label{mds}
For a given $iH^{\triangle}_3$-algebra  $A$ and  $D\subseteq A$. Then,  $D$ is said to be a deductive system if (D1) $1\in D$, and (D2) if $x, x\to y\in D$ imply $y\in D$. Besides, we say that $D$ is a modal if: (D3) $x\in D$ implies $\triangle x\in D$. 
\end{defi}
 
Given a $iH^{\triangle}_3$-algebra  $A$ and $\{H_i\}_{i\in I}$ a family of  modal deductive systems of $A$, then it is easy to see that $\bigcap\limits_{i\in I} H_i$ is a modal deductive system.  Thus, we can consider the notion of  modal deductive system generated by $H$, and we denote $[H)_m$, as an intersection of all modal deductive system $D$ such that $D\subseteq H$.  It is well-known that $[H)=\{x\in A: {\rm exist }\,\, h_1,\cdots, h_k\in H \,\, :\,\,h_1\to(h_2\to \cdots \to(h_k\to x)\cdots)=1\}$  where $k$ is a finite integer. Now, we will introduce the following notation:

\begin{itemize}
\item[] $(x_1, \ldots, x_{n-1};x_n)=\left\{ \begin{tabular}{ll}
								$x_n$  & if \, $n=1$ \\[3mm]
						$x_1 \im (x_2,\ldots,x_{n-1};x_n)$ & if \, $n>1$
					\end{tabular}\right.$.
\end{itemize}

Hence, we can write:
 
\begin{center}
$[H)=\{x\in A:$ there exist $h_1,\ldots,h_k \in D_1:$  $(h_1, \ldots, h_k;x)=1\}$.  
\end{center}

Then, we have the following result

\begin{prop} \label{propSDGM}
Let $A$ be a $iH^{\triangle}_3$-algebra-algebra, suppose that $H\subseteq A$ and $a\in A$. Then the following properties hold:
\begin{itemize}
  \item[\rm (i)] $[H)_m= \{x\in A: {\rm there \, exist }\,\, h_1,\cdots, h_k\in H \,\, :\,\, (\triangle h_1, \ldots, \triangle h_k;x)=1\}$,
  \item[\rm (ii)] $[a)_m=[\triangle  a)$, where  $[b)$ is the set $[\{b\})$,
  \item[\rm (iii)] $[H\cup \{a\} )_m = \{ x\in A: \triangle a\to x \in [H)_m\}$.
 
\end{itemize}
\end{prop}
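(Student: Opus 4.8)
The plan is to establish the three identities in turn, each building on the previous one, using the characterization of the (non-modal) generated deductive system $[H)$ recalled just before the statement together with the modal axioms (M1)--(M3) and the basic properties collected in Lemma \ref{lem2.1}.

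\textbf{Part (i).} First I would show that the set $S=\{x\in A:\ \text{there exist }h_1,\dots,h_k\in H\ \text{with}\ (\triangle h_1,\dots,\triangle h_k;x)=1\}$ is a modal deductive system containing $H$. That $1\in S$ and that $S$ is closed under modus ponens follows from the same computation that proves $[H)$ is a deductive system (using (H15) to permute antecedents and (H2)/(H18)). For $H\subseteq S$ note that $\triangle h\to h=1$ by (M1), so $(\triangle h;h)=1$. Closure under $\triangle$ is the delicate point: given $x\in S$ witnessed by $(\triangle h_1,\dots,\triangle h_k;x)=1$, I must produce a witness for $\triangle x$. Here I would use (M2) in the form $\triangle(u\to v)\ge(\text{something involving }\triangle u,\triangle v)$, more precisely iterate Lemma \ref{lem2.1}(11), $\triangle(u\to v)\to(\triangle u\to\triangle v)=1$, together with the fact that $\triangle\triangle h_i=\triangle h_i$ (which itself should follow from (M1), (M3) and Lemma \ref{lem2.1}), to "push $\triangle$ inside" the nested implication and conclude $(\triangle h_1,\dots,\triangle h_k;\triangle x)=1$. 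This gives $[H)_m\subseteq S$. The reverse inclusion $S\subseteq[H)_m$ is immediate since any modal deductive system containing $H$ contains all $\triangle h_i$ and is closed under modus ponens, hence contains every such $x$.

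\textbf{Part (ii).} This is the special case $H=\{a\}$ of (i): by (i), $[a)_m=\{x:\ (\triangle a,\dots,\triangle a;x)=1\ \text{for some finite length}\}$, and using (H16) in the form $\triangle a\to(\triangle a\to x)=\triangle a\to x$ (Lemma \ref{lema1cap1}(H16)), every such nested implication collapses to $\triangle a\to x=1$, i.e. $x\in[\triangle a)$. Conversely $[\triangle a)\subseteq[a)_m$ is the trivial direction. So $[a)_m=[\triangle a)$.

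\textbf{Part (iii).} I would show both inclusions. For $\supseteq$: if $\triangle a\to x\in[H)_m$, then since $[H\cup\{a\})_m$ is a modal deductive system containing $[H)_m$ and containing $a$, it contains $\triangle a$ (closure under $\triangle$), hence contains $\triangle a\to x$ and therefore $x$ by modus ponens. For $\subseteq$: let $T=\{x:\ \triangle a\to x\in[H)_m\}$; I claim $T$ is a modal deductive system containing $H\cup\{a\}$. It contains $H$ because $h\le\triangle a\to h$ by (H7) applied after noting $h\in[H)_m$ forces $\triangle a\to h\in[H)_m$ (since $[H)_m$ is an up-set for $\le$, as deductive systems are), and it contains $a$ because $\triangle a\to a=1\in[H)_m$ by (M1). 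Closure of $T$ under modus ponens uses (H18)/(H2) to distribute $\triangle a\to(-)$ over implication inside $[H)_m$; closure under $\triangle$ uses $\triangle(\triangle a\to x)\to(\triangle\triangle a\to\triangle x)=1$ together with $\triangle\triangle a=\triangle a$, plus the fact that $[H)_m$ is closed under $\triangle$, to get $\triangle a\to\triangle x\in[H)_m$, i.e. $\triangle x\in T$. Hence $[H\cup\{a\})_m\subseteq T$, which is the desired inclusion.

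\textbf{Main obstacle.} The crux throughout is the interaction of $\triangle$ with nested implications: each part ultimately reduces to showing that $\triangle$ can be "commuted past" a chain of implications whose antecedents already lie in the image of $\triangle$, and that $\triangle\triangle = \triangle$ on such elements. I expect the cleanest route is to first isolate these two facts as small auxiliary observations (the idempotency $\triangle\triangle x=\triangle x$ from (M1)+(M3), and the monotone distribution $\triangle(x\to y)\le\triangle x\to\triangle y$ from Lemma \ref{lem2.1}(11)), prove them once, and then the three items of the proposition follow by the routine deductive-system bookkeeping sketched above.
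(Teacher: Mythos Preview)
Your argument is correct and complete in all essential respects; the paper itself dismisses this proposition with a one-line ``It is a routine'', so there is no detailed proof to compare against, and what you have written is precisely the standard bookkeeping the authors are tacitly invoking. One small correction: the idempotency $\triangle\triangle x=\triangle x$ does not follow from (M1)+(M3) alone as you suggest, but rather from (M2) specialized to $y=x$ together with $\triangle 1=1$ (Lemma~\ref{lem2.1}(10)), which yields $\triangle x\to\triangle\triangle x=1$; the other inequality is indeed (M1).
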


\begin{proof}
It is a routine.
\end{proof}

Besides,  we denote by $D_m(A)$ the set of modal deductive systems of $iH^{\triangle}_3$-algebra $A$, and by $Con_{iH^{\triangle}_3} (A)$  the set of congruence relations of a given  $iH^{\triangle}_3$-algebra $A$.

\begin{lem} \label{DmdetCong}
For all $A \in i\mathbb{H}_3^{\triangle}$, we have that the poset $D_m(A)$ is lattice-isomorphic to  $Con_{iH^{\triangle}_3} (A)$. 
\end{lem}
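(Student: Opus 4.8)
The plan is to exhibit the usual pair of mutually inverse, order-preserving maps between $D_m(A)$ and $Con_{iH^{\triangle}_3}(A)$ and check that they are well defined in the modal setting. To a congruence $\theta$ I would assign its $1$-class $D_\theta=\{x\in A : (x,1)\in\theta\}$; conversely, to a modal deductive system $D$ I would assign the relation $\theta_D$ defined by $(x,y)\in\theta_D$ iff $x\to y\in D$ and $y\to x\in D$. The bulk of the argument is the verification that $\theta_D$ is indeed a congruence of the full similarity type $(\to,\wedge,\triangle,1)$, that $D_\theta$ is a modal deductive system, and that the two assignments are inverse to one another and preserve inclusions (hence are lattice isomorphisms, since both posets are closed under arbitrary intersections, as already noted in the excerpt for $D_m(A)$).

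First I would show $\theta_D$ is an equivalence relation: reflexivity and symmetry are immediate from $x\to x=1\in D$ (H6, D1) and the symmetry of the defining condition, and transitivity follows from (w2)-style transitivity of $\to$ in Hilbert algebras, i.e. from $(x\to y)\to((y\to z)\to(x\to z))=1$ together with (D2). Next, compatibility with $\to$: for this one uses the standard Hilbert-algebra inequalities — if $x\to y, y\to x, u\to v, v\to u\in D$ then one derives $(x\to u)\to(y\to v)\in D$ and its converse by repeated use of (H8), (H14), (H15) and closure under modus ponens; this is exactly the computation that makes $D_\theta$ a congruence for plain Hilbert algebras, so I would cite the corresponding fact from \cite{FRST,LM1}. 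Compatibility with $\wedge$ uses item 5 of Lemma~\ref{lem2.1}, namely $(x\to y)\to((z\wedge x)\to(z\wedge y))=1$, to pass from $x\,\theta_D\,y$ to $(z\wedge x)\,\theta_D\,(z\wedge y)$, and then associativity/commutativity of $\wedge$ for the other coordinate. The genuinely new point — and the step I expect to be the main obstacle — is compatibility with $\triangle$: I must show that if $x\to y\in D$ and $y\to x\in D$ then $\triangle x\to\triangle y\in D$ and $\triangle y\to\triangle x\in D$. Here the hypothesis that $D$ is \emph{modal} is essential: from $x\to y\in D$ and (D3) we get $\triangle(x\to y)\in D$, and then item 11 of Lemma~\ref{lem2.1}, $\triangle(x\to y)\to(\triangle x\to\triangle y)=1$, together with (D2) yields $\triangle x\to\triangle y\in D$; symmetrically for the converse. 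This is where the monotonicity law (M2)/(M3)-derived identity (11) does the work, and it is the reason the isomorphism holds for modal deductive systems rather than arbitrary ones.

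It remains to close the loop. That $D_\theta$ is a deductive system is routine: $1\in D_\theta$ since $(1,1)\in\theta$, and if $x\in D_\theta$ and $x\to y\in D_\theta$ then $(x,1),(x\to y,1)\in\theta$, whence $(y,1)=(1\to y,1)\in\theta$ by (H13) and compatibility of $\to$; modality of $D_\theta$ follows from compatibility of $\triangle$: $x\in D_\theta$ gives $(x,1)\in\theta$, hence $(\triangle x,\triangle 1)=(\triangle x,1)\in\theta$ by item 10. For the round trips: $D\mapsto\theta_D\mapsto D_{\theta_D}$ returns $\{x : x\to 1\in D\text{ and }1\to x\in D\}=\{x : 1\in D\text{ and }x\in D\}=D$ using (H9) and (H13); and $\theta\mapsto D_\theta\mapsto\theta_{D_\theta}$ returns $\{(x,y) : x\to y\in D_\theta,\ y\to x\in D_\theta\}$, which coincides with $\theta$ because $(x,y)\in\theta$ iff $(x\to y,1)\in\theta$ and $(y\to x,1)\in\theta$ — the forward direction by compatibility, the backward direction by antisymmetry (H3) applied in the quotient $A/\theta$. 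Finally, both maps are clearly inclusion-preserving, and an order isomorphism between posets that are complete lattices (closed under intersection) is automatically a lattice isomorphism, which gives the claim.
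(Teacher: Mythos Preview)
Your proposal is correct and follows essentially the same route as the paper's own proof: both use the standard bijection $D\mapsto R(D)=\{(x,y):x\to y,\,y\to x\in D\}$ and $\theta\mapsto |1|_\theta$ known for Hilbert algebras, and both isolate Lemma~\ref{lem2.1}\,(10.) and (11.) as the ingredients needed to push the correspondence through the modal operator~$\triangle$ (item (10.) to show $|1|_\theta$ is modal, item (11.) together with (D3) to show $R(D)$ is a $\triangle$-congruence). Your write-up is simply more explicit---in particular you spell out compatibility with $\wedge$ via item~5 of Lemma~\ref{lem2.1} and the two round trips, points the paper leaves implicit in its appeal to the ``well-known'' Hilbert-algebra case.
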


\begin{proof}
It is well-known that the set of congruences of Hilbert algebra $A$ is lattice-isomorphic to the set of all deductive systems. This bijection is given for each deductive system $D$, we have the relation $R(D)=\{(x,y): x\to y, y\to x\in D\}$ is a congruence of $A$ such that the class of $1$ verifies $|1|_{R(D)}=D$. Besides, for each congruence $\theta$ of $A$ the class of $|1|_\theta$ is a deductive system and $R(|1|_\theta) =\theta$. From the latter and Lemma \ref{lem2.1} (10.) and (11.), we can infer that  every congruence $\theta$ for a given $A$ respect $\triangle$ and $|1|_\theta$ is a modal deductive system.
\end{proof}

\subsection{\large Weak deductive systems}\label{weak}

For each $iH^{\triangle}_3$-algebra $A$, we can define a new binary operation $\rightarrowtail$ named weak implication such that: $x\rightarrowtail y=\triangle x\to y$.

\begin{lem} Let $A\in i\mathbb{H}_3^{\triangle}$, for any $x,y,z\in A$ the following properties hold:
 
 \begin{itemize}
\item [\rm (wi1)] $1\rightarrowtail x=x$,
\item [\rm (wi2)] $x\rightarrowtail x=1$,
\item [\rm (wi3)] $x\rightarrowtail\triangle x=1$,
  
\item [\rm (wi4)] $x\rightarrowtail(y\rightarrowtail z)=(x\rightarrowtail y)\rightarrowtail(x\rightarrowtail z)$,
 \item [\rm (wi5)] $x\rightarrowtail(y\rightarrowtail x)=1$,
 \item [\rm (wi6)] $((x\rightarrowtail y)\rightarrowtail x)\rightarrowtail x=1$.
\end{itemize}
\end{lem}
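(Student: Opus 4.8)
The plan is to reduce the whole lemma to two auxiliary facts about the modal operator and then dispatch each clause by a short computation inside the Hilbert-algebra signature, using the identities of Lemma \ref{lema1cap1} and Lemma \ref{lem2.1} together with the modal axioms (M1)--(M3). Throughout I keep in mind that $x\rightarrowtail y=\triangle x\to y$, so (wi1)--(wi6) are just assertions about $\triangle$ composed with $\to$.

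The two auxiliary facts are: (a) $\triangle$ is idempotent, $\triangle\triangle x=\triangle x$; and (b) the key identity $\triangle(\triangle x\to y)=\triangle x\to\triangle y$. For (a): the inequality $\triangle\triangle x\le\triangle x$ is (M1) with $x$ replaced by $\triangle x$; for the reverse, instantiate (M2) at $y:=x$, note $x\to x=1$ by (H6) and hence $\triangle(x\to x)=\triangle 1=1$ by Lemma \ref{lem2.1}(10), so by (H9) the whole left-hand side of (M2) collapses to $1$, leaving $\triangle x\to\triangle\triangle x=1$; then (H3) gives equality. For (b): instantiate (M2) at $x:=\triangle x$; using $\triangle\triangle\triangle x=\triangle x$ from (a) one gets $\triangle x\to\triangle\triangle\triangle x=\triangle x\to\triangle x=1$ by (H6), so the antecedent of the outermost implication on the left reduces to $1$ via (H9), and after (H13) the left-hand side becomes $\triangle(\triangle x\to y)$ while the right-hand side $\triangle\triangle x\to\triangle\triangle y$ becomes $\triangle x\to\triangle y$ by (a). (Both (a) and (b) are standard for $\triangle H_3$-algebras and could instead simply be quoted.)

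With (a) and (b) available the clauses fall out quickly. Clause (wi1) is $\triangle 1\to x=1\to x=x$ by Lemma \ref{lem2.1}(10) and (H13); (wi2) is exactly (M1); (wi3) is $\triangle x\to\triangle x=1$ by (H6). For (wi4), apply (H18) to rewrite $\triangle x\to(\triangle y\to z)$ as $(\triangle x\to\triangle y)\to(\triangle x\to z)$, and then replace $\triangle x\to\triangle y$ by $\triangle(\triangle x\to y)$ using (b); this is precisely $(x\rightarrowtail y)\rightarrowtail(x\rightarrowtail z)$. For (wi5), use (H15) to swap the antecedents of $\triangle x\to(\triangle y\to x)$, obtaining $\triangle y\to(\triangle x\to x)=\triangle y\to 1=1$ by (M1) and (H9).

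Finally, (wi6) is the only clause that genuinely invokes the Peirce-type axiom (M3). Put $v=\triangle x\to\triangle y$; by (b) (with the roles of the second argument played by $\triangle y$) and (a) one gets $\triangle v=\triangle(\triangle x\to\triangle y)=\triangle x\to\triangle\triangle y=v$, so $v$ is ``closed'' under $\triangle$, and then by (b) again $\triangle(v\to x)=\triangle(\triangle v\to x)=\triangle v\to\triangle x=(\triangle x\to\triangle y)\to\triangle x=\triangle x$ by (M3). Since $(x\rightarrowtail y)\rightarrowtail x=\triangle(\triangle x\to y)\to x=(\triangle x\to\triangle y)\to x$ by (b), we conclude $\big((x\rightarrowtail y)\rightarrowtail x\big)\rightarrowtail x=\triangle\big((\triangle x\to\triangle y)\to x\big)\to x=\triangle x\to x=1$ by (M1). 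The only mildly delicate step in all of this is the bookkeeping needed to extract (a) and (b) from the rather opaque axiom (M2); once those are in place, the six verifications are routine.
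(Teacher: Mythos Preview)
Your proof is correct. The paper itself states this lemma without any proof, so there is no ``paper's approach'' to compare against; your argument supplies exactly what is missing. The two auxiliary facts (a) $\triangle\triangle x=\triangle x$ and (b) $\triangle(\triangle x\to y)=\triangle x\to\triangle y$ are the natural preliminary reductions (they are indeed standard for $\triangle H_3$-algebras), and once those are secured the six clauses follow by the short Hilbert-algebra computations you give. In particular, the treatment of (wi6)---reducing it to (M3) via the observation that $v=\triangle x\to\triangle y$ is a $\triangle$-fixed point---is the right idea and is carried out correctly.
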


Let  $A\in i\mathbb{H}_3^{\triangle}$ and suppose a subset $D\subseteq A$, we say that $D$ is a weak deductive system (w.d.s.) if $1\in D$, and $x,x\rightarrowtail y\in D$ imply $y\in D$. It is not hard to see that the set of modal deductive systems is equal to the set of  weak deductive systems. We denote by $\mathcal{D}_w(A)$  the set of weak deductive systems of a Hilbert algebra.

Now, for a given $iH^{\triangle}_3$-algebra $A$ and a (weak) deductive system $D$ of $A$ is said to be a maximal if for every (weak) deductive system $M$ such that $D\subseteq M$ implies $M=A$ or $M=D$. Besides, let us consider the set of all maximal w.d.s. $\mathcal{E}_w(A)$. A. Monteiro gave the following definition in order to characterize maximal deductive systems:

\begin{defi}(A. Monteiro)\label{Ligado}
Let $A$ be a  $iH^{\triangle}_3$-algebra, $D\in\mathcal{D}_w(A)$ and $p\in A$. We say that $D$ is a  weak deductive system tied  to $p$ if $p\notin D$ and for any $D'\in\mathcal{D}(A)$ such that $D\subsetneq D'$, then $p\in D'$.
\end{defi}

The importance for introducing the notion of weak deductive systems is to prove that every maximal  weak deductive system is a weak deductive system tied to some element of a given $iH^{\triangle}_3$-algebra, $A$. Conversely, and using  (wi6), we can prove every w.d.s is a maximal  weak deductive systems. Moreover, from (wi4), (wi5) and (wi1) and using A. Monteiro's techniques, we also can prove that $\{1\}=\underset{M\in\mathcal{E}_w(A)}{\bigcap}M$  and so, we have that the following lemma holds.

First, in what follows, it will be considered the quotient algebra $A/M$  defined  by $a\equiv_M b$ iff $a\to b, b\to a\in M$, see Lemma \ref{DmdetCong} and the canonical projection $q_{M}: A\to A/M$ defined by $q_{M}=|x|_M$ where $|x|_M$  denotes the equivalence class of $x$ generated by $M$.

\begin{lem}\label{SM1} Let $A$ be a $iH^{\triangle}_3$-algebra then map $\Phi:A\longrightarrow \underset{M\in\mathcal{E}_w(A)}{\prod}A/M$ such that $\Phi(x)(M)=q_{M}(x)$ is a one-to-one homomorphism; that is to say, the variety of $iH^{\triangle}_3$-algebras is semisimple.
\end{lem}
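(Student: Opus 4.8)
The statement has two parts: that $\Phi$ is a homomorphism and that it is injective, from which semisimplicity follows because each $A/M$ is simple (having no nontrivial weak deductive systems, since $M$ is maximal). The map $\Phi$ is a homomorphism because it is the diagonal of the family of canonical projections $q_M$, each of which is a homomorphism by Lemma \ref{DmdetCong}; this part requires no work beyond invoking that each $M \in \mathcal{E}_w(A)$ determines a congruence and that products of homomorphisms are homomorphisms. So the real content is injectivity of $\Phi$.

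First I would observe that $\Phi(x) = \Phi(y)$ means $q_M(x) = q_M(y)$ for every $M \in \mathcal{E}_w(A)$, i.e. $x \to y, y\to x \in M$ for all maximal weak deductive systems $M$. Hence $x\to y, y\to x \in \bigcap_{M\in\mathcal{E}_w(A)} M$. The crux is therefore the claimed identity $\bigcap_{M\in\mathcal{E}_w(A)} M = \{1\}$, which the text already announces is provable "from (wi4), (wi5) and (wi1) and using A. Monteiro's techniques." Granting that, we get $x\to y = 1$ and $y\to x = 1$, so by (H3) (antisymmetry of $\leq$, Lemma \ref{lema1cap1}(H5)) we conclude $x = y$, giving injectivity.

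To justify the key identity $\bigcap_{M\in\mathcal{E}_w(A)} M = \{1\}$, I would argue by contraposition: given $p \neq 1$ in $A$, it suffices to produce a maximal weak deductive system not containing $p$. Starting from $\{1\}$, which is a weak deductive system avoiding $p$ (since $p\neq 1$), a Zorn's lemma argument on the poset of weak deductive systems avoiding $p$ yields a maximal such $D$; one then checks, using (wi6) $((x\rightarrowtail y)\rightarrowtail x)\rightarrowtail x = 1$ together with the Hilbert-algebra machinery, that $D$ is in fact a maximal weak deductive system (i.e. $D$ is weak-deductive-system-tied to $p$ in Monteiro's sense, and the text asserts every such tied system is maximal and conversely). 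This is the step the paper has already set up via Definition \ref{Ligado} and the surrounding discussion, so in the write-up I would simply cite that discussion. The main obstacle, and the only genuinely non-formal point, is verifying that a weak deductive system maximal among those avoiding $p$ is actually maximal outright — this is exactly where (wi6) and Monteiro's technique enter, and it relies on the three-valued identity (IT3) built into $iH_3$-algebras (equivalently, on $A/D$ being a subalgebra of the three-element generating chain). Everything else is bookkeeping: closure of intersections of (modal = weak) deductive systems, the congruence/deductive-system correspondence from Lemma \ref{DmdetCong}, and antisymmetry of $\leq$.
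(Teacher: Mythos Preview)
Your proposal is correct and follows the paper's approach exactly: the paper's proof reads simply ``It is routine,'' the preceding paragraph having already asserted (via (wi1), (wi4), (wi5), (wi6) and Monteiro's technique) that $\{1\}=\bigcap_{M\in\mathcal{E}_w(A)} M$, from which injectivity follows by (H3) just as you describe. One small caution: your parenthetical that $A/D$ is a subalgebra of the three-element chain anticipates Theorem~\ref{TeoM0S}, which is proved only afterward, so do not lean on it here; the passage from ``tied to $p$'' to ``maximal outright'' is handled directly by (wi6) and the modal axioms, not by (IT3).
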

\begin{dem}
It is routine.
\end{dem}

\

The construction of the following homomorphism is fundamental to obtaining the generating algebras of the variety of  $iH^{\triangle}_3$-algebra. First, we denote by  $\mathbb{C}_{3}^{\to,\wedge}$ the $iH^{\triangle}_3$-algebra with support is a chain $\mathbb{C}_{3}$ ($0 <\frac{1}{2}<1$) and $\wedge$ is a lattice operation, and $\to$, $\triangle$ are defined in the Table 1 and 2 form Section 1. In addition, it is easy to see that  the algebra $\mathbb{C}_{3}^{\to,\wedge}$ has a unique subalgebra $\mathbb{C}_2^{\to,\wedge}=\langle \{0,1\},\to,\wedge,\triangle,1\rangle$.

\begin{theo}\label{TeoM0S} Let  $M$ be a non-trivial maximal modal deductive system of  $iH^{\triangle}_3$-algebra $A$. Let us consider the sets $M_{0}=\{x\in A: \nabla x \notin M\}$ and $M_{1/2}=\{x\in A: x\notin M, \nabla x\in M\}$, and the map $h:A\longrightarrow \mathbb{C}_{3}$ defined by\\
\centerline{$h(x)= \begin{cases}
0 & \mbox{if } x\in M_{0} \\
1/2 & \mbox{if } x\in M_{1/2} \\
1 & \mbox{if} x\in M.
\end{cases}$}
Then, $h$ is a $iH^{\triangle}_3$-homomorphism such that $h^{-1}(\{1\})=M$.
\end{theo}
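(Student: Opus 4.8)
The plan is to verify directly that the three sets $M_0$, $M_{1/2}$, $M$ partition $A$, and then to check that $h$ preserves $\to$, $\wedge$ and $\triangle$ by a case analysis on the images of the arguments, using the known structure of $\mathbb{C}_3^{\to,\wedge}$ (Tables 1 and 2) as the target of the computation. First I would observe that $M$, $M_{1/2}$, $M_0$ are pairwise disjoint and cover $A$: an element is either in $M$ or not, and if $x\notin M$ then either $\nabla x\in M$ (giving $x\in M_{1/2}$) or $\nabla x\notin M$ (giving $x\in M_0$); so $h$ is well defined and total, and clearly $h^{-1}(\{1\})=M$. I would also record the easy monotonicity-type facts I will keep using: $x\le\nabla x$ (from $x\to\triangle x\le\triangle x$... more directly $x\le (x\to\triangle x)\to\triangle x=\nabla x$ via (H7),(H11)), $\triangle x\le x$ (axiom M1), hence $\triangle x\in M\Rightarrow x\in M\Rightarrow \nabla x\in M$, and $\nabla x\notin M\Rightarrow x\notin M$; and since $M$ is a \emph{modal} deductive system, $x\in M\Rightarrow\triangle x\in M$. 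Using Lemma~\ref{lem2.1}(12),(13) one gets $\nabla$ and $\triangle$ commute with $\wedge$, which is what lets the $\wedge$-case and $\triangle$-case go through.

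Next I would prove $h(\triangle x)=\triangle h(x)$. If $x\in M$ then $\triangle x\in M$, so $h(\triangle x)=1=\triangle 1=\triangle h(x)$. If $x\in M_0$ or $x\in M_{1/2}$, I must show $\triangle x\in M_0$, i.e. $\nabla\triangle x\notin M$; this should follow from the identities relating $\nabla$ and $\triangle$ available in a $\triangle H_3$-algebra (in $\mathbb{C}_3$ one has $\nabla\triangle x=\triangle x$), together with $\triangle x\notin M$ when $x\ne 1$, which in turn needs $\triangle x=1\Rightarrow x=1$ — this last implication is the place where I expect to lean on the maximality/structure of $M$ and Theorem~2.6 of \cite{LM1} rather than on bare axioms. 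Then I would prove $h(x\wedge y)=h(x)\wedge h(y)=\min(h(x),h(y))$: the hard direction is that $x\wedge y\in M$ forces $x\in M$ and $y\in M$ (from $(x\wedge y)\to x=1$, $(x\wedge y)\to y=1$ and (D2)), and conversely $x,y\in M\Rightarrow x\wedge y\in M$ (Lemma~\ref{lem2.1}(9)); the $M_{1/2}$ level is handled by $\nabla(x\wedge y)=\nabla x\wedge\nabla y$ (Lemma~\ref{lem2.1}(12)).

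Finally, the main work is $h(x\to y)=h(x)\to h(y)$, which I would organize as a $3\times 3$ table matching Table~1. The cases where $h(y)=1$, i.e. $y\in M$, give $x\to y\in M$ by (H7) and (D2), so $h(x\to y)=1=h(x)\to 1$. The cases $h(x)=0$: if $\nabla x\notin M$ I must show $x\to y\in M$; here I would use that in a trivalent algebra $\nabla x\notin M$ means $x$ behaves like $0$ modulo $M$, and $0\to y=1$, invoking the representation/congruence picture — concretely, pass to $A/M$, which by the semisimplicity results of Section~2 and the structure of the simple algebras embeds into $\mathbb{C}_3^{\to,\wedge}$, and there the identity is just Table~1. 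That reduction to the simple quotient is in fact the cleanest route for \emph{all} the mixed cases (e.g. $h(x)=1,h(y)=1/2$ must give $1/2$; $h(x)=1/2,h(y)=0$ must give $0$): once one knows $A/M$ is (isomorphic to a subalgebra of) $\mathbb{C}_3^{\to,\wedge}$ and that $h$ is, up to that isomorphism, the canonical projection $q_M$ composed with the identification, every equation to be checked is an equation in $\mathbb{C}_3$ that one reads off Tables~1 and~2. The main obstacle, then, is not any single computation but setting up this identification rigorously: showing that a non-trivial maximal modal deductive system $M$ yields $A/M\cong\mathbb{C}_3^{\to,\wedge}$ (or its two-element subalgebra), with the isomorphism sending $q_M(x)$ to $h(x)$ — i.e. that the three $M$-classes are exactly $M$, $M_{1/2}$, $M_0$ and that the induced operations are the tabulated ones. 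I would isolate this as the crux, prove it using Monteiro's characterization of maximal (weak) deductive systems as tied to an element together with Lemma~\ref{SM1}, and let the homomorphism property of $h$ follow immediately.
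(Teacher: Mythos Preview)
Your treatment of the partition $A=M\cup M_{1/2}\cup M_0$ and of the $\wedge$-case is essentially the paper's argument: the paper only writes out $h(x\wedge y)=h(x)\wedge h(y)$ by the same three-level case analysis using Lemma~\ref{lem2.1}(9),(12), and declares the remaining operations ``similar''. So up to that point you are on the paper's track.

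The genuine gap is in your handling of $\to$. You propose to avoid the nine-case check by passing to the quotient $A/M$ and using that it is (isomorphic to a subalgebra of) $\mathbb{C}_3^{\to,\wedge}$, then reading off Table~1. But that identification is precisely the \emph{content} of the theorem you are proving: in the paper's development, the statement ``the simple algebras are $\mathbb{C}_3^{\to,\wedge}$ and $\mathbb{C}_2^{\to,\wedge}$'' appears only \emph{after} Theorem~\ref{TeoM0S}, as a corollary of it. Lemma~\ref{SM1}, which you invoke, only says that $A$ embeds into $\prod_{M}A/M$; it gives no information about the size or structure of each factor $A/M$. Monteiro's ``tied to $a$'' characterization tells you $M$ is maximal, but it does not by itself produce a three-element quotient: to see that $A/M$ has at most three classes and that the induced $\to$ matches Table~1, you still have to prove, for instance, that $x\notin M$ and $\nabla x\in M$ force $x\to y\in M$ iff $\nabla y\in M$, etc.\ --- exactly the case analysis you were hoping to shortcut. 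So the proposed reduction is circular, and you should instead carry out the $3\times 3$ verification for $\to$ directly (this is what the paper means by ``similar manner'').

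A smaller point on $\triangle$: what you actually need is $x\notin M\Rightarrow\triangle x\notin M$, which is immediate from $\triangle x\leq x$ (axiom (M1)) and the fact that $M$ is upward closed. Together with $\nabla\triangle x=\triangle x$ this gives $\triangle x\in M_0$ whenever $x\notin M$; there is no need to detour through ``$\triangle x=1\Rightarrow x=1$'' or through \cite{LM1}.
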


\begin{dem} We shall prove only that $h(x\wedge y)= h(x)\wedge h(y)$, for the rest of the proof can be done in a similar manner. We will show:
\begin{itemize}
  \item [1)] if $x\in M_0$ and $y\in A$, then $x\wedge y\in M_0$,
  \item [2)] if $x\in M_{1/2}$ and $y\in A$, then $x\wedge y\in M_{1/2}$,
  \item [3)] if $x,y\in M$, then $x\wedge y\in M$,
  \item [] Indeed,
  \item [1)] Let $x\in M_0, y\in A$, then $x\notin M$, $\nabla x\notin M$. From Lemma \ref{lem2.1} (12.),  and the fact $x\notin M$, we have that $x\wedge y\notin M$. Since $(\nabla x\wedge\nabla y)\to\nabla x\in M$ but  $\nabla x\notin M$, $\nabla x\wedge \nabla y\notin M$, we infer that,  by Lemma \ref{lem2.1} (12.), $\nabla(x\wedge y)\notin M$ and therefore, $x\wedge y\in M_0$.
  \item [2)] Assume that $x\in M_{1/2}$ and $y\notin M_0$. Thus,  $x\notin M$, and $\nabla x, \nabla y\in M$ and then,  $x\wedge y\notin M$. From the latter and Lemma \ref{lem2.1} (9.), we can write  $\nabla x\to(\nabla y\to(\nabla x\wedge\nabla y))\in M$. Since $\nabla x, \nabla y\in M$ we have that $\nabla(x\wedge y)\in M$. So, $x\wedge y\in M_{1/2}$.
    \item [3)] It follows immediately from Lemma \ref{lem2.1} (9.), which completes the proof. 
\end{itemize}

\end{dem}

It is worth mentioning that the last theorem is an important tool for the algebraic study of the class of these algebras. Moreover, this also is an important tool to prove the completeness theorem for the associated logic. It is not possible to have this homomorphism  a general context such as Universal Algebra, we have to find it in order to show the generating algebras and to give the {\em canonical model}.  The defintion of this homomorphism is not the same for $3$-valued \L ukasiewiz algebras or for MV$_3$- algebras. By the way, according to Lemma \ref{SM1}  and Theorem \ref{TeoM0S}, and an adaptation of the first isomorphism theorem of Universal Algebra, we have proved the following theorem and corollary.

\begin{theo} The variety $i\mathbb{H}^{\triangle}_3$ is semisimple. Besides, the algebras   $\mathbb{C}_3^{\to,\wedge}=\langle \{0,\frac{1}{2},1\},\to,\wedge,\triangle,1\rangle$ and $\mathbb{C}_2^{\to,\wedge}=\langle \{0,1\},\to,\wedge,\triangle,1\rangle$ are the unique simple algebras.
\end{theo}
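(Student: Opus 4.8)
The plan is to deduce the statement from Lemma~\ref{SM1} and Theorem~\ref{TeoM0S} together with the deductive-system/congruence correspondence of Lemma~\ref{DmdetCong}. The first assertion is essentially Lemma~\ref{SM1}: for any $A\in i\mathbb{H}^{\triangle}_3$ the map $\Phi$ embeds $A$ subdirectly into $\prod_{M\in\mathcal{E}_w(A)}A/M$, and since each $M\in\mathcal{E}_w(A)$ is a maximal modal deductive system, by Lemma~\ref{DmdetCong} it corresponds to a maximal congruence, so every factor $A/M$ is simple. Hence $A$ is a subdirect product of simple algebras and the variety is semisimple.

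For the identification of the simple algebras, I would take an arbitrary nontrivial simple $S\in i\mathbb{H}^{\triangle}_3$. By Lemma~\ref{DmdetCong}, $Con_{iH^{\triangle}_3}(S)\cong D_m(S)$, so $S$ has exactly two modal deductive systems, $\{1\}$ and $S$; in particular $\{1\}$ is a nontrivial maximal modal deductive system of $S$. Applying Theorem~\ref{TeoM0S} with $M=\{1\}$ produces an $iH^{\triangle}_3$-homomorphism $h\colon S\to\mathbb{C}_3^{\to,\wedge}$ with $h^{-1}(\{1\})=\{1\}$. The kernel congruence of $h$ is the congruence associated, via Lemma~\ref{DmdetCong}, to the modal deductive system $h^{-1}(\{1\})=\{1\}$, hence it is the identity and $h$ is injective. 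Thus $S$ is isomorphic to a subalgebra of $\mathbb{C}_3^{\to,\wedge}$, and since (as observed just before the statement) the only subalgebras of $\mathbb{C}_3^{\to,\wedge}$ are $\mathbb{C}_3^{\to,\wedge}$ and $\mathbb{C}_2^{\to,\wedge}$, both nontrivial, $S$ is isomorphic to one of these two.

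It then remains to verify that $\mathbb{C}_2^{\to,\wedge}$ and $\mathbb{C}_3^{\to,\wedge}$ are themselves simple, which I would do by listing their modal deductive systems directly from Tables~1 and~2: any modal deductive system $D$ containing $0$ must be the whole algebra, because $0\to x=1$ for every $x$ forces $x\in D$ by modus ponens; and in $\mathbb{C}_3^{\to,\wedge}$ the set $\{\tfrac12,1\}$ is not modal since $\triangle\tfrac12=0$. So each of the two algebras has only the two trivial modal deductive systems, and by Lemma~\ref{DmdetCong} only two congruences, hence is simple; combining the two directions finishes the proof. The only mildly delicate point — the ``adaptation of the first isomorphism theorem'' alluded to above — is the passage from $h^{-1}(\{1\})=\{1\}$ to injectivity of $h$, which is immediate once one has the bijection between modal deductive systems and congruences of Lemma~\ref{DmdetCong}; the rest is a routine inspection of a three-element algebra.
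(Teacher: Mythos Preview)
Your proposal is correct and follows exactly the route the paper indicates: the paper's own ``proof'' of this theorem is merely the sentence that it follows from Lemma~\ref{SM1}, Theorem~\ref{TeoM0S}, and an adaptation of the first isomorphism theorem, and you have faithfully unpacked that sketch. Your additional verification that $\mathbb{C}_2^{\to,\wedge}$ and $\mathbb{C}_3^{\to,\wedge}$ are themselves simple, and your explicit justification (via Lemma~\ref{DmdetCong}) that $h^{-1}(\{1\})=\{1\}$ forces $h$ to be injective, supply precisely the details the paper leaves to the reader.
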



\subsection{ \large Hilbert calculus for  $iH^{\triangle}_3$-algebras}

In the sequel, we are going to exhibit a calculus $i\mathcal{H}_{\triangle}^{3}$. Now, consider the signature $\Sigma=\{\to, \wedge, \triangle \}$, and let $Var=\{p_1,p_2,\dots\}$ a numerable set of propositional variables. The propositional language generated by $\Sigma$ and $Var$ will be denoted by $\mathfrak{Fm}_i$. It is clear that  $\mathfrak{Fm}_i$ is the absolutely free algebra of formulas generated by $Var$.

\begin{definition} The calculus  $i\mathcal{H}_{\triangle}^{3}$ defined over the language $\mathfrak{Fm}_i$ is the Hilbert calculus obtained from the following axiom schemas and inference rules: 

\noindent {\bf{Axioms}}

\begin{itemize}
  \item [(A1)] $\alpha\to(\beta\to \alpha)$,
  \item [(A2)] $(\alpha\to(\beta\to \gamma))\to((\alpha\to \beta)\to(\alpha\to \gamma))$,
  \item [(A3)] $((\alpha\to \beta)\to \gamma)\to(((\gamma\to \alpha)\to \gamma)\to \gamma)$,
  \item [$(A_i 4)$] $(\alpha\wedge \beta)\to \beta$,  
  \item [$(A_i 5)$] $(\alpha\wedge\beta)\to\alpha$,
  \item [$(A_i 6)$] $\alpha\to(\beta\to(\alpha\wedge\beta))$,
  \item [$(A_i 7)$]  $\triangle \alpha\to \alpha$,
  \item [$(A_i 8)$] $\triangle(\triangle \alpha\to \beta)\to(\triangle \alpha\to\triangle \beta)$, 
  \item [$(A_i 9)$] $((\beta\to\triangle \beta)\to(\alpha\to\triangle(\alpha\to \beta)))\to\triangle(\alpha\to \beta)$, 
  \item [$(A_i 10)$] $((\triangle \alpha\to \beta)\to \gamma)\to((\triangle \alpha\to \gamma)\to \gamma)$.

\end{itemize}

\noindent Assume that $\nabla \alpha := (\alpha \to \triangle \alpha) \to \triangle \alpha$.

\noindent {\bf{Inference rules}}

(MP) $\dfrac{\alpha, \alpha\to\beta }{\beta}$ , (NEC) $\dfrac{\alpha}{\triangle\alpha}$, \, \, and $(R_\wedge)$ $\displaystyle\frac{\alpha\to\beta}{\alpha\to(\alpha\wedge\beta)}$.

\end{definition}

We are going to consider  the usual notion of derivation of a formula  $\alpha$ of $i\mathcal{H}_{\triangle}^{3}$, and we shall denote by $\vdash_i \alpha$. Now, let us consider the relation $\equiv_i$ defined by $\alpha \equiv_i \beta$ iff $\vdash_i \alpha \to \beta$ and $\vdash_i  \beta \to \alpha$. Then, we have the following technical result

\begin{lem}
The following properties and rules are verified in $i\mathcal{H}_{\triangle}^{3}$.

\begin{itemize}
 \item [$(M_i 1)$] $\vdash_i \alpha\to \alpha$, $(M_i 2)$ $\{\gamma\}\vdash_i {\alpha\to \gamma}$, $(M_i 3)$ $\{\alpha\to(\beta\to \gamma)\}\vdash_i {( \alpha \to \beta)\to( \alpha \to  \gamma )}$,
  \item [$(M_i 4)$] $\vdash_i ( \alpha \to(\beta\to  \gamma ))\to(\beta\to( \alpha \to  \gamma ))$, $(M_i 5)$ $\{ \alpha \to(\beta\to  \gamma )\}\vdash_i {\beta\to( \alpha \to  \gamma )}$, 
  \item [$(M_i 6)$] $\{ \alpha \to \beta\}\vdash_i {(\beta\to  \gamma )\to( \alpha \to  \gamma )}$, $(M_i 7)$ $\{ \alpha \to \beta\}\vdash_i {( \gamma \to  \alpha )\to( \gamma \to \beta)}$,
  \item [$(M_i 8)$] $\{ \alpha  \to \beta,\ \beta\to  \gamma \}\vdash_i  \alpha  \to  \gamma $, $(M_i 9)$ $\{ \alpha \equiv_i  \beta,  \theta \equiv_i   \eta \}\vdash_i {( \alpha \to  \theta )\equiv_i  (\beta\to  \eta )}$,
  \item [$(R6)_i$] $\{ \alpha \to \beta\}\vdash_i \{( \gamma \wedge  \alpha )\to( \gamma \wedge \beta)\}$, $(R10)_i$ $\{ \alpha \to \beta, \theta \to  \eta \}\vdash_i\{( \alpha \wedge  \theta )\to(\beta\wedge  \eta )\}$, 
  \item [$(M_i 10)$] $\{ \alpha \equiv_i  \beta,  \theta \equiv_i   \eta \}\vdash_i {( \alpha \wedge  \theta )\equiv_i (\beta\wedge  \eta )}$, $(M_i 11)$ $\vdash_i ( \alpha \to( \alpha \to \beta))\to( \alpha \to \beta)$,
  \item [$(M_i 12)$] $\vdash_i ((\alpha \to \beta)\to(\beta\to  \gamma ))\to( \alpha \to  \gamma )$, 
  \item [$(M_i 13)$] $((\alpha\to\beta)\to\beta)\to\beta\equiv_i\alpha\to\beta$, $(M_i 14)$ $\vdash_i (( \alpha \to \beta)\wedge \beta)\equiv_i  \beta$,
  \item [$(M_i 15)$] $\vdash_i ( \alpha \wedge( \alpha \to \beta))\equiv_i   \alpha \wedge \beta$, $(R11)_i$ $\{ \alpha \to \beta, \alpha \to  \gamma \}\vdash_i \{ \alpha \to(\beta\wedge  \gamma )\}$,
  \item [$(M_i 16)$] $\vdash_i ( \alpha \to(\beta\wedge  \gamma ))\to ( \alpha \to \beta)\wedge( \alpha \to  \gamma )$, $(M_i 17)$ $\vdash_i \triangle( \alpha \to  \beta )\to(\triangle  \alpha \to\triangle  \beta )$,
  \item [$(M_i 18)$] $\vdash_i \triangle  \alpha \equiv_i\triangle\triangle  \alpha $, $(M_i 19)$ $\vdash_i ((( \beta \to\triangle  \beta )\to( \alpha \to\triangle  \alpha ))\to\triangle( \alpha \to  \beta ))\to(\triangle  \alpha \to\triangle  \beta )$,
  \item [$(M_i 20)$] $\vdash_i ((( \beta \to\triangle  \beta )\to( \alpha \to\triangle\triangle  \alpha ))\to\triangle( \alpha \to  \beta ))\to(\triangle  \alpha \to\triangle\triangle  \beta )$, 
  \item [$(M_i 21)$] $\{ \alpha \to  \beta \}\vdash_i {\triangle  \alpha \to\triangle  \beta }$, $(M_i 22)$ $\{ \alpha \equiv_i   \beta \}\vdash_i {\triangle  \alpha \equiv_i \triangle  \beta }$, $(M_i 23)$ $\triangle(\alpha\to\triangle\alpha)\equiv_i\alpha\to\triangle\alpha$,
  \item [$(M_i 24)$] $\vdash_i (\triangle  \alpha \to\triangle\triangle  \beta )\to((( \beta \to\triangle  \beta )\to( \alpha \to\triangle\triangle  \alpha ))\to\triangle( \alpha \to  \beta ))$,
  \item [$(M_i 25)$] $\vdash_i ((( \beta \to\triangle  \beta )\to( \alpha \to\triangle\triangle  \alpha ))\to\triangle( \alpha \to  \beta )\equiv_i \triangle  \alpha \to\triangle\triangle  \beta $, 
  \item [$(M_i 26)$] $\vdash_i ((\beta\to\triangle\beta)\to\triangle\alpha)\to(((\alpha\to\triangle\alpha)\to\triangle\beta)\to\triangle(\alpha\to\beta))$, $(M_i 27)$ $\vdash_i \alpha\to\nabla\alpha$,
  \item [$(M_i 28)$] $\vdash_i\nabla\alpha\equiv_i(\alpha\to\triangle\alpha)\to\triangle\alpha$, $(M_i 29)$ $\vdash_i\nabla(\alpha\to\beta)\equiv_i\nabla\alpha\to\nabla\beta$, $(M_i 30)$ $\vdash_i\nabla\triangle\alpha\equiv_i\triangle\alpha$, 
  \item [$(M_i 31)$] $\nabla(\alpha\wedge\beta)\equiv_i\nabla\alpha\wedge\nabla\beta$.
\end{itemize}
\end{lem}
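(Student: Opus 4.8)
The plan is to verify the thirty-one items in the order in which they are listed, so that each derivation may invoke all of the preceding ones. The engine of the argument is the \emph{Deduction Theorem} for the non-modal part of $i\mathcal{H}_{\triangle}^{3}$: since (A1), (A2) and (MP) are exactly the positive implicative calculus, for every set $\Gamma\cup\{\alpha,\beta\}$ of formulas whose derivation of $\beta$ from $\Gamma\cup\{\alpha\}$ does not use (NEC) one has $\Gamma,\alpha\vdash_i\beta$ if and only if $\Gamma\vdash_i\alpha\to\beta$. The rule $(R_\wedge)$ does not obstruct this, because $\vdash_i(\alpha\to\beta)\to(\alpha\to(\alpha\wedge\beta))$ is already derivable from $(A_i6)$ and (A2) by (MP), so any use of $(R_\wedge)$ may be replaced by an (MP)-step with a theorem; by contrast, (NEC) must be kept strictly outside the scope of the Deduction Theorem. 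With this in hand, $(M_i1)$ is the classical derivation of $\alpha\to\alpha$ from (A1), (A2), while $(M_i2)$--$(M_i8)$ are the syntactic transcriptions of the order-theoretic rules and schemas (H7)--(H15) of Lemma \ref{lema1cap1}: each is obtained by assuming the antecedents, reaching the consequent by (MP), and then discharging the assumptions.

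Next I would settle the conjunction fragment. The rules $(R6)_i$, $(R10)_i$, $(R11)_i$ and the schema $(M_i16)$ follow from $(A_i4)$, $(A_i5)$, $(A_i6)$, the rule $(R_\wedge)$ and the implicative lemmas already available; $(M_i14)$ and $(M_i15)$ are the syntactic forms of the identities $(x\to y)\wedge y=y$ and $x\wedge(x\to y)=x\wedge y$ (Lemma \ref{lem2.1}(14),(15)), proved by exhibiting both implications $\vdash_i\varphi\to\psi$ and $\vdash_i\psi\to\varphi$ and using the exchange law $x\to(y\to z)=(x\wedge y)\to z$; $(M_i11)$ is the syntactic (H16), $(M_i12)$ a transitivity rearrangement, and $(M_i13)$ is the syntactic (H19). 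The trivalent schema (A3) is available throughout and is what ties the calculus to the three-valued case. Finally $(M_i9)$ and $(M_i10)$ are the replacement properties — $\equiv_i$ is compatible with $\to$ and with $\wedge$ — and they reduce to $(M_i6)$, $(M_i7)$, $(R6)_i$ and $(R10)_i$.

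The modal block $(M_i17)$--$(M_i31)$ is then treated with the schemas $(A_i7)$--$(A_i10)$, the rule (NEC), and the algebraic facts of Lemma \ref{lem2.1}(10)--(15) and of Definition \ref{modal1} as a template. One first proves the idempotency $(M_i18)$, $\triangle\alpha\equiv_i\triangle\triangle\alpha$ (one direction from $(A_i7)$, the other from $(A_i8)$ together with (NEC) applied to the theorem $\triangle\alpha\to\triangle\alpha$), the distribution schema $(M_i17)$, $\triangle(\alpha\to\beta)\to(\triangle\alpha\to\triangle\beta)$ (extracted from $(A_i8)$ once one has $\vdash_i\triangle(\alpha\to\beta)\to\triangle(\triangle\alpha\to\beta)$), and then the derived rules $(M_i21)$, $\{\alpha\to\beta\}\vdash_i\triangle\alpha\to\triangle\beta$, and $(M_i22)$, $\{\alpha\equiv_i\beta\}\vdash_i\triangle\alpha\equiv_i\triangle\beta$, both by applying (NEC) to the hypothesis and then $(M_i17)$. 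The schemas $(M_i19)$, $(M_i20)$, $(M_i24)$, $(M_i25)$, $(M_i26)$ are the syntactic content of the equation (M2) of Definition \ref{modal1} and of its one-sided and permuted reformulations (the permutations coming from $(M_i4)$, $(M_i5)$), while $(M_i23)$ and $(M_i30)$ are the normalizations $\triangle(\alpha\to\triangle\alpha)\equiv_i\alpha\to\triangle\alpha$ and $\nabla\triangle\alpha\equiv_i\triangle\alpha$. For the $\nabla$-items one unfolds $\nabla\alpha=(\alpha\to\triangle\alpha)\to\triangle\alpha$: then $(M_i27)$ is (H7)/(H12)-type, $(M_i28)$ is immediate, $(M_i29)$ mirrors $\nabla(x\to y)=\nabla x\to\nabla y$, and $(M_i31)$ mirrors $\nabla(x\wedge y)=\nabla x\wedge\nabla y$ of Lemma \ref{lem2.1}(12), each reducing to implicative and modal facts already in hand.

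The one genuinely delicate point — more than any individual item — will be getting the modal derived rules off the ground, i.e.\ the cluster $(M_i17)$--$(M_i22)$. Because the Deduction Theorem may not be invoked across an application of (NEC), and because $(A_i8)$ speaks only of $\triangle(\triangle\alpha\to\beta)$, one cannot simply ``box'' the theorem $(\alpha\to\beta)\to(\triangle\alpha\to\beta)$; instead one must derive $\triangle(\alpha\to\beta)\to\triangle(\triangle\alpha\to\beta)$ directly from the modal axioms $(A_i9)$, $(A_i10)$ \emph{before} the monotonicity rule $(M_i21)$ for $\triangle$ is available, and a careless ordering here turns the argument circular. Once this bootstrapping is carried out, everything downstream — in particular the replacement rules $(M_i9)$, $(M_i10)$, $(M_i22)$, which will later be needed to form the Lindenbaum--Tarski algebra of $i\mathcal{H}_{\triangle}^{3}$ — unwinds mechanically.
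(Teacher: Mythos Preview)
Your proposal is sound and, in fact, far more detailed than the paper's own proof, which reads in its entirety ``It is routine.'' There is therefore no substantive approach in the paper to compare against; your sketch simply supplies the content the authors omitted.

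One small remark on the bootstrapping you flag around $(M_i17)$--$(M_i21)$: the circularity you worry about dissolves if you reverse the order and establish the \emph{rule} $(M_i21)$ first, directly from $(A_i7)$, $(A_i8)$ and (NEC) without passing through $(M_i17)$. Namely, from a premise $\vdash_i\alpha\to\beta$ one gets $\vdash_i\triangle\alpha\to\beta$ by $(A_i7)$ and transitivity, then $\vdash_i\triangle(\triangle\alpha\to\beta)$ by (NEC), and finally $\vdash_i\triangle\alpha\to\triangle\beta$ by $(A_i8)$ and (MP). With $(M_i21)$ in hand as a rule, applying it to the theorem $(\alpha\to\beta)\to(\triangle\alpha\to\beta)$ yields $\triangle(\alpha\to\beta)\to\triangle(\triangle\alpha\to\beta)$, and composing with $(A_i8)$ gives the schema $(M_i17)$. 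This avoids having to extract $\triangle(\alpha\to\beta)\to\triangle(\triangle\alpha\to\beta)$ from $(A_i9)$, $(A_i10)$ by hand, which is the step you (rightly) identified as the most delicate.
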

\begin{proof}
It is routine.
\end{proof}

The last result  gives the final ingredients to  obtain the soundness and completeness Theorem. Is is important to note we only need a few properties of this lemma but in order to prove these properties we need the rest of properties. 

\begin{lem}
The relation $\equiv_i $ is a congruence on $\mathfrak{Fm}_i$.
\end{lem}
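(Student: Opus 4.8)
The plan is to verify directly the two clauses in the definition of a congruence on the absolutely free algebra $\mathfrak{Fm}_i$: first, that $\equiv_i$ is an equivalence relation, and second, that it is compatible with each of the operations $\to$, $\wedge$ and $\triangle$ of the signature $\Sigma$. Since $\mathfrak{Fm}_i$ is absolutely free on $Var$ with exactly these operations (the connective $\nabla$ being a defined abbreviation, not a primitive), no further clauses are needed. Moreover, all the required facts have already been isolated in the preceding technical lemma, so the argument is essentially an assembly job.

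For the equivalence-relation part: reflexivity, $\alpha \equiv_i \alpha$, is immediate from $(M_i 1)$, i.e.\ $\vdash_i \alpha\to\alpha$; symmetry is built into the definition of $\equiv_i$, which is symmetric in its two arguments; and transitivity follows from $(M_i 8)$, since if $\alpha\equiv_i\beta$ and $\beta\equiv_i\gamma$ then from $\vdash_i\alpha\to\beta$ and $\vdash_i\beta\to\gamma$ we obtain $\vdash_i\alpha\to\gamma$, and symmetrically $\vdash_i\gamma\to\alpha$, whence $\alpha\equiv_i\gamma$.

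For compatibility with the operations, assume $\alpha\equiv_i\beta$ and $\theta\equiv_i\eta$. Then $(\alpha\to\theta)\equiv_i(\beta\to\eta)$ is precisely $(M_i 9)$; $(\alpha\wedge\theta)\equiv_i(\beta\wedge\eta)$ is precisely $(M_i 10)$; and, using only the hypothesis $\alpha\equiv_i\beta$, $\triangle\alpha\equiv_i\triangle\beta$ is precisely $(M_i 22)$. This covers all operations of $\Sigma$, so $\equiv_i$ is compatible with the algebra structure of $\mathfrak{Fm}_i$ and is therefore a congruence.

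I expect no genuine obstacle here: the substantive work — the deduction-theorem-style manipulations underlying $(M_i 9)$ and $(M_i 10)$, and the monotonicity of $\triangle$ in $(M_i 21)$--$(M_i 22)$, which rest in turn on the conjunction axioms $(A_i 4)$--$(A_i 6)$ together with the rule $(R_\wedge)$, and on the modal axioms $(A_i 7)$--$(A_i 10)$ together with (NEC) — was already discharged in the previous lemma. The only point deserving a line of care is the remark above that, because $\mathfrak{Fm}_i$ carries no primitive operations beyond $\to$, $\wedge$, $\triangle$, verifying these three compatibility conditions genuinely suffices; consequently the proof is routine.
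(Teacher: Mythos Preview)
Your proof is correct and follows essentially the same strategy as the paper: verify that $\equiv_i$ is an equivalence relation and then check compatibility with each primitive connective, invoking the items already listed in the preceding technical lemma. The only differences are cosmetic: the paper cites the lower-level facts $(M_i 6)$, $(M_i 7)$ and $(M_i 17)$ (together with (MP) and (NEC)) rather than the packaged congruence statements $(M_i 9)$, $(M_i 10)$, $(M_i 22)$ that you use, and in fact the paper's own proof neglects to mention compatibility with $\wedge$, which you handle via $(M_i 10)$; so your version is, if anything, the more complete of the two.
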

\begin{dem} The relation $\equiv_i $ is reflexive,  symmetric and transitive follows immediately from $(M_i 1)$ and $(M_i 7)$. Let us suppose that $\alpha \equiv_i \beta$ and $\gamma \equiv_i \delta$ and taking into account $(M_i 6)$ and $(M_i 7)$, we can prove that  $\alpha\to \gamma \equiv_i \beta \to \delta$. Besides, suppose that $\alpha \equiv_i \beta$, the from $(M_i 17)$, (MP) and (NEC) we have that $\triangle\alpha \equiv_i \triangle\beta$.
\end{dem}

Since the $\equiv_i $ is a congruence, it allows to define the quotient algebra $\mathfrak{Fm}_i/\equiv_i$ that is so-called as Lindenbaum-Tarski algebra.

\begin{theo}\label{lindembaum2}  The algebra $\mathfrak{Fm}_i/\equiv_i$ is a $iH^{\triangle}_3$-algebra in which the operators are given as follows: $|\alpha|\to|\beta|=|\alpha\to \beta|$, $|\alpha|\wedge|\beta|=|\alpha\wedge \beta|$ and $|\beta \to \beta|=\{\alpha\in \mathfrak{Fm}_i:\vdash_i\alpha\}$ and where $|\delta|$ denotes the equivalence class of the formula $\delta$. Besides, $|\alpha|\leq|\beta|$ iff $\vdash_i\alpha\to \beta$.
\end{theo}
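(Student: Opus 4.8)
The plan is to verify directly that the quotient $\mathfrak{Fm}_i/\equiv_i$ satisfies Definition \ref{def2.2}, i.e. that the $(\to,\wedge,1)$-reduct is an $iH_3$-algebra and the $(\to,\triangle,1)$-reduct is a $\triangle H_3$-algebra (in the sense of Definition \ref{modal1}). First I would check that the operations are well defined on equivalence classes; this is exactly the content of the preceding lemma stating that $\equiv_i$ is a congruence on $\mathfrak{Fm}_i$, so $|\alpha|\to|\beta|:=|\alpha\to\beta|$, $|\alpha|\wedge|\beta|:=|\alpha\wedge\beta|$ and $\triangle|\alpha|:=|\triangle\alpha|$ are unambiguous. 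The distinguished element $1$ is the class $|\beta\to\beta|=\{\alpha:\ \vdash_i\alpha\}$, which is well defined by $(M_i 1)$ together with transitivity of $\equiv_i$.

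Next I would establish the Hilbert-algebra axioms (H1)--(H3) for the $\to$-reduct. For (H1) and (H2): a formula $\gamma$ is provable iff $|\gamma|=1$, so (H1) amounts to $\vdash_i\alpha\to(\beta\to\alpha)$ and (H2) to $\vdash_i(\alpha\to(\beta\to\gamma))\to((\alpha\to\beta)\to(\alpha\to\gamma))$, which are precisely the axiom schemas (A1) and (A2). For (H3): if $|\alpha|\to|\beta|=1$ and $|\beta|\to|\alpha|=1$ then $\vdash_i\alpha\to\beta$ and $\vdash_i\beta\to\alpha$, i.e. $\alpha\equiv_i\beta$, i.e. $|\alpha|=|\beta|$; this is immediate from the definition of $\equiv_i$. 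The identity (IT3) required in Definition \ref{def2.2}(1) translates, under the dictionary ``provable $\leftrightarrow$ equals $1$'', to provability of the instance of axiom (A3) (after renaming so the shape matches $((x\to y)\to z)\to(((z\to x)\to z)\to z)$), so it holds. Then the infimum axioms $(iH_1)$--$(iH_4)$ reduce to provability of the corresponding formulas: $(iH_3)$ and the associativity/idempotency and $(iH_4)$ are delivered by $(M_i 15)$, $(M_i 10)$/$(M_i 16)$ and the rule $(R_\wedge)$ together with axioms $(A_i4)$--$(A_i6)$; concretely $x\wedge(x\to y)\equiv_i x\wedge y$ is $(M_i 15)$, and the remaining lattice identities follow from $(A_i4),(A_i5),(A_i6)$ and $(R_\wedge)$ as recorded in the technical lemma. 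For the modal reduct, (M1) is axiom $(A_i7)$; (M2) and (M3) are obtained from $(A_i8)$, $(A_i9)$, $(A_i10)$ and the derived schemas $(M_i 19)$--$(M_i 26)$, noting that the necessitation rule (NEC) supplies $\vdash_i\triangle\alpha$ whenever $\vdash_i\alpha$, which is what is needed to translate the $\triangle$-identities. Finally, the characterizations $|\alpha|\to|\beta|=|\alpha\to\beta|$ etc.\ hold by fiat of the definitions, and $|\alpha|\le|\beta|$ iff $|\alpha|\to|\beta|=1$ iff $\vdash_i\alpha\to\beta$ by (H5) applied to the algebra we have just built.

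The routine but slightly delicate point is purely bureaucratic: matching each algebraic identity of Definitions \ref{modal1}, \ref{def2.2} with the right provable formula, since several of them are stated as ``$\cdots =1$'' and must be read off the axioms and the long technical lemma rather than re-proved. The only genuine obstacle is making sure the translation ``$t_1=t_2$ in $\mathfrak{Fm}_i/\equiv_i$'' is equivalent to ``$\vdash_i t_1\to t_2$ and $\vdash_i t_2\to t_1$'' and, for identities of the form ``$s=1$'', to ``$\vdash_i s$''; this equivalence is exactly the definition of $\equiv_i$ once one observes that $|s|=1$ means $s\equiv_i(\beta\to\beta)$, which by $(M_i 1)$, $(M_i 2)$ and transitivity is equivalent to $\vdash_i s$. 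Granting this dictionary, every clause of the definition of an $iH^{\triangle}_3$-algebra is a direct consequence of an axiom, an inference rule, or an item of the technical lemma, so the verification goes through and $\mathfrak{Fm}_i/\equiv_i$ is an $iH^{\triangle}_3$-algebra.
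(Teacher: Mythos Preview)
Your proposal is correct and follows essentially the same approach as the paper's own proof: both verify the clauses of Definitions~\ref{modal1} and~\ref{def2.2} for the quotient by translating each algebraic identity into a provable schema of $i\mathcal{H}_\triangle^3$, using axioms (A1)--(A3), $(A_i4)$--$(A_i10)$ and the items of the technical lemma. Your write-up is in fact a bit more explicit about the ``dictionary'' ($|s|=1$ iff $\vdash_i s$, and $|t_1|=|t_2|$ iff $\vdash_i t_1\to t_2$ and $\vdash_i t_2\to t_1$) and about well-definedness via the congruence lemma, but the substance is the same.
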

\begin{dem} It is easy to see that the relation ''$\leq$'' is an order relation on $\mathfrak{Fm}_i/\equiv_i$. Thus, it is clear that  $|\alpha|\leq|\beta\to \beta|$ for every $\alpha,\beta$ and so,  $|\beta\to \beta|$ is the last element of $\mathfrak{Fm}_i/\equiv_i$ that we denote with $1$ and also, it is not hard to see that $1=\{\alpha\in \mathfrak{Fm}_i:\vdash_i\alpha\}$.

On the other hand, taking into account  above remarks of $\leq$, the axioms (A1) and (A2), and  $(M_i 1)$, $(M_i 5)$, $(M_i 6)$ and $(M_i 7)$, we verify that conditions (H1), (H2) and (H3) are valid on  $\mathfrak{Fm}_i/\equiv_i$, see definition of Hilbert algebras of section 1. Also, form (A3) we can prove the condition (IT3) holds, see Definition \ref{def2.2}. Besides, taking $(A_i 4)$ $(A_i 5)$ $(A_i 7)$, we have the condition 2. of Definition \ref{def2.2} is verified.  Now, from $(A_i 7)$, $(A_i 17)$,  $(A_i 20)$, $(A_i 24)$ we can prove the axiom (M1) (M2) (M3) of Definition \ref{modal1} are verified for $\mathfrak{Fm}_i/\equiv_i$, which completes the proof.
\end{dem}

Remember that $\mathbb{C}_{3}^{\to,\wedge}$ is a $iH^{\triangle}_3$-algebra where the support is a chain $\mathbb{C}_{3}$ and $\wedge$ is a lattice operation, and $\to$, $\triangle$ are defined in the Table 1 and 2 from Section 1. In addition, for us a {\em logical matrix} for $i\mathcal{H}_{\triangle}^{3}$ is a pair $\langle \mathbb{C}_{3}^{\to,\wedge}, \{1\}\rangle$ where $\{1\}$ is the set of designated elements.

A function $v:\mathfrak{Fm}_i \to \mathbb{C}_{3}^{\to,\wedge}$ is a valuation for $i\triangle \mathbb{H}_3$ if it satisfies $v(\alpha\#\beta)=v(\alpha)\# v(\beta)$ with $\#\in \{\to, \wedge\}$, $v(\triangle \alpha)=\triangle v(\alpha)$. Besides, we say that $\alpha$ is {\em  valid semantically}  if $v(\alpha)=1$ for all valuation $v$  and, in this case, we denote $\vDash\alpha$.

\begin{theo}  {\rm (Soundness and Completeness Theorem)} 
Let $\alpha$ be a formula in  $\mathfrak{Fm}_i$. 

Then, $\vDash \alpha$ \ if and only if   \ $\vdash_i\alpha$.
\end{theo}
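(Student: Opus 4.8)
The plan is to prove the two implications separately, with soundness being routine and completeness being the substantial part. For soundness ($\vdash_i \alpha$ implies $\vDash \alpha$), I would proceed by induction on the length of a derivation of $\alpha$. One checks that every axiom schema (A1)--(A3) and $(A_i4)$--$(A_i10)$ evaluates to $1$ under any valuation $v$ into $\mathbb{C}_3^{\to,\wedge}$; this is a finite computation using Tables 1 and 2, and in fact it follows from the defining identities of $iH^\triangle_3$-algebras together with Lemma \ref{lem2.1}, since these axioms are exactly the translations of those identities. For the inference rules one verifies that (MP), (NEC) and $(R_\wedge)$ preserve the property of being evaluated to $1$: for (MP) this is property (H4); for (NEC) it is Lemma \ref{lem2.1}(10) ($\triangle 1 = 1$); and for $(R_\wedge)$ it is Lemma \ref{lem2.1}(15), $x\to(x\wedge y) = x\to y$. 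Hence every theorem is valid.

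For completeness ($\vDash \alpha$ implies $\vdash_i \alpha$), the strategy is the standard Lindenbaum--Tarski argument combined with the semisimplicity results already established. Suppose $\not\vdash_i \alpha$. By Theorem \ref{lindembaum2}, the Lindenbaum--Tarski algebra $\mathfrak{Fm}_i/\equiv_i$ is an $iH^\triangle_3$-algebra, and $\not\vdash_i\alpha$ means $|\alpha| \neq 1$ in this algebra. Now I invoke the structure theory: by Lemma \ref{SM1} the variety $i\mathbb{H}^\triangle_3$ is semisimple, so the canonical map $\Phi$ into the product of the quotients $\mathfrak{Fm}_i/\equiv_i$ modulo maximal weak deductive systems is injective; since $|\alpha|\neq 1$, there is some maximal $M$ with $|\alpha|\notin M$, i.e. $q_M(|\alpha|)\neq 1$ in the simple algebra $(\mathfrak{Fm}_i/\equiv_i)/M$. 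By Theorem \ref{TeoM0S} (together with the theorem identifying $\mathbb{C}_3^{\to,\wedge}$ and $\mathbb{C}_2^{\to,\wedge}$ as the only simple algebras), this simple quotient embeds into $\mathbb{C}_3^{\to,\wedge}$ via the homomorphism $h$. Composing, $v := h \circ q_M \circ |\cdot|$ is a valuation $\mathfrak{Fm}_i \to \mathbb{C}_3^{\to,\wedge}$ (it respects $\to$, $\wedge$, $\triangle$ because each factor does), and $v(\alpha) = h(q_M(|\alpha|)) \neq 1$. Therefore $\not\vDash\alpha$, which is the contrapositive of what we wanted.

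A small point to be careful about is that the valuation $v$ constructed above must genuinely be a homomorphism of $\Sigma$-algebras: the quotient map $|\cdot| : \mathfrak{Fm}_i \to \mathfrak{Fm}_i/\equiv_i$ is a homomorphism by Theorem \ref{lindembaum2}, $q_M$ is a homomorphism onto the quotient by a congruence (Lemma \ref{DmdetCong} guarantees $M$ determines a congruence respecting $\triangle$), and $h$ is an $iH^\triangle_3$-homomorphism by Theorem \ref{TeoM0S}; the composite is thus a valuation in the sense defined just before the statement, so $\alpha$ is refuted on an honest semantic valuation.

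I expect the main obstacle to be purely one of bookkeeping rather than of ideas: all the heavy lifting — semisimplicity, the explicit homomorphism onto $\mathbb{C}_3$, the fact that $\mathfrak{Fm}_i/\equiv_i$ is an $iH^\triangle_3$-algebra — has already been done in the preceding sections, so completeness is essentially immediate once these are assembled. The one genuinely routine-but-nontrivial piece is the soundness direction, where one must confirm that each of the ten axiom schemas really does evaluate to $1$; this is best handled by noting that the axioms are term-for-term the equational axioms of $iH^\triangle_3$-algebras (plus (IT3) and the modal identities (M1)--(M3)), so their validity is inherited directly from the fact that $\mathbb{C}_3^{\to,\wedge}$ is such an algebra, rather than by grinding through all $3^n$ cases of each table.
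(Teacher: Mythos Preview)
Your proposal is correct and follows essentially the same strategy as the paper: soundness by checking axioms and rules, completeness via the Lindenbaum--Tarski algebra together with the semisimplicity results (Lemma~\ref{SM1} and Theorem~\ref{TeoM0S}) to produce a refuting valuation into $\mathbb{C}_3^{\to,\wedge}$. The paper's own proof is considerably terser---it merely says that from $\vDash\alpha$ and the fact that $\mathfrak{Fm}_i/\equiv_i$ is an $iH^\triangle_3$-algebra one gets $q(\alpha)=1$, leaving the appeal to semisimplicity implicit---so your contrapositive argument with the explicit chain $h\circ q_M\circ|\cdot|$ is in fact a more careful rendering of the same idea rather than a different route.
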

\begin{dem}
It is easy to see that every axiom in $i\mathcal{H}_{\triangle}^{3}$ is  valid semantically and the satisfaction  is preserved by the inference rules.
 
Conversely, let us suppose that $\alpha$ is  valid semantically; that is to say, for every valuation $v:\mathfrak{Fm}_i \to \mathbb{C}_{3}^{\to,\wedge}$ we have $v(v)=1$. 
On the other hand, according to Theorem \ref{lindembaum2}, we have $\mathfrak{Fm}_i/\equiv_i$ is a  $iH^{\triangle}_3$-algebra and now, consider the canonical projection $q: \mathfrak{Fm} \to \mathfrak{Fm}_i/\equiv_i$ defined by $q(\alpha)=|\alpha|$. From the latter and the hypothesis, $q(\alpha)=1$ which imply $\vdash_i\alpha$.
\end{dem}

\section{\large Trivalent modal Hilbert algebras with supremum}

In this section, we  introduce and study trivalent modal  Hilbert algebra with supremum, for short  $H^{\vee,\triangle}_3$-algebras. We are going to study the class of  $H^{\vee,\triangle}_3$-algebras in order to present a calculi sound and complete w.r.t.  class of  $H^{\vee,\triangle}_3$-algebras in propositional and first-order version. Our adequacy theorems are based on algebraic previous results and well-known results of Universal Algebra as the first isomorphism Theorem.  Now, consider the following

\begin{defi}\label{def5} An algebra  $\langle A,\to,\vee,\triangle,1\rangle$ is said to be a trivalent modal  Hilbert algebra with supremum (for short, $H^{\vee,\triangle}_3$-algebra) if the following properties hold:
\begin{itemize}
  \item [\rm (1)] the reduct $\langle A,\vee, 1\rangle$ is a join-semilattice with greatest element 1, and the conditions  {\rm (a)} $x\to(x\vee y)=1$ and {\rm (b)} $(x\to y)\to((x\vee y)\to y)=1$ hold. Besides, given $x,y\in A$ such that there exists the infimun of $\{x,y\}$, denote by $x\wedge y$, then $\triangle (x\wedge y)= \triangle x\wedge \triangle y$.
 \item [\rm (2)] The reduct  $\langle A,\to , \triangle, 1\rangle $ is a $\triangle H_3$-algebra.
\end{itemize}
\end{defi}

Next we are going to show some properties that  will be very useful for the rest of this section.

\begin{lem}\label{lat1} For a given $H^{\vee,\triangle}_3$-algebra $A$ and $a,b,c\in A$, then the following holds:
\begin{itemize}
  \item [($H^{\vee,\triangle}_3$1)] If $a\to b=1$, then $a\vee b=b$, ($H^{\vee,\triangle}_3$2) If $a\to c=1$ and $b\to c=1$, then $(a\vee b)\to c=1$,
  \item [($H^{\vee,\triangle}_3$3)] $a\to(a\vee b)=1$, ($H^{\vee,\triangle}_3$4) $(a\to c)\to((b\to c)\to ((a\vee b)\to c))=1$,
 
  \item [($H^{\vee,\triangle}_3$5)] $\triangle(a\vee b)=\triangle a\vee\triangle b$, ($H^{\vee,\triangle}_3$6) $\nabla(a\vee b)=\nabla a\vee\nabla b$.
\end{itemize}
\end{lem}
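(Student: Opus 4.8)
The plan is to prove the six items of Lemma~\ref{lat1} essentially in the order listed, since each one builds on the previous ones and on the axioms of Definition~\ref{def5}, together with the Hilbert-algebra machinery of Lemma~\ref{lema1cap1} available on the reduct $\langle A,\to,\triangle,1\rangle$. For $(H^{\vee,\triangle}_3 1)$ one simply notes that $a\to b=1$ means $a\le b$, so by item {\rm (a)} of Definition~\ref{def5}(1) together with {\rm (b)} one gets $a\vee b\le b$; the reverse inequality $b\le a\vee b$ is again {\rm (a)} with the roles of $x,y$ exchanged (the join is commutative as it is a semilattice operation), whence $a\vee b=b$. For $(H^{\vee,\triangle}_3 3)$ there is nothing to do: it is literally axiom {\rm (a)}. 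For $(H^{\vee,\triangle}_3 2)$, assume $a\le c$ and $b\le c$; then $a\vee b\le c\vee c=c$ by monotonicity of $\vee$ in the semilattice (or: since $c$ is an upper bound of $\{a,b\}$ and $a\vee b$ is the least such), so $(a\vee b)\to c=1$.

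For $(H^{\vee,\triangle}_3 4)$, the cleanest route is to internalise $(H^{\vee,\triangle}_3 2)$: one wants $\vdash (a\to c)\to((b\to c)\to((a\vee b)\to c))=1$ as an identity, and here I would use the deduction-style lemma (H11) and the transitivity properties (H8), (H14), (H15) of Lemma~\ref{lema1cap1} applied to the reduct, combined with axiom {\rm (b)} $(x\to y)\to((x\vee y)\to y)=1$. Concretely: from $(a\to c)$ and $a\vee b$ one can reach $b\vee c$ using {\rm (a)} and monotonicity, and from $(b\to c)$ reach $c$ using {\rm (b)}; assembling these via the Hilbert-algebra identities gives the desired nested implication. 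This is the one genuinely fiddly step, so I expect it to be the main obstacle — not conceptually deep, but it requires carefully chaining the Hilbert-algebra lemmas so that everything reduces to $1$; the trick is to prove the semantic statement $(H^{\vee,\triangle}_3 2)$ first and then "curry" it using (H11) and (H15).

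For $(H^{\vee,\triangle}_3 5)$, I would argue $\le$ in both directions. Since $a\le a\vee b$ and $b\le a\vee b$, applying $(M_i 21)$-type monotonicity of $\triangle$ (which on the reduct follows from axiom (M2) / Lemma~\ref{lem2.1}(11), i.e.\ $\triangle(x\to y)\to(\triangle x\to\triangle y)=1$) gives $\triangle a\le\triangle(a\vee b)$ and $\triangle b\le\triangle(a\vee b)$, hence $\triangle a\vee\triangle b\le\triangle(a\vee b)$ by $(H^{\vee,\triangle}_3 2)$. For the reverse inequality $\triangle(a\vee b)\le\triangle a\vee\triangle b$, the point is that $\triangle a\vee\triangle b$ is below $a\vee b$ (using $\triangle x\le x$, axiom (M1), and monotonicity of $\vee$), so $(\triangle a\vee\triangle b)$ is one of finitely many possible values; more robustly, one uses the semisimplicity/representation already in hand in the analogous $iH^\triangle_3$ case — but here, since we are only assuming the $H^{\vee,\triangle}_3$ axioms, I would instead invoke that the infimum $\triangle a\wedge$ something exists when needed and that $\triangle$ is idempotent ($\triangle\triangle x=\triangle x$, (M3)-consequence), reducing to: $\triangle(a\vee b)=\triangle\triangle(a\vee b)\le$ \dots. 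If a purely equational derivation proves awkward, the fallback is to verify the identity on the two generating chains $\mathbb{C}_2^{\to,\wedge}$ and $\mathbb{C}_3^{\to,\wedge}$ and appeal to the corresponding representation theorem. Finally, $(H^{\vee,\triangle}_3 6)$ follows from $(H^{\vee,\triangle}_3 5)$ by unfolding $\nabla x=(x\to\triangle x)\to\triangle x$: using $(H^{\vee,\triangle}_3 5)$, distributivity of $\to$ over the relevant joins (axiom (H18), $x\to(y\to z)=(x\to y)\to(x\to z)$, and {\rm (b)}), and the already-established behaviour of $\triangle$ on joins, one rewrites $\nabla(a\vee b)$ step by step into $\nabla a\vee\nabla b$, exactly as $(M_i 29)$/Lemma~\ref{lem2.1}(12) were obtained in the infimum setting.
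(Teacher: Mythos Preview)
The paper's own proof of this lemma is literally the single sentence ``It is routine.'' So there is no detailed argument to compare against; your sketch is already far more explicit than what the paper offers, and for items $(H^{\vee,\triangle}_3 1)$--$(H^{\vee,\triangle}_3 4)$ your outline is correct and in the obvious spirit (axiom (a), axiom (b), and the Hilbert-algebra identities of Lemma~\ref{lema1cap1} are exactly the right tools; your idea for $(H^{\vee,\triangle}_3 4)$ of passing through $a\to c\le a\to((b\to c)\to c)\le (a\vee b)\to((b\to c)\to c)$ via (H7), (H10), axiom (b), and (H14) works cleanly).

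There is, however, one genuine hazard in your plan for $(H^{\vee,\triangle}_3 5)$ and $(H^{\vee,\triangle}_3 6)$. Your proposed fallback --- ``verify the identity on the two generating chains and appeal to the representation theorem'' --- is circular in the paper's development: the representation for $H^{\vee,\triangle}_3$-algebras is Theorem~\ref{TeoM0S} (the second one, in Section~3) and Corollary~\ref{genalg}, and the proof of that theorem explicitly invokes $(H^{\vee,\triangle}_3 6)$ (and hence implicitly $(H^{\vee,\triangle}_3 5)$) to check that the map $h$ respects $\vee$. So you cannot use the generators to establish these identities; you must give a direct equational argument from the $\triangle H_3$-axioms (M1)--(M3) together with the join axioms (a), (b). The easy inequalities $\triangle a\vee\triangle b\le\triangle(a\vee b)$ and $\nabla a\vee\nabla b\le\nabla(a\vee b)$ follow, as you say, from monotonicity of $\triangle$ and $\nabla$ plus $(H^{\vee,\triangle}_3 2)$. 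For the reverse inequalities, what the paper presumably has in mind by ``routine'' is the battery of modal identities already derived for the $\triangle H_3$-reduct in \cite{FRS} (and recapitulated in the $(M_i\,\cdot)$ list for the infimum case): in particular $\triangle\triangle x=\triangle x$, $\triangle\nabla x=\nabla x$, $\nabla(x\to y)=\nabla x\to\nabla y$, and the Peirce-type axiom (M3). These let you show that $\triangle a\vee\triangle b$ and $\nabla a\vee\nabla b$ are $\triangle$-fixed (Boolean) elements, and then the reverse inequality can be extracted using axiom (b) and (M3). Your write-up should make that direct derivation explicit rather than gesturing at the subdirect representation.
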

\begin{proof}
It is routine.
\end{proof}

\

For a given $H^{\vee,\triangle}_3$-algebra $A$, we are going to consider the notion of modal deductive system, see Definition \ref{mds}.

\begin{lem}\label{conS} Given a  $H^{\vee,\triangle}_3$-algebra $A$, there exists a lattice-isomorphism between the poset of congruences of $A$ and the poset of the modal deductive systems of $A$.
\end{lem}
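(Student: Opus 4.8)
The plan is to mimic the argument already used for $iH_3^\triangle$-algebras (Lemma~\ref{DmdetCong}), since the only structural fact that changed is the replacement of $\wedge$ by $\vee$ in the signature, and congruences are governed by the implication reduct in both cases. First I would recall that the reduct $\langle A,\to,1\rangle$ is a Hilbert algebra, so by Diego's classical result the map $D\mapsto R(D)$, where $R(D)=\{(x,y): x\to y\in D \text{ and } y\to x\in D\}$, is an order-isomorphism from the lattice of (implicative) deductive systems of $A$ onto $\mathrm{Con}_{\mathrm{Hil}}(A)$, with inverse $\theta\mapsto |1|_\theta$; moreover $|1|_{R(D)}=D$ and $R(|1|_\theta)=\theta$. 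I would then observe that the modal deductive systems are exactly the deductive systems $D$ closed under $\triangle$ (Definition~\ref{mds}), and that this set is closed under arbitrary intersection, hence forms a complete lattice under inclusion; this is the poset appearing in the statement.

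The core of the argument is a back-and-forth verification that the bijection $D\leftrightarrow R(D)$ restricts to a bijection between modal deductive systems and $H^{\vee,\triangle}_3$-congruences. In one direction, if $D$ is a modal deductive system then $R(D)$ is already a Hilbert congruence; I must check it is compatible with $\vee$ and with $\triangle$. Compatibility with $\triangle$ follows exactly as in Lemma~\ref{DmdetCong}: from $x\to y,\,y\to x\in D$ and the theorem (M2)/(M3) apparatus one derives $\triangle x\to\triangle y,\,\triangle y\to\triangle x\in D$ (this is essentially item~11 of Lemma~\ref{lem2.1}, or its $H^{\vee,\triangle}_3$ analogue, together with closure of $D$ under $\triangle$). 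Compatibility with $\vee$ is handled using $(H^{\vee,\triangle}_31)$–$(H^{\vee,\triangle}_34)$ of Lemma~\ref{lat1}: from $a\to b\in D$ one gets $a\vee c\to b\vee c\in D$ by the monotonicity of $\vee$ encoded in those axioms, and then the usual two-sided argument gives $(a,b)\in R(D)$, $(c,d)\in R(D)$ $\Rightarrow$ $(a\vee c,\,b\vee d)\in R(D)$. Conversely, given a $H^{\vee,\triangle}_3$-congruence $\theta$, its $1$-class $|1|_\theta$ is a deductive system by the Hilbert part, and it is closed under $\triangle$ because $x\in|1|_\theta$ means $x\,\theta\,1$, hence $\triangle x\,\theta\,\triangle 1=1$, i.e.\ $\triangle x\in|1|_\theta$; so $|1|_\theta$ is a modal deductive system. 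Since the two assignments are mutually inverse and order-preserving by Diego's theorem, and both restrict correctly, they give the desired lattice-isomorphism; that they are lattice-isomorphisms (not merely order-isomorphisms) is automatic for isomorphisms of posets that happen to be lattices.

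The main obstacle is purely bookkeeping: one must be sure that the $\vee$-compatibility of $R(D)$ genuinely follows from the weak axioms (a), (b) of Definition~\ref{def5} and the derived properties in Lemma~\ref{lat1}, rather than from lattice properties that are not available here (in an $H^{\vee,\triangle}_3$-algebra meets need not exist, so one cannot invoke distributivity). Concretely, the step $a\to b\in D\Rightarrow (a\vee c)\to(b\vee c)\in D$ should be isolated as a small lemma proved from $(H^{\vee,\triangle}_33)$, $(H^{\vee,\triangle}_34)$ and modus ponens inside $D$; once that is in hand the rest is the standard Diego-style computation, and I would simply write ``the proof is analogous to that of Lemma~\ref{DmdetCong}, using Lemma~\ref{lat1} in place of the lattice identities.''
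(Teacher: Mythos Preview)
Your proposal is correct and follows essentially the same approach as the paper, which simply says ``the proof is similar to the Lemma~\ref{DmdetCong}.'' In fact you supply more detail than the paper does: you explicitly isolate the $\vee$-compatibility step $a\to b\in D\Rightarrow (a\vee c)\to(b\vee c)\in D$ (which indeed follows from ($H^{\vee,\triangle}_3$3), ($H^{\vee,\triangle}_3$4) and closure of $D$ under modus ponens), whereas the paper leaves this implicit.
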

\begin{proof}
The proof is similar to the Lemma \ref{DmdetCong}.
\end{proof}

\

The following lemma can be proved in a similar way that was made in the  Section \ref{weak} using the notion of weak deductive system and the notion of  deductive system tied to some element.

\begin{lem}\label{lem18} Let $A$ be a $H^{\vee,\triangle}_3$-algebra then map $\Phi:A\longrightarrow \underset{M\in\mathcal{E}_w(A)}{\prod}A/M$ such that $\Phi(x)(M)=q_{M}(x)$ is a homomorphism; that is to say, the variety of $H^{\vee,\triangle}_3$-algebras is semisimple.

\end{lem}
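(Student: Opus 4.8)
The plan is to mimic, essentially verbatim, the route used in Section \ref{weak} for the variety $i\mathbb{H}_3^{\triangle}$. First I would observe that for a $H^{\vee,\triangle}_3$-algebra $A$ one defines the weak implication $x\rightarrowtail y = \triangle x \to y$ exactly as before, and that the weak deductive systems coincide with the modal deductive systems (this only uses the $\triangle H_3$-reduct, which is common to both settings, together with (M1)). By Lemma \ref{conS} the poset $\mathcal{D}_w(A)$ of modal (= weak) deductive systems is lattice-isomorphic to $\mathrm{Con}(A)$, so it suffices to show that $\bigcap_{M\in\mathcal{E}_w(A)} M = \{1\}$, where $\mathcal{E}_w(A)$ is the set of maximal weak deductive systems: this intersection being trivial is precisely the statement that the canonical map $\Phi$ into the product of the simple quotients $A/M$ is injective, hence that $A$ embeds in a product of simple algebras, i.e. the variety is semisimple.

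To get $\bigcap_{M\in\mathcal{E}_w(A)} M = \{1\}$, I would reproduce Monteiro's argument: properties (wi1), (wi4), (wi5) of $\rightarrowtail$ (which hold here for the same reasons as in the $iH_3^{\triangle}$ case, since they depend only on the Hilbert reduct and $\triangle$) let one show, for any $a\neq 1$, that the set $\{x : a \rightarrowtail x \in D\}$-type constructions produce a weak deductive system not containing $a$; one then extends a suitable weak deductive system tied to $a$ to a maximal one by a Zorn's-lemma argument, using (wi6) $((x\rightarrowtail y)\rightarrowtail x)\rightarrowtail x = 1$ to guarantee that a maximal w.d.s. is in fact tied to some element and conversely that every w.d.s. tied to an element is maximal. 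Consequently every $a\neq 1$ is excluded from some $M\in\mathcal{E}_w(A)$, giving the desired intersection. Finally, $\Phi(x)(M) = q_M(x)$ is a homomorphism into $\prod_{M} A/M$ because each $q_M$ is the canonical projection onto a quotient (Lemma \ref{conS}), and it is injective precisely because its kernel corresponds to $\bigcap_M M = \{1\}$; each $A/M$ is simple because $M$ is maximal. This is exactly the content of semisimplicity.

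The only genuine point requiring care — and the main obstacle — is checking that the weak-implication machinery (wi1)--(wi6) and Monteiro's ``tied to $p$'' characterization of maximal deductive systems go through unchanged when the lattice skeleton is a \emph{join}-semilattice with the properties (a), (b) of Definition \ref{def5} rather than a meet-semilattice as in $iH_3^{\triangle}$-algebras; but since all of those properties are consequences of the Hilbert reduct together with (IT3) and the $\triangle$-axioms — none of which involve $\wedge$ or $\vee$ — the verification is indeed routine, and Lemma \ref{lat1} supplies whatever is needed about the interaction of $\vee$, $\triangle$ and $\nabla$. Hence the proof reduces to a remark that the argument of Section \ref{weak} applies mutatis mutandis, which is presumably why the authors write:

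\begin{proof}
The proof is analogous to that of Lemma \ref{SM1}: one defines $x\rightarrowtail y=\triangle x\to y$, checks that the weak deductive systems coincide with the modal deductive systems, and applies Monteiro's characterization of maximal (weak) deductive systems together with Lemma \ref{conS}. One obtains $\{1\}=\bigcap_{M\in\mathcal{E}_w(A)}M$, so that $\Phi$ is an injective homomorphism of $A$ into the product of the simple algebras $A/M$, $M\in\mathcal{E}_w(A)$. Hence $H^{\vee,\triangle}_3$ is semisimple.
\end{proof}
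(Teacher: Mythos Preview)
Your proposal is correct and follows essentially the same approach as the paper: the paper itself gives no detailed proof for this lemma, merely stating just before it that it ``can be proved in a similar way that was made in the Section \ref{weak} using the notion of weak deductive system and the notion of deductive system tied to some element,'' which is precisely what you outline. Your additional remark that (wi1)--(wi6) and Monteiro's machinery depend only on the $\triangle H_3$-reduct (not on $\wedge$ or $\vee$) is the key observation justifying why the transfer is routine, and it is implicit in the paper's one-line pointer to Section \ref{weak}.
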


The construction of the following homomorphism is fundamental to obtaining the generating algebras of the variety of  $iH^{\triangle}_3$-algebra. Moreover, this homomorphism will play a central {\em role} in the adequacy theorems in a propositional  and first-order version of logic. First, we denote by  $\mathbb{C}_3^{\to,\vee}$ the $iH^{\triangle}_3$-algebra where the support is a chain $\mathbb{C}_{3}$ ($0 <\frac{1}{2}<1$) and $\vee$ is a lattice operation, and $\to$, $\triangle$ are defined in the Table 1 and 2 form Section 1. In addition, it is easy to see that  the algebra $\mathbb{C}_{3}^{\to,\wedge}$ has a unique subalgebra $\mathbb{C}_2^{\to,\wedge}=\langle \{0,1\},\to,\vee,\triangle,1\rangle$.

\begin{theo}\label{TeoM0S} Let  $M$ be a non-trivial maximal modal deductive system of  $H^{\vee,\triangle}_3$-algebra $A$. Let us consider the sets $M_{0}=\{x\in A: \nabla x \notin M\}$ and $M_{1/2}=\{x\in A: x\notin M, \nabla x\in M\}$, and the map $h:A\longrightarrow \mathbb{C}_{3}$ defined by\\
\centerline{$h(x)= \begin{cases}
0 & \mbox{if } x\in M_{0} \\
1/2 & \mbox{if } x\in M_{1/2} \\
1 & \mbox{if} x\in M.
\end{cases}$}
Then, $h$ is a homomorphism such that $h^{-1}(\{1\})=M$.
\end{theo}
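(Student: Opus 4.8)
The plan is to mimic the structure of the analogous theorem for $iH^{\triangle}_3$-algebras (the first Theorem~\ref{TeoM0S} above), transferring the argument from the meet-setting to the join-setting. First I would record the basic facts that make $h$ well-defined: since $M$ is a maximal modal deductive system, the quotient $A/M$ is a simple $H^{\vee,\triangle}_3$-algebra, and by the semisimplicity established in Lemma~\ref{lem18} together with the classification obtained for the meet-case, the only candidates for $A/M$ are (copies of) $\mathbb{C}_3^{\to,\vee}$ and its subalgebra $\mathbb{C}_2^{\to,\vee}$. Consequently $A$ is partitioned into the three (possibly empty) blocks $M_0$, $M_{1/2}$, $M$, and $h$ is exactly the composite of the canonical projection $q_M\colon A\to A/M$ with the embedding $A/M\hookrightarrow\mathbb{C}_3$; so $h^{-1}(\{1\})=M$ is immediate once $h$ is shown to be a homomorphism. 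I would state this reduction explicitly, since it already delivers $h^{-1}(\{1\})=M$ and reduces everything to checking that $h$ respects $\to$, $\vee$ and $\triangle$.

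Next I would verify the homomorphism conditions block by block, exactly as in the companion theorem but replacing the lemma references on $\wedge$ by those on $\vee$. For $\triangle$: from $(A_i7)$-type facts, $\triangle x\leq x$, $\triangle x\in M\iff x\in M$ is not quite right, so instead I use $\nabla$: one shows $x\in M_0\Rightarrow\triangle x\in M_0$, $x\in M_{1/2}\Rightarrow\triangle x\in M_0$ (since $\triangle x\leq x$ forces $\triangle x\notin M$, and $x\in M_{1/2}$ gives $\nabla\triangle x = \triangle x$ by $(M_i30)$, hence $\nabla\triangle x\notin M$), and $x\in M\Rightarrow\triangle x\in M$ by modality of $M$; this matches the table for $\triangle$ in Table~2. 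For $\vee$ the key facts are Lemma~\ref{lat1}, items $(H^{\vee,\triangle}_31)$–$(H^{\vee,\triangle}_34)$ and especially $(H^{\vee,\triangle}_36)$, $\nabla(a\vee b)=\nabla a\vee\nabla b$: if $x\in M$ or $y\in M$ then $x\vee y\in M$ by $(H^{\vee,\triangle}_33)$; if both $x,y\in M_0$ then $\nabla x,\nabla y\notin M$, and since $M$ is irreducible (prime) as the class of $1$ in a simple quotient, $\nabla x\vee\nabla y=\nabla(x\vee y)\notin M$, so $x\vee y\in M_0$; the remaining case ($x\in M_{1/2}$, $y\notin M$) gives $\nabla(x\vee y)=\nabla x\vee\nabla y\in M$ but $x\vee y\notin M$ (if $x\vee y\in M$ then, using $(H^{\vee,\triangle}_32)$ with $x\vee y\to x$ or the primeness, one of $x,y$ lies in $M$), so $x\vee y\in M_{1/2}$. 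For $\to$: this is the one genuinely new piece compared to the meet-case, where $\wedge$ was handled and $\to$ was only alluded to; here I would argue by cases on $h(x),h(y)$ against Table~1, using that $x\to y\in M\iff x\leq_M y$ in the simple quotient, that $\triangle(x\to y)$ and $\nabla(x\to y)$ are controlled by $(M_i17)$ and $(M_i29)$ ($\nabla(\alpha\to\beta)\equiv\nabla\alpha\to\nabla\beta$), and the trivalent-chain identity $T_3=1$ coming from (IT3).

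The main obstacle I anticipate is the case analysis for $\to$ when $h(x)=1$, $h(y)=\tfrac12$, which must yield $h(x\to y)=\tfrac12$: one needs $x\to y\notin M$ together with $\nabla(x\to y)\in M$. The first requires showing that $y\notin M$ and $x\in M$ prevent $x\to y\in M$ (using that $M$ is a deductive system: $x\in M$ and $x\to y\in M$ would give $y\in M$, a contradiction), and the second follows from $(M_i29)$ since $\nabla(x\to y)=\nabla x\to\nabla y=1\to 1=1\in M$ because $x\in M\subseteq\{\,z:\nabla z\in M\,\}$ and $y\in M_{1/2}$ gives $\nabla y\in M$. Symmetrically the case $h(x)=\tfrac12$, $h(y)=0$ must give $h(x\to y)=0$, i.e. $\nabla(x\to y)\notin M$, which reduces to $\nabla x\to\nabla y\notin M$ with $\nabla x\in M$, $\nabla y\notin M$ — again a deductive-system argument. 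Once these are in place the other entries of Table~1 are routine, and I would close by noting, as in the companion theorem, that all remaining cases "can be done in a similar manner", so that $h$ is the desired $H^{\vee,\triangle}_3$-homomorphism with $h^{-1}(\{1\})=M$. $\hfill\square$
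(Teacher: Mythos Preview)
Your overall strategy --- a block-by-block case analysis checking that $h$ respects $\vee$, $\to$ and $\triangle$ --- is exactly the paper's approach, and your treatment of $\triangle$ and of $\to$ (especially the two ``hard'' cases $h(x)=1,\,h(y)=\tfrac12$ and $h(x)=\tfrac12,\,h(y)=0$, handled via $(M_i29)$ and the deductive-system property of $M$) is fine and in fact more explicit than what the paper writes.

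There is, however, a genuine circularity in your opening ``reduction''. You claim that $A/M$ is already known to be (a copy of) $\mathbb{C}_3^{\to,\vee}$ or $\mathbb{C}_2^{\to,\vee}$, invoking ``the classification obtained for the meet-case''. That classification is for the variety $i\mathbb{H}^{\triangle}_3$, not for $H^{\vee,\triangle}_3$-algebras; these are different varieties (the paper even remarks that not every $H^{\vee,\triangle}_3$-algebra has infima), so the meet-case result does not transfer. In the paper's logical order, the identification of the simple $H^{\vee,\triangle}_3$-algebras is Corollary~\ref{genalg}, which is \emph{derived from} the present theorem. So the sentence ``$h$ is exactly the composite of $q_M$ with the embedding $A/M\hookrightarrow\mathbb{C}_3$'' assumes the conclusion. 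Fortunately you do not actually use this reduction later --- drop it and simply note that $M_0,M_{1/2},M$ partition $A$ by definition, so $h$ is well-defined and $h^{-1}(\{1\})=M$ is immediate from the definition of $h$.

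A second, smaller gap: in the $\vee$-cases you repeatedly invoke ``$M$ is irreducible (prime)'' to get $a\vee b\in M\Rightarrow a\in M$ or $b\in M$. Primeness is not part of the hypothesis and your justification (``as the class of $1$ in a simple quotient'') is again circular. The paper avoids this by arguing directly from $(H^{\vee,\triangle}_3\,4)$, i.e.\ $(a\to c)\to((b\to c)\to((a\vee b)\to c))=1$, together with maximality of $M$: for instance, to show $x,y\in M_0\Rightarrow x\vee y\in M_0$, assume $\nabla(x\vee y)=\nabla x\vee\nabla y\in M$, instantiate $(H^{\vee,\triangle}_3\,4)$ with $c=\nabla x$, and use maximality of $M$ (via the weak implication and $\triangle\nabla y=\nabla y$) to force $\nabla x\in M$, a contradiction. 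Alternatively, you may legitimately cite Monteiro's linearity theorem for trivalent Hilbert algebras (irreducible deductive systems satisfy $a\to b\in M$ or $b\to a\in M$), which together with $(H^{\vee,\triangle}_3\,4)$ does yield primeness; but you must say so explicitly rather than asserting primeness as a structural fact about simple quotients.
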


\begin{dem} 
We shall prove only that $h(x\vee y)= h(x)\vee h(y)$, for the rest of the proof can be done in a similar manner.

\begin{itemize}
  \item [(1)] Let $x\in M$ and $y\in A$. Taking into account ($H^{\vee,\triangle}_3$3), we have that $x\to(x\vee y)=1$. Thus, from $D_1)$ and $D_2)$ then $x\vee y\in M$.
  \item [(3)] Let us consider $x,y\in M_0$ and suppose that $\nabla(x\vee y)\in M$, then by ($H^{\vee,\triangle}_3$6) we have that $\nabla x\vee \nabla y\in M$. Thus, according to $(H_3^{\vee,\triangle} 4)$ we infer that $(\nabla x\to\nabla x)\to((\nabla y\to\nabla x)\to((\nabla x\vee\nabla y)\to\nabla x))=1$. So,  from  $D_1)$,  $D_2)$ and (H6) we can obtain that $(\nabla y\to\nabla x)\to ((\nabla x\vee\nabla y)\to\nabla x)\in M$. Since $\nabla x\notin M$, we can infer that $\triangle\nabla y\to \nabla x\in M$ and so,  we have  $\nabla y\to\nabla x\in M$. Form the latter and  $D_2)$, we can write $(\nabla x\vee \nabla y)\to\nabla x\in M$. Therefore,  $\nabla x\in M$ which is impossible, then  $\nabla(x\vee y)\notin M$.

  \item [(4)] If $x\in M_0$ and $y\in M_{1/2}$,  since $\nabla y\to(\nabla x\vee\nabla y)=1$ and   $\nabla y\in M$ we can infer that  $\nabla x\vee\nabla y\in M$. Now, let us suppose that $x\vee y\in M$. From ($H^{\vee,\triangle}_3$4) we can write  $(x\to y)\to((y\to y)\to((x\vee y)\to y))=1$. Thus, $x\to y\in M$ and then, $y\in M$ which is a contradiction. Therefore, $x\vee y\in M_{1/2}$.
  \item [(5)] If $x\in M_{1/2}$ and  $y\in M_0$ we can prove that $x\vee y\in M_{1/2}$ in a similar way to (4).
  \item [(6)] Suppose that $x\in M_{1/2}$ and $y\in M_{1/2}$, then from  $(H_3^{\vee,\triangle} 6)$ we have that $\nabla (x\vee y)\in M$. On the other hand, let  us suppose  $x\vee y\in M$, thus by  $(H_3^{\vee,\triangle} 4)$ we infer that $(x\to x)\to((x\to y)\to((x\vee y)\to x))=1$. Hence, since  $x\to y\in M$ we can write $x\in M$ which is a contradiction. Therefore,  $x\vee y\in M_{1/2}$. 
\end{itemize}\end{dem}

According to Lemma \ref{lem18} and Theorem \ref{TeoM0S}, and well-known facts about universal algebra, we have proved the following theorem and corollary.


\begin{coro}\label{genalg} The variety of $H^{\vee,\triangle}_3$-algebras is semisimple. Besides, the algebras $\mathbb{C}_3^{\to,\vee}=\langle \{0,\frac{1}{2},1\},\to,\vee,\triangle,1\rangle$ and $\mathbb{C}_2^{\to,\vee}=\langle \{0,1\},\to,\vee,\triangle,1\rangle$ are the unique simple algebras.
\end{coro}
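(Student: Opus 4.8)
The plan is to derive Corollary~\ref{genalg} from Lemma~\ref{lem18}, Theorem~\ref{TeoM0S} and the first isomorphism theorem of Universal Algebra, in complete analogy with the treatment of $i\mathbb{H}^{\triangle}_3$. Semisimplicity is essentially already in hand: Lemma~\ref{lem18} provides the embedding $\Phi\colon A\to\prod_{M\in\mathcal{E}_w(A)}A/M$, and by Lemma~\ref{conS} together with the identification of weak and modal deductive systems recalled in Section~\ref{weak}, the maximality of each $M\in\mathcal{E}_w(A)$ forces $A/M$ to have only the two trivial modal deductive systems, hence only the two trivial congruences; thus every $A/M$ is simple and $A$ is a subdirect product of simple algebras. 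So the substance of the corollary is the classification of the simple members.

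First I would check that $\mathbb{C}_2^{\to,\vee}$ and $\mathbb{C}_3^{\to,\vee}$ are simple. The first is two-element, hence trivially simple. For the second, a direct inspection of Tables~1 and~2 shows that $\{1\}$ is the only proper modal deductive system of $\mathbb{C}_3^{\to,\vee}$: any deductive system containing $\frac12$ must, by (D3), contain $\triangle\frac12=0$, and then (D2) applied to $0$ and $0\to\frac12=1$ forces it to contain $\frac12$ and in fact all of $\mathbb{C}_3$; and a deductive system containing $0$ already contains everything for the same reason. Hence by Lemma~\ref{conS} the congruence lattice of $\mathbb{C}_3^{\to,\vee}$ is the two-element chain and $\mathbb{C}_3^{\to,\vee}$ is simple. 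Conversely, let $A$ be a simple $H^{\vee,\triangle}_3$-algebra. By Lemma~\ref{conS} its only modal deductive systems are $\{1\}$ and $A$, so $\{1\}$ is a non-trivial maximal modal deductive system; applying Theorem~\ref{TeoM0S} with $M=\{1\}$ yields a homomorphism $h\colon A\to\mathbb{C}_3^{\to,\vee}$ with $h^{-1}(\{1\})=\{1\}$, that is, an injective homomorphism. Therefore $A$ is isomorphic to a subalgebra of $\mathbb{C}_3^{\to,\vee}$; since $|A|\ge 2$ and the only subalgebras of $\mathbb{C}_3^{\to,\vee}$ with more than one element are $\mathbb{C}_2^{\to,\vee}$ (the subset $\{0,1\}$, which is closed under $\to,\vee,\triangle$) and $\mathbb{C}_3^{\to,\vee}$ itself (the subsets $\{0,\frac12\}$ and $\{\frac12,1\}$ fail to contain $0\to 0=1$, resp. $\triangle\frac12=0$), we conclude $A\cong\mathbb{C}_2^{\to,\vee}$ or $A\cong\mathbb{C}_3^{\to,\vee}$.

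The routine verifications I would defer are the short computations with the operation tables: that $\{1\}$ is maximal (and not just maximal \emph{modal}) among the weak deductive systems of a simple algebra, and the enumeration of the subalgebras of $\mathbb{C}_3^{\to,\vee}$. The one point that deserves genuine care, and which I expect to be the main obstacle, is matching the hypothesis of Theorem~\ref{TeoM0S} to the situation at hand, i.e. confirming that in a simple $A$ the set $\{1\}$ is exactly a ``non-trivial maximal modal deductive system'' in the sense used there; this rests on Lemma~\ref{conS} and on the coincidence of weak and modal deductive systems from Section~\ref{weak}, after which the argument is a verbatim transcription of the proof of the corresponding result for $i\mathbb{H}^{\triangle}_3$.
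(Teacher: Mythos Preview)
Your proposal is correct and follows exactly the route the paper intends: the paper's own ``proof'' is merely the sentence that the corollary follows from Lemma~\ref{lem18}, Theorem~\ref{TeoM0S} and the first isomorphism theorem, and you have spelled out precisely those details (semisimplicity via the embedding $\Phi$, classification of simples by applying Theorem~\ref{TeoM0S} with $M=\{1\}$ to get an embedding into $\mathbb{C}_3^{\to,\vee}$, and the enumeration of its subalgebras). There is nothing to add.
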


Let us notice that not every $H^{\vee,\triangle}_3$-algebra has infimum. To seeing that, it is enough to see some subalgebras of  $\mathbb{C}_3^{\to,\vee}\times \mathbb{C}_3^{\to,\vee}$ where $\times$ is the direct product.

\subsection{\large Propositional calculus for $H^{\vee,\triangle}_3$-algebras}

Let $\mathfrak{Fm}_s=\langle Fm, \vee,\to,\triangle\rangle$ be the absolutely free algebra over  $\Sigma=\{\to, \vee, \triangle \}$  generated by a set $Var=\{p_1,p_2,\cdots\}$ of numerable variables. Also, sometimes we say that $\mathfrak{Fm}_s$ is a language over $Var$ and $\Sigma$.  Consider now the following logic:

\begin{defi} We denote by  $\mathcal{H}_{\vee,\triangle}^{3}$ the Hilbert calculus determined by the following axioms and inference rules, where  $\alpha,\beta,\gamma,...\in Fm$:

{\bf{Axiom schemas}}
\begin{itemize}
  \item [(Ax1)] $\alpha\to(\beta\to\alpha)$,
  \item [(Ax2)] $(\alpha\to(\beta\to\gamma)\to((\alpha\to\beta)\to(\alpha\to\gamma))$,
  \item [(Ax3)] $((\alpha\to(\beta\to\gamma))\to(((\gamma\to\alpha)\to\gamma)\to\gamma)$,
  \item [(Ax4)] $\alpha\to(\alpha\vee\beta)$,
  \item [(Ax5)] $\beta\to(\alpha\vee\beta)$,
  \item [(Ax6)] $(\alpha\to\gamma)\to((\beta\to\gamma)\to((\alpha\vee\beta)\to\gamma))$,
  \item [(Ax7)] $\triangle\alpha\to\alpha$,
  \item [(Ax8)] $\triangle(\triangle\alpha\to\beta)\to(\triangle\alpha\to\triangle\beta)$,
  \item [(Ax9)] $((\beta\to\triangle\beta)\to(\alpha\to\triangle(\alpha\to\beta)))\to\triangle(\alpha\to\beta)$,
  \item [(Ax10)] $((\triangle\alpha\to\beta)\to\gamma)\to((\triangle\alpha\to\gamma)\to\gamma)$.

\end{itemize}

\noindent {\bf{Inference rules}}

(MP) $\dfrac{\alpha, \alpha\to\beta }{\beta}$, (NEC) $\dfrac{\alpha}{\triangle\alpha}$.

Assume that $\nabla \alpha := (\alpha \to \triangle \alpha) \to \triangle \alpha$.

\end{defi}

Let $\Gamma \cup \{\alpha \}$ be a set formulas of $\mathcal{H}_{\vee,\triangle}^{3}$, we  define the derivation of $\alpha$ from $\Gamma$  in usual way and denote by $\Gamma\vdash_\vee \alpha$. 

\begin{lem} The following rules are derivable in $\mathcal{H}_{\vee,\triangle}^3$:
\begin{itemize}
  \item [($P_s$1)] $\vdash_\vee\{(x\vee y)\to(y\vee x)\}$
  \item [($P_s$2)] $\{x\to y\}\vdash_\vee\{(x\vee z)\to(y\vee z)\}$
  \item [($P_s$3)] $\{x\to y,u\to v\}\vdash_\vee\{(x\vee u)\to(y\vee v)\}$
  \item [($R_\vee 3$)] $\dfrac{\alpha\to\beta}{(\alpha\vee\beta)\to\beta}$
\end{itemize}
\end{lem}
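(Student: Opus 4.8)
The plan is to derive each of the four rules inside $\mathcal{H}_{\vee,\triangle}^3$ using only the axioms (Ax1)--(Ax6) governing $\to$ and $\vee$ together with (MP), exactly as one does for positive logic with disjunction; the modal axioms and (NEC) play no role here. Throughout I would freely use the derived $\to$-rules that are already available in any Hilbert algebra / Hilbert calculus (hypothetical syllogism $\{\alpha\to\beta,\beta\to\gamma\}\vdash\alpha\to\gamma$, the deduction-style rule $\{\alpha\to(\beta\to\gamma)\}\vdash\beta\to(\alpha\to\gamma)$, permutation of antecedents, and $\vdash\alpha\to\alpha$), since these are the analogues of the properties collected in Lemma~\ref{lema1cap1} and the earlier technical lemma, and their derivations from (Ax1), (Ax2), (MP) are standard.

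For ($P_s$1), instantiate (Ax5) as $x\to(y\vee x)$ and (Ax4) as $y\to(y\vee x)$, then feed both into (Ax6) with $\gamma:=y\vee x$, namely $(x\to(y\vee x))\to((y\to(y\vee x))\to((x\vee y)\to(y\vee x)))$; two applications of (MP) give $\vdash_\vee (x\vee y)\to(y\vee x)$. For ($P_s$2), assume $x\to y$; from (Ax4) $y\to(y\vee z)$ and hypothetical syllogism we get $x\to(y\vee z)$, while (Ax5) gives $z\to(y\vee z)$; applying (Ax6) with $\gamma:=y\vee z$ and two (MP) steps yields $(x\vee z)\to(y\vee z)$. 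Rule ($P_s$3) is then obtained by composing: from $x\to y$ and ($P_s$2) we get $(x\vee u)\to(y\vee u)$; from $u\to v$ and ($P_s$2) (in the ``other coordinate'', i.e. first apply ($P_s$1) to swap, use ($P_s$2), swap back with ($P_s$1) again, and chain with hypothetical syllogism) we get $(y\vee u)\to(y\vee v)$; hypothetical syllogism finishes it.

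For ($R_\vee 3$), assume $\alpha\to\beta$. Instantiate (Ax6) with $\gamma:=\beta$ to get $(\alpha\to\beta)\to((\beta\to\beta)\to((\alpha\vee\beta)\to\beta))$; apply (MP) with the hypothesis, then (MP) again with $\vdash_\vee\beta\to\beta$, obtaining $(\alpha\vee\beta)\to\beta$. (This is essentially condition (b) of Definition~\ref{def5} read off at the syntactic level, and it is the reason (Ax6) is the right disjunction axiom.)

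I do not expect a genuine obstacle here: all four items are routine consequences of the positive-with-disjunction axioms, and the only mild care needed is in ($P_s$3), where one must act on the second disjunct by conjugating ($P_s$2) with the commutativity rule ($P_s$1) before chaining. Accordingly, in the paper itself the natural move is simply to record the derivations in the terse style used for the preceding lemmas, i.e. a one-line ``It is routine, using (Ax4)--(Ax6) and (MP)'' proof, with the instantiations of (Ax6) above as the only thing worth spelling out.
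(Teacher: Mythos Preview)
Your proposal is correct and matches the paper's approach: the paper records these as routine (the proof is literally ``It is routine''), and your explicit derivations from (Ax4)--(Ax6) via (MP), together with the standard Hilbert-calculus $\to$-rules, are exactly the intended ones. Your closing remark that a one-line ``routine'' proof is the natural move is precisely what the paper does.
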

\begin{proof}
It is routine.
\end{proof}

Now, we denote by $\alpha\equiv_\vee\beta$ if conditions  $\vdash_\vee\alpha\to\beta$ and  $\vdash_\vee\beta\to\alpha$ hold. Then,

\begin{lem} $\equiv_\vee$ is a congruence on $\mathfrak{Fm}_s$.

\end{lem}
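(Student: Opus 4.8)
The plan is to show that $\equiv_\vee$ is an equivalence relation that is moreover compatible with the three operations $\to$, $\vee$ and $\triangle$ of $\mathfrak{Fm}_s$. The overall strategy mirrors exactly the proof given for $\equiv_i$ in the previous section, so most of the work is a matter of invoking the syntactic lemmas already established for $\mathcal{H}_{\vee,\triangle}^3$.

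First I would check that $\equiv_\vee$ is an equivalence relation. Reflexivity amounts to $\vdash_\vee \alpha\to\alpha$, which follows from (Ax1), (Ax2) and (MP) in the usual Hilbert-style way (the analogue of $(M_i 1)$). Symmetry is immediate from the symmetric form of the definition of $\equiv_\vee$. Transitivity requires that from $\vdash_\vee\alpha\to\beta$ and $\vdash_\vee\beta\to\gamma$ we obtain $\vdash_\vee\alpha\to\gamma$; this is the derivable rule of transitivity of implication (the analogue of $(M_i 8)$), provable from (Ax1), (Ax2), (MP) as in positive implicational logic.

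Next I would verify compatibility with each connective. For $\to$: assuming $\alpha\equiv_\vee\beta$ and $\gamma\equiv_\vee\delta$, I need $\vdash_\vee(\alpha\to\gamma)\to(\beta\to\delta)$ and its converse. Using the derivable rules $\{\alpha\to\beta\}\vdash_\vee(\gamma\to\alpha)\to(\gamma\to\beta)$ and $\{\alpha\to\beta\}\vdash_\vee(\beta\to\gamma)\to(\alpha\to\gamma)$ (the analogues of $(M_i 7)$ and $(M_i 6)$, available since the $\{\to\}$-fragment of $\mathcal{H}_{\vee,\triangle}^3$ is just positive implicational logic), together with transitivity, one chains $\beta\to\delta$ from $\beta\to\alpha$, $\alpha\to\gamma$ and $\gamma\to\delta$; symmetrically for the other direction. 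For $\vee$: from $\alpha\equiv_\vee\beta$ and $\gamma\equiv_\vee\delta$ I invoke the derived rule $(P_s3)$, $\{x\to y, u\to v\}\vdash_\vee(x\vee u)\to(y\vee v)$, applied once in each direction, to get $\alpha\vee\gamma\equiv_\vee\beta\vee\delta$. For $\triangle$: from $\alpha\equiv_\vee\beta$, i.e. $\vdash_\vee\alpha\to\beta$ and $\vdash_\vee\beta\to\alpha$, I apply (NEC) to get $\vdash_\vee\triangle(\alpha\to\beta)$, then (Ax8)-style reasoning together with (MP) yields $\vdash_\vee\triangle\alpha\to\triangle\beta$ — this is the derived rule "$\{\alpha\to\beta\}\vdash_\vee\triangle\alpha\to\triangle\beta$", the analogue of $(M_i 21)$ — and symmetrically $\vdash_\vee\triangle\beta\to\triangle\alpha$, hence $\triangle\alpha\equiv_\vee\triangle\beta$.

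I do not expect any real obstacle here; the statement is essentially bookkeeping once the derived rules of the calculus are in hand, and the only mildly delicate point is making sure that the $\triangle$-compatibility genuinely uses (Ax8) and (NEC) rather than a property (like full normality $\vdash_\vee\triangle(\alpha\to\beta)\to(\triangle\alpha\to\triangle\beta)$) that must itself first be derived — but that derivation is exactly the content of the lemma listing the derived rules of $\mathcal{H}_{\vee,\triangle}^3$, which I may assume. Accordingly the proof will simply read "It follows, as in the case of $\equiv_i$, from reflexivity and transitivity of $\to$, from $(P_s3)$, and from (Ax8) together with (NEC)."
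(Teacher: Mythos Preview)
Your proposal is correct and matches the paper's approach: the paper's proof is even terser, stating only that the $\vee$-compatibility is the new case and that it follows immediately from $(P_s3)$, with the equivalence-relation properties and the compatibility with $\to$ and $\triangle$ tacitly carried over from the proof for $\equiv_i$ in the previous section. Your more explicit unpacking of each case is exactly in that spirit.
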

\begin{dem} We only have to prove that if  $\alpha \equiv_\vee \beta$ and $\gamma\equiv_\vee \delta$, then $\alpha\vee \gamma \equiv_\vee \beta \vee \delta$, which follows immediately from ($P_s$3).
\end{dem}

Since the $\equiv_\vee $ is a congruence, it allows to define the quotient algebra $\mathfrak{Fm}_s/\equiv_\vee$ that is so-called the Lindenbaum-Tarski algebra.

\begin{theo}\label{linP}  The Lindenbaum algebra $\mathfrak{Fm}_s/\equiv_\vee$ of $H^{\vee,\triangle}_3$ is a $H^{\vee,\triangle}_3$-algebra by defining: $|\alpha|\to|\beta|=|\alpha\to\beta|$, $|\alpha|\vee|\beta|=|\alpha\vee\beta|$ and $1=|\beta \to \beta|=\{\alpha\in\mathfrak{Fm}_s:\vdash_\vee\alpha\}$, where $|\delta|$ denotes the equivalence class of the formula  $\delta$.
\end{theo}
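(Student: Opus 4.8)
The plan is to show that $\mathfrak{Fm}_s/\equiv_\vee$ satisfies all the defining conditions of a $H^{\vee,\triangle}_3$-algebra as laid out in Definition \ref{def5}, transferring syntactic derivability facts into algebraic identities via the congruence $\equiv_\vee$. First I would verify that the prescribed operations are well-defined on equivalence classes; this is exactly the content of the previous lemma stating that $\equiv_\vee$ is a congruence, so it only remains to check the axioms. As a preliminary, I would establish that the relation $|\alpha|\leq|\beta|$ iff $\vdash_\vee\alpha\to\beta$ is a partial order with greatest element $|\beta\to\beta|$: reflexivity is $(M1)$-type ($\vdash_\vee\alpha\to\alpha$), transitivity follows from the derivable cut rule (the analogue of $(M_i8)$), and antisymmetry is immediate from the definition of $\equiv_\vee$. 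That $1=\{\alpha:\vdash_\vee\alpha\}$ is the top element follows from (Ax1) together with (MP).

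Next I would check the Hilbert-algebra reduct conditions (H1), (H2), (H3) for $(\mathfrak{Fm}_s/\equiv_\vee,\to,1)$: (H1) and (H2) are immediate from the axiom schemas (Ax1) and (Ax2) via the fact that a provable formula has class $1$, and (H3) is just antisymmetry of $\leq$ restated. Then condition (1) of Definition \ref{def5}: that $\langle A,\vee,1\rangle$ is a join-semilattice with top follows from $(P_s1)$, $(P_s2)$, $(P_s3)$ plus (Ax4)--(Ax6) (commutativity, associativity and idempotency of $\vee$ up to $\equiv_\vee$, with $1$ absorbing), while clauses (a) $x\to(x\vee y)=1$ and (b) $(x\to y)\to((x\vee y)\to y)=1$ are literally the images of (Ax4) and $(R_\vee 3)$/(Ax6). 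For the modal clause $\triangle(x\wedge y)=\triangle x\wedge\triangle y$ I would note that whenever the infimum of two classes exists in the Lindenbaum algebra it is witnessed by a formula, and then use (NEC) together with the derivable monotonicity rule $\{\alpha\to\beta\}\vdash_\vee\triangle\alpha\to\triangle\beta$ (the analogue of $(M_i21)$) to squeeze $\triangle|\alpha\wedge\beta|$ between $\triangle|\alpha|\wedge\triangle|\beta|$ from both sides. Finally, condition (2) — that the reduct $\langle A,\to,\triangle,1\rangle$ is a $\triangle H_3$-algebra — reduces to checking (IT3) from (Ax3) and the modal identities (M1), (M2), (M3) of Definition \ref{modal1}, which come from (Ax7), (Ax9), (Ax10) together with the derivable facts $\vdash_\vee\triangle\alpha\equiv_\vee\triangle\triangle\alpha$ and the $(M_i\cdot)$-style consequences; the argument is the same one used in Theorem \ref{lindembaum2} for the infimum case.

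The step I expect to require the most care is the modal clause $\triangle(x\wedge y)=\triangle x\wedge\triangle y$ inside condition (1), precisely because the $\vee$-fragment has no infimum operation in the signature, so "$x\wedge y$" refers only to those pairs of classes that happen to have a meet; one must argue entirely at the level of the order $\leq$ (equivalently, of joint lower bounds of $|\alpha|$ and $|\beta|$) rather than by manipulating a term, and show that $\triangle$ of such a meet is again the meet of the $\triangle$-images. Everything else is a bookkeeping translation of the axiom schemas and the routine derivable rules established in the preceding lemmas, so once the order-theoretic meet argument is in place the proof is complete.
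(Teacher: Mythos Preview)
Your overall strategy matches the paper's: verify each clause of Definition~\ref{def5} by translating the axiom schemas and derived rules through the congruence $\equiv_\vee$. The paper's own proof is in fact much terser than yours --- it only explicitly treats the join-semilattice structure and clauses (a), (b) of Definition~\ref{def5}(1), citing (Ax4)--(Ax6) and $(R_\vee 3)$, and silently defers the Hilbert- and $\triangle H_3$-reduct verifications to the parallel argument already carried out in Theorem~\ref{lindembaum2}. On those parts your outline is correct and essentially the same as the paper's, only more explicit.

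There is, however, a genuine gap in your handling of the modal-meet clause $\triangle(x\wedge y)=\triangle x\wedge\triangle y$ from Definition~\ref{def5}(1). Monotonicity of $\triangle$ (the analogue of $(M_i21)$) shows that $\triangle|\gamma|$ is a common lower bound of $\triangle|\alpha|$ and $\triangle|\beta|$ whenever $|\gamma|=|\alpha|\wedge|\beta|$; it does \emph{not} show that $\triangle|\gamma|$ is the \emph{greatest} such bound. From $|\delta|\leq\triangle|\alpha|$ and $|\delta|\leq\triangle|\beta|$ you obtain, via (Ax7), $|\delta|\leq|\alpha|$ and $|\delta|\leq|\beta|$, hence $|\delta|\leq|\gamma|$; but monotonicity then only yields $\triangle|\delta|\leq\triangle|\gamma|$, not the required $|\delta|\leq\triangle|\gamma|$. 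Your ``squeeze from both sides'' via (NEC) and monotonicity therefore delivers only one inequality. Closing the argument requires the Boolean-like behaviour of $\triangle$-values encoded in (Ax8)--(Ax10), not monotonicity alone. (It is worth noting that the paper's own proof does not mention this clause at all, so your instinct that it is the delicate point is well founded; it is the specific argument you propose for it that is incomplete.)
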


\begin{dem}
We only have to prove $\mathfrak{Fm}_s/\equiv_\vee$ is a join-semilattice and the axioms (a) and (b) from  Definition \ref{def5} (2). So, the first part follows from (Ax4), (Ax5) and (Ax6), and the second one follows from axioms (Ax4) and $(R_\vee3)$.
\end{dem}


\

On the other hand, let us remark that for the propositional calculus $i\mathcal{H}_{\triangle}^{3}$  is possible to define the conjunction connective by $\alpha\vee \beta := ((\alpha \to \beta)\to \beta) \wedge ((\beta\to \alpha )\to \alpha)$ (see \cite[pag. 170]{IT}). Thus, taking into account the Theorem \ref{lindembaum2}, it is easy to see that $\mathfrak{Fm}_i /\equiv_i$ is a Hilbert algebra with supremum where $|\alpha|\vee|\beta|=|\alpha\vee\beta|$ for every formula $\alpha$ and $\beta$. Therefore, the calculus $i\mathcal{H}_{\triangle}^{3}$ is a $\{\to,\wedge,\vee,\triangle\}$-fragment of a $3$-valued \L ukasiewicz logic where $\to$ is the $3$-valued G\"odel implication.

 Now, we are going to introduce some useful notions in order to prove a strong Completeness Theorem for $\mathcal{H}_{\vee,\triangle}^{3}$ w.r.t. the class of $H^{\vee,\triangle}_3$-algebras.

Recall that  a logic  defined over a  language ${\cal S}$ is a system $\mathcal{L}=\langle For, \vdash\rangle$ where $For$ is the set of formulas over ${\cal S}$ and the relation  $\vdash \subseteq {\cal P} (For) \times For$, ${\cal P}(A)$ is the set of all subsets of $A$. The logic $\mathcal{L}$ is said to be a tarskian if it satisfies the following properties, for every set $\Gamma\cup\Omega\cup\{\varphi,\beta\}$ of formulas:
\begin{itemize}	
  \item [\rm (1)] if $\alpha\in\Gamma$, then $\Gamma\vdash\alpha$,
  \item [\rm (2)] if $\Gamma\vdash\alpha$ and $\Gamma\subseteq\Omega$, then $\Omega\vdash\alpha$,
  \item [\rm (3)] if $\Omega\vdash\alpha$ and $\Gamma\vdash\beta$ for every $\beta\in\Omega$, then $\Gamma\vdash\alpha$.
\end{itemize}

\noindent A logic $\mathcal{L}$ is said to be finitary if it satisfies the following:

\begin{itemize}
  \item [\rm (4)] if $\Gamma\vdash\alpha$, then there exists a finite subset $\Gamma_0$ of $\Gamma$ such that $\Gamma_0\vdash\alpha$.
\end{itemize}

\begin{defi} \label{maxi} Let $\mathcal{L}$ be a tarskian logic and let $\Gamma\cup\{\varphi\}$ be a set of formulas, we say that $\Gamma$ is a theory. Besides,  $\Gamma$ is said to be a consistent theory if there is $\varphi$ such that $\Gamma\not\vdash_{\mathcal{L}}\varphi$. Besides, we say that $\Gamma$ is a maximal consistent theory if  $\Gamma,\psi\vdash_{\mathcal{L}}\varphi$ for any $\psi\notin\Gamma$ and in this case, we say $\Gamma$ non-trivial maximal respect to $\varphi$.
\end{defi}

A set of formulas $\Gamma$ is closed in $\mathcal{L}$ if the following property holds for every formula $\varphi$: $\Gamma\vdash_{\mathcal{L}}\varphi$ if and only if $\varphi\in\Gamma$. It is easy to see that any maximal consistent theory is a closed one.

\begin{lem}[Lindenbaum-\L os] \label{exmaxnotr} Let $\mathcal{L}$ be a tarskian and finitary logic. Let $\Gamma\cup\{\varphi\}$ be a set of formulas  such that $\Gamma\not\vdash\varphi$. Then, there exists a set of formulas $\Omega$ such that $\Gamma\subseteq\Omega$ with $\Omega$ maximal non-trivial with respect to $\varphi$ in $\mathcal{L}$.
\end{lem}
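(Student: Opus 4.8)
The plan is to carry out the classical Lindenbaum construction, using the fact that $\mathcal{L}$ is finitary to handle the (possibly infinite) chain of theories that arises. First I would fix $\Gamma$ and $\varphi$ with $\Gamma\not\vdash\varphi$, and consider the family
$$\mathcal{C}=\{\Delta : \Gamma\subseteq\Delta \text{ and } \Delta\not\vdash\varphi\},$$
partially ordered by inclusion. This family is nonempty since $\Gamma\in\mathcal{C}$. The goal is to apply Zorn's Lemma: I would take an arbitrary chain $\{\Delta_j\}_{j\in J}$ in $\mathcal{C}$ and show its union $\Delta^\ast=\bigcup_{j\in J}\Delta_j$ again belongs to $\mathcal{C}$. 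Clearly $\Gamma\subseteq\Delta^\ast$; the point is that $\Delta^\ast\not\vdash\varphi$. Here is exactly where finitariness (property (4)) enters: if $\Delta^\ast\vdash\varphi$, then by (4) there is a finite $\Delta_0\subseteq\Delta^\ast$ with $\Delta_0\vdash\varphi$, and since $\Delta_0$ is finite and $\{\Delta_j\}$ is a chain, $\Delta_0\subseteq\Delta_{j_0}$ for some single $j_0\in J$; then by (2) (monotonicity) $\Delta_{j_0}\vdash\varphi$, contradicting $\Delta_{j_0}\in\mathcal{C}$. Hence $\Delta^\ast\in\mathcal{C}$ is an upper bound, and Zorn's Lemma yields a maximal element $\Omega\in\mathcal{C}$.

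It then remains to check that this maximal $\Omega$ is in fact \emph{maximal non-trivial with respect to $\varphi$} in the sense of Definition \ref{maxi}, i.e. that $\Omega,\psi\vdash_{\mathcal{L}}\varphi$ for every $\psi\notin\Omega$. Indeed, fix $\psi\notin\Omega$ and set $\Omega'=\Omega\cup\{\psi\}$. Since $\Omega\subsetneq\Omega'$ and $\Omega$ is maximal in $\mathcal{C}$, we must have $\Omega'\notin\mathcal{C}$; as $\Gamma\subseteq\Omega\subseteq\Omega'$, the only way this can fail is $\Omega'\vdash\varphi$, that is, $\Omega\cup\{\psi\}\vdash\varphi$. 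This is precisely the required property. Finally $\Omega$ is consistent (with respect to $\varphi$) because $\Omega\not\vdash\varphi$ by construction, and $\Gamma\subseteq\Omega$, which completes the proof.

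The only genuinely delicate point — and the one I would state carefully rather than gloss over — is the closure of chains under unions, since that is the sole place the finitariness hypothesis is used; everything else is a routine application of the tarskian properties (1)–(3) and Zorn's Lemma. One should also note that tarskian-ness alone does not suffice here: without (4) the union of a chain of $\varphi$-consistent theories need not be $\varphi$-consistent, so the hypothesis that $\mathcal{L}$ be finitary is essential and is exactly what makes the argument go through.
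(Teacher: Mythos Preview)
Your proof is correct and is the standard Zorn-Lemma argument for the Lindenbaum--\L o\'s Lemma. The paper itself does not give a proof at all: it simply refers the reader to \cite[Theorem 2.22]{W}. So there is nothing to compare at the level of argument; your write-up is a faithful, self-contained version of precisely the proof one finds in that reference, with the role of finitariness (property (4)) correctly isolated as the step that guarantees chains in $\mathcal{C}$ have upper bounds.
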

\begin{proof}
It can be found \cite[Theorem 2.22]{W}.
\end{proof}

\

It is worth mentioning that, by the very  definitions,   $\mathcal{H}_{\vee,\triangle}^{3}$ is a tarskian and finitary logic and then, we have the following

\begin{theo} \label{lig-maxim1} Let $\Gamma\cup\{\varphi\}\subseteq \mathfrak{Fm}_s$, with $\Gamma$ non-trivial maximal respect to $\varphi$ in $\mathcal{H}^3_{\vee,\triangle}$. Let $\Gamma/\equiv_\vee=\{\overline{\alpha}:\alpha\in\Gamma\}$ be a subset of the trivalent modal Hilbert algebra with supremum $\mathfrak{Fm}_s/\equiv_\vee$, then:
\begin{itemize}
  \item [1.] If $\alpha\in\Gamma$ and $\overline{\alpha}=\overline{\beta}$ then $\beta\in\Gamma$,
  \item [2.] $\Gamma/\equiv_\vee$ is a modal deductive system of $\mathfrak{Fm}/\equiv_\vee$. Also, if $\overline{\varphi}\notin\Gamma/\equiv_\vee$ and for any modal deductive system $\overline{D}$ which contains properly to $\Gamma/\equiv_\vee$, then $\overline{\varphi}\in\overline{D}$.
\end{itemize}
\end{theo}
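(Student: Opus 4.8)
The plan is to unwind the definitions and transport the maximality property of $\Gamma$ across the canonical projection $q\colon\mathfrak{Fm}_s\to\mathfrak{Fm}_s/\equiv_\vee$. For part 1, suppose $\alpha\in\Gamma$ and $\overline{\alpha}=\overline{\beta}$; by definition of $\equiv_\vee$ this means $\vdash_\vee\alpha\to\beta$ and $\vdash_\vee\beta\to\alpha$. Since $\Gamma$ is a theory, $\Gamma\vdash_\vee\alpha\to\beta$ by weakening (tarskian property (2) applied to $\emptyset\vdash_\vee\alpha\to\beta$), and $\Gamma\vdash_\vee\alpha$ by tarskian property (1); hence $\Gamma\vdash_\vee\beta$ by (MP). Because a maximal consistent theory is closed (as noted just before Lemma~\ref{exmaxnotr}), $\Gamma\vdash_\vee\beta$ gives $\beta\in\Gamma$. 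So part 1 is essentially just the observation that $\Gamma$ is deductively closed and $\equiv_\vee$-provable implications are theorems.

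For part 2, I would first check the two defining clauses (D1), (D2) of a modal deductive system for $\Gamma/\equiv_\vee$. Clause (D1): $1=|\beta\to\beta|=\{\alpha:\vdash_\vee\alpha\}\subseteq\Gamma$ since $\Gamma$ is closed, so $\overline{1}\in\Gamma/\equiv_\vee$. Clause (D2): if $\overline{\alpha},\overline{\alpha}\to\overline{\beta}=\overline{\alpha\to\beta}$ both lie in $\Gamma/\equiv_\vee$, then by part 1 we may pick representatives actually in $\Gamma$, so $\alpha\in\Gamma$ and $\alpha\to\beta\in\Gamma$, whence $\beta\in\Gamma$ by (MP) and closedness, so $\overline{\beta}\in\Gamma/\equiv_\vee$. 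Clause (D3) (modal): if $\overline{\alpha}\in\Gamma/\equiv_\vee$, take $\alpha\in\Gamma$; then $\Gamma\vdash_\vee\alpha$ and by (NEC) $\Gamma\vdash_\vee\triangle\alpha$, so $\triangle\alpha\in\Gamma$ and $\overline{\triangle\alpha}=\triangle\overline{\alpha}\in\Gamma/\equiv_\vee$. Thus $\Gamma/\equiv_\vee$ is a modal deductive system.

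It remains to prove the ``tied to'' condition: $\overline{\varphi}\notin\Gamma/\equiv_\vee$, and every modal deductive system $\overline{D}$ properly containing $\Gamma/\equiv_\vee$ contains $\overline{\varphi}$. For the first, if $\overline{\varphi}\in\Gamma/\equiv_\vee$ then by part 1 $\varphi\in\Gamma$, contradicting that $\Gamma$ is non-trivial with respect to $\varphi$ (i.e. $\Gamma\not\vdash_\vee\varphi$). For the second, let $\overline{D}\supsetneq\Gamma/\equiv_\vee$ be a modal deductive system and set $D=q^{-1}(\overline{D})=\{\beta:\overline{\beta}\in\overline{D}\}$. One checks $D$ is a set of formulas closed under (MP) and (NEC) containing all theorems, with $\Gamma\subseteq D$; moreover the inclusion is proper, because picking $\overline{\gamma}\in\overline{D}\setminus(\Gamma/\equiv_\vee)$ yields $\gamma\in D\setminus\Gamma$ (using part 1 again for the direction $\gamma\notin\Gamma$). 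Pick such a $\psi\in D\setminus\Gamma$. By maximality of $\Gamma$ with respect to $\varphi$ (Definition~\ref{maxi}), $\Gamma,\psi\vdash_\vee\varphi$; since $\Gamma\cup\{\psi\}\subseteq D$ and $D$ is deductively closed under the rules, $\varphi\in D$, i.e. $\overline{\varphi}\in\overline{D}$. The one point needing a little care — and the only real ``obstacle'' — is justifying that $D=q^{-1}(\overline{D})$ really is closed under the derivation relation $\vdash_\vee$, i.e. that $\Gamma\cup\{\psi\}\subseteq D$ together with $\Gamma,\psi\vdash_\vee\varphi$ forces $\varphi\in D$; this follows because $\overline{D}$, being a modal deductive system of the Lindenbaum--Tarski algebra, is closed under the algebraic operations matching (MP) and (NEC), and a straightforward induction on the length of a derivation transports this back along $q$. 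I would spell this induction out briefly and otherwise leave the verifications as routine.
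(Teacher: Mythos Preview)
Your proposal is correct and follows essentially the same route as the paper: part~1 via closedness of $\Gamma$ and (MP), the modal deductive system axioms (D1)--(D3) checked directly, and for the ``tied to'' clause the preimage $D=q^{-1}(\overline{D})$, maximality of $\Gamma$ to get $\Gamma\cup\{\psi\}\vdash_\vee\varphi$, and then an induction on derivation length to show $\overline{\varphi}\in\overline{D}$. The paper carries out that induction explicitly (cases: axiom, premise in $D$, (MP), (NEC)), which is exactly what you say you would ``spell out briefly''.
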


\begin{dem} 
Taking into account  $\alpha\in\Gamma$ and $\alpha\equiv_\vee\beta$, we have that $\vdash\alpha\to\beta$ and $\vdash\beta\to\alpha$. Therefore, $\beta\in\Gamma$. Besides, it is not hard to see that $D_1)$, $D_2)$ and $D_3)$ are valid.

On the other hand, let $\overline{D}$ be mds that contains $\Gamma/\equiv_\vee$ and so, there is $\overline{\gamma}\in \overline{D}$ such that $\overline{\gamma}\notin\Gamma/\equiv_\vee$. Now, we have that $\gamma\notin\Gamma$ and therefore, $\Gamma\cup\{\gamma\}\vdash\varphi$. From the latter and taking into account $D=\{\alpha:\overline{\alpha}\in\overline{D}\}$ we can infer that $D\vdash\varphi$. Now, let us suppose that   $\alpha_1,...,\alpha_n$ is a derivation from $D$. We shall prove by induction over the length of the derivation that $\overline{\alpha_n}\in \overline{D}$:

If $n=1$ then $\alpha_1$ is an instance of an axiom or otherwise $\alpha_1\in D$. If  $\vdash\alpha_1$ is the case, then $\Gamma\vdash\alpha_1$ which is a contradiction. Then, we only have $\alpha_1\in D$ which implies  $\overline{\varphi}\in \overline{D}$.

Suppose that $\overline{\alpha_k}\in\overline{D}$ if $k$ is less than $n$. Then, we have the following cases: 

1. If $\varphi$ be the instance of an axiom, then $\Gamma\vdash\varphi$ which is a contradiction.

2. If $\varphi\in D$, then $\overline{\varphi}\in\overline{D}$.

3. If there exists $\{j,t_1,...,t_m\}\subseteq\{1,...,k-1\}$ such that $\alpha_{t_1},...,\alpha_{t_m}$  is a derivation of $\alpha_j\to\varphi$, then we have $\overline{\alpha_j\to\varphi}\in\overline{D}$ by induction hypothesis. So, $\overline{\alpha_j}\to\overline{\varphi}\in\overline{D}$. From the latter and  since $j<k$, we have $\overline{\alpha_j}\in\overline{D}$ and therefore, $\overline{\varphi}\in\overline{D}$.

4. If there exists $\{j,t_1,...,t_m\}\subseteq\{1,...,k-1\}$ such that $\alpha_{t_1},...,\alpha_{t_m}$ is a derivation of $\alpha_j$ and suppose that $\alpha_n$ is $\triangle\alpha_j$, then $\overline{\alpha_j}\in\overline{D}$. Now, since $\overline{D}$  is a mds, we have that $\triangle\overline{\alpha_j}\in\overline{D}$. Thus, $\overline{\varphi}\in\overline{D}$, which completes the proof.
\end{dem}

\

The notion of deductive systems considered in the last Theorem, part 2, was  named  {\em Syst\`emes deductifs li\'es \`a ''$a$''} by A. Monteiro, where $a$ is an element of some given algebra such that the congruences are determined by deductive systems \cite[pag. 19]{AM}.

Recall that for a given $iH^{\triangle}_3$-algebra $A$,  a {\em logical matrix} for $\mathcal{H}_{\vee,\triangle}^{3}$  is a pair $\langle A, \{1\}\rangle$ where $\{1\}$ is the set of designated elements. In addition, For a given $H^{\vee,\triangle}_3$-algebra $A$, we say that an homomorphism $v:\mathfrak{Fm}_s \to A$ is a valuation. Then, we say that $\varphi$ is a {\em semantical consequence} of $\Gamma$, and  we denote by $\Gamma\vDash_{\mathcal{H}^3_{\vee,\triangle}}\varphi$, if for every $H^{\vee,\triangle}_3$-algebra $A$ and every valuation $v$,  $v(\Gamma)=\{1\}$  then $v(\varphi)=1$. Beside, we say that $\alpha$ is valid in $A$ if $v(\alpha)=1$  for every valuation.

\begin{coro} \label{exvalua} Let $\Gamma\cup\{\varphi\}\subseteq\mathfrak{Fm}_s$, with $\Gamma$ non-trivial maximal respect to $\varphi$ in $\mathcal{H}^3_{\vee,\triangle}$. Then, there exists  a valuation  $v:\mathfrak{Fm}_s\to\mathbb{C}_3^{\to,\vee}$  such that $v(\varphi)=$  iff $\alpha\in \Gamma$.

\end{coro}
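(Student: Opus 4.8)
The plan is to combine Theorem \ref{lig-maxim1} with Theorem \ref{TeoM0S} (the version for $H^{\vee,\triangle}_3$-algebras) and the generating-algebras Corollary \ref{genalg}. First I would pass from the syntactic side to the algebraic side: since $\Gamma$ is non-trivial maximal with respect to $\varphi$, Theorem \ref{lig-maxim1} tells us that $\Gamma/\equiv_\vee$ is a modal deductive system of the Lindenbaum-Tarski algebra $\mathfrak{Fm}_s/\equiv_\vee$, that it does not contain $\overline{\varphi}$, and that every modal deductive system strictly containing it does contain $\overline{\varphi}$. In Monteiro's terminology this says precisely that $\Gamma/\equiv_\vee$ is a (weak/modal) deductive system tied to $\overline{\varphi}$, hence — by the equivalence between maximal w.d.s. and deductive systems tied to an element discussed in Section \ref{weak}, transferred to $H^{\vee,\triangle}_3$-algebras — it is a non-trivial maximal modal deductive system $M$ of $A := \mathfrak{Fm}_s/\equiv_\vee$.

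Next I would apply Theorem \ref{TeoM0S} to this maximal modal deductive system $M$ of $A$: it produces a homomorphism $h : A \to \mathbb{C}_3^{\to,\vee}$ (the codomain being one of the two simple algebras of Corollary \ref{genalg}, and $\mathbb{C}_2^{\to,\vee}$ is a subalgebra of it, so we may always regard the target as $\mathbb{C}_3^{\to,\vee}$) with $h^{-1}(\{1\}) = M$. Composing with the canonical projection $q : \mathfrak{Fm}_s \to \mathfrak{Fm}_s/\equiv_\vee$, $q(\alpha) = \overline{\alpha}$, I set $v := h \circ q : \mathfrak{Fm}_s \to \mathbb{C}_3^{\to,\vee}$. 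This $v$ is a composition of homomorphisms, hence a homomorphism, i.e.\ a valuation in the sense fixed before the statement. Then for any formula $\alpha$ we have the chain of equivalences: $v(\alpha) = 1$ iff $h(\overline{\alpha}) = 1$ iff $\overline{\alpha} \in h^{-1}(\{1\}) = M = \Gamma/\equiv_\vee$ iff $\overline{\alpha} \in \Gamma/\equiv_\vee$ iff $\alpha \in \Gamma$, where the last step uses part 1 of Theorem \ref{lig-maxim1} (if $\overline{\alpha} = \overline{\beta}$ and $\beta \in \Gamma$ then $\alpha \in \Gamma$) together with the fact that $\Gamma$ is closed, so that membership of $\alpha$ in $\Gamma$ depends only on its $\equiv_\vee$-class. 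This yields exactly the asserted equivalence $v(\varphi) = 1$ iff $\varphi \in \Gamma$ (and more generally for every $\alpha$).

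The main obstacle I anticipate is making the identification ``deductive system tied to $\overline{\varphi}$ $=$ maximal modal deductive system'' fully rigorous in the $H^{\vee,\triangle}_3$-setting, since Section \ref{weak} developed the weak-deductive-system machinery and the equivalence with maximality only explicitly for $iH^{\triangle}_3$-algebras; one has to check that the properties (wi1)--(wi6) of the weak implication $x \rightarrowtail y := \triangle x \to y$ and Monteiro's argument carry over verbatim, which they do because the relevant identities only involve the $\{\to,\triangle,1\}$-reduct, which is a $\triangle H_3$-algebra in both cases. A secondary point to be careful about is that the theorem as stated should presumably read ``$v(\varphi) = 1$ iff $\varphi \in \Gamma$'' (the displayed formula ``$v(\varphi)=$'' is a typo), and that $\varphi \notin \Gamma$ since $\Gamma \not\vdash_\vee \varphi$, so that in fact $v(\varphi) \neq 1$; the statement is genuinely about an arbitrary $\alpha$, and the proof above establishes it uniformly. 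Everything else — that $h\circ q$ is a homomorphism, that $q$ is well defined — is routine and follows from Theorem \ref{linP} and the definition of the canonical projection.
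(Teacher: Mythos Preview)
Your proposal is correct and follows essentially the same route as the paper's own proof: pass to the Lindenbaum--Tarski algebra via Theorem \ref{lig-maxim1}, apply Theorem \ref{TeoM0S} to the resulting maximal modal deductive system to obtain $h$, and take $v=h\circ\pi$. The only difference is that you spell out explicitly the step from ``$\Gamma/\equiv_\vee$ is tied to $\overline{\varphi}$'' to ``$\Gamma/\equiv_\vee$ is maximal'' (via the weak-implication machinery of Section \ref{weak} transferred to the $\{\to,\triangle,1\}$-reduct), whereas the paper simply asserts maximality directly; your added justification and your remarks on the typo in the statement are apt.
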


\begin{dem}
Taking into account Theorem  \ref{lig-maxim1}, we known that $\Gamma/\equiv_\vee$ is a maximal modal deductive system of $\mathfrak{Fm}_s/\equiv_\vee$. Then, by Theorem \ref{TeoM0S}, we have there is an homomorphism $h:\mathfrak{Fm}_s/\equiv_\vee \to \mathbb{C}_3^{\to,\vee}$ (see Corollary \ref{genalg}) such that $h^{-1}(\{1\})=\Gamma/\equiv_\vee$. Now, consider the canonical projection $\pi:\mathfrak{Fm}_s \to \mathfrak{Fm}_s/\equiv$ defined by $\pi(\alpha)=|\alpha|$, see Theorem \ref{linP}. Now, it is enough to take $v=h\circ \pi$.
\end{dem}

\begin{theo} \label{correcfue}(Soundness and completeness of $\mathcal{H}^3_{\vee,\triangle}$ w.r.t. $H^{\vee,\triangle}_3$-algebras) Let $\Gamma\cup\{\varphi\}\subseteq\mathfrak{Fm}_s$,  $\Gamma\vdash_\vee\varphi$ if and only if  $\Gamma\vDash_{\mathcal{H}^3_{\vee,\triangle}}\varphi$.
\end{theo}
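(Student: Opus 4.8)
The plan is to prove the two implications separately, relying on the algebraic machinery already set up. For soundness ($\Gamma\vdash_\vee\varphi$ implies $\Gamma\vDash_{\mathcal{H}^3_{\vee,\triangle}}\varphi$), I would proceed by induction on the length of a derivation of $\varphi$ from $\Gamma$. The base cases are that $\varphi$ is an axiom or $\varphi\in\Gamma$; for the axiom case it suffices to check that each of (Ax1)--(Ax10) evaluates to $1$ under every valuation into every $H^{\vee,\triangle}_3$-algebra, which follows because these axioms are precisely the defining identities (or consequences of them, via Lemma~\ref{lat1}) of $H^{\vee,\triangle}_3$-algebras; if $\varphi\in\Gamma$ and $v(\Gamma)=\{1\}$ then $v(\varphi)=1$ trivially. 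For the inductive step one checks that (MP) and (NEC) preserve the property of being evaluated to $1$: if $v(\alpha)=1$ and $v(\alpha\to\beta)=1$ then $v(\beta)=1$ since $1\to x=x$ (which is (H13)/(wi1)-type reasoning available in Hilbert algebras), and if $v(\alpha)=1$ then $v(\triangle\alpha)=\triangle 1=1$ by Lemma~\ref{lem2.1}(10.).

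For completeness ($\Gamma\vDash_{\mathcal{H}^3_{\vee,\triangle}}\varphi$ implies $\Gamma\vdash_\vee\varphi$), I would argue by contraposition. Suppose $\Gamma\not\vdash_\vee\varphi$. Since $\mathcal{H}^3_{\vee,\triangle}$ is tarskian and finitary, by the Lindenbaum--\L os Lemma~\ref{exmaxnotr} there is a theory $\Omega\supseteq\Gamma$ that is maximal non-trivial with respect to $\varphi$. Then Corollary~\ref{exvalua} applied to $\Omega$ yields a valuation $v:\mathfrak{Fm}_s\to\mathbb{C}_3^{\to,\vee}$ with $v(\alpha)=1$ iff $\alpha\in\Omega$. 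In particular $v(\Gamma)\subseteq v(\Omega)=\{1\}$ since $\Gamma\subseteq\Omega$, while $\varphi\notin\Omega$ gives $v(\varphi)\neq 1$. Hence $\Gamma\not\vDash_{\mathcal{H}^3_{\vee,\triangle}}\varphi$, as $\mathbb{C}_3^{\to,\vee}$ is itself an $H^{\vee,\triangle}_3$-algebra (Corollary~\ref{genalg}). This completes the contrapositive.

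I should double-check one subtle point in Corollary~\ref{exvalua}: its statement has an apparent typo (``$v(\varphi)=$ iff $\alpha\in\Gamma$''), and what is actually needed and proved there is that $v(\alpha)=1$ iff $\alpha\in\Gamma$ for all formulas $\alpha$; I would invoke it in that corrected form. The only mild obstacle is ensuring that the soundness direction's axiom verification is genuinely routine — each axiom must be matched against the defining conditions of $H^{\vee,\triangle}_3$-algebras, using Lemma~\ref{lat1} for the disjunction axioms (Ax4)--(Ax6) and the $\triangle H_3$-algebra axioms for (Ax7)--(Ax10); since $\mathfrak{Fm}_s/\equiv_\vee$ is an $H^{\vee,\triangle}_3$-algebra by Theorem~\ref{linP}, one could alternatively package soundness as: $\vdash_\vee$-theorems map to $1$ in every $H^{\vee,\triangle}_3$-algebra because they do in the Lindenbaum--Tarski algebra and valuations factor through it. Either route is short. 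I expect no real difficulty; the genuine content of the theorem lies entirely in Theorem~\ref{lig-maxim1}, Theorem~\ref{TeoM0S} and Corollary~\ref{exvalua}, which are already established, so the present proof is essentially an assembly of those pieces.
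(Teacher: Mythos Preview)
Your proposal is correct and follows essentially the same route as the paper: soundness by checking that axioms are valid and rules preserve value $1$, and completeness by contraposition using the Lindenbaum--\L os Lemma~\ref{exmaxnotr} together with Corollary~\ref{exvalua} to produce a valuation into $\mathbb{C}_3^{\to,\vee}$ separating $\Gamma$ from $\varphi$. Your observation about the typo in Corollary~\ref{exvalua} is accurate, and the corrected reading you supply is exactly what the paper's proof uses.
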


\begin{dem}
Only if part (Soundness): It is not hard to see that every axiom is valid for every  $H^{\vee,\triangle}_3$-algebra $A$. In addition, satisfaction is preserved by the inference rules

If part (Completeness): Suppose  $\Gamma\vDash_{\mathcal{H}^3_{\vee,\triangle}}\varphi$ and $\Gamma\not\vdash_\vee\varphi$. Then, according to Lemma \ref{exmaxnotr}, there is maximal consistent theory $M$ such that $\Gamma\subseteq M$ and $M\not\vdash_\vee\varphi$. From the latter and Corollary \ref{exvalua}, there is a valuation $\mu: \mathfrak{Fm}_s\to \mathbb{C}_3^{\to,\vee} $ such that $\mu(\Delta)=\{1\}$ but $\mu(\varphi)\not=1$.
\end{dem}

\section{\large Model Theory and first order logics of  ${\cal H}^{\vee,\triangle}_3$ without identities}

In this section, we define the first order logic of $\mathcal{H}^{\vee,\triangle}_3 $.   Let $\Sigma=\{\to, \vee, \triangle \}$ be the propositional signature of $\mathcal{H}^{\vee,\triangle}_3$, the symbols $\forall$ (universal quantifier) and $\exists$ (existential quantifier), with the punctuation marks (commas and parentheses). Let $Var=\{v_1,v_2,...\}$ a numerable set of individual variables. A first order signature $\Theta$  is composed by the following elements:
\begin{itemize}
  \item a set $\mathcal{C}$ of individual constants, 
  \item for each $n\geq 1$, $\mathcal{F}$ a set of functions of arity $n$,
  \item for each $n\geq 1$, $\mathcal{P}$ a set of predicates of arity $n$.
\end{itemize}

The notions of bound and free variables inside a formula, closed terms, closed formulas (or sentences), and of term free for a variable in a formula are defined as usual. It will be denoted by $T_\Theta$ and $\mathfrak{Fm}_\Theta$  the sets of all terms and formulas, respectively. Given a formula $\varphi$, the formula obtained from $\varphi$
by substituting every free occurrence of a variable x by a term t will be denoted by $\varphi(x/t)$.

\begin{defi} Let $\Theta$ be a first order signature. The logic $\mathcal{QH}^{\vee,\triangle}_3$ over $\Theta$ is defined by Hilbert calculus obtained by extending $\mathcal{H}^{\vee,\triangle}_3$ expressed in the language $\mathfrak{Fm}_\Theta$ by adding the following:

{\bf Axioms Schemas}
\begin{itemize}
  \item [(Ax11)] $\varphi(x/t) \to\exists x\varphi$, if $t$ is a term free for $x$ in $\varphi$,
  \item [(Ax12)] $\forall x\varphi\to\varphi(x/t)$, if $t$ is a term free for $x$ in $\varphi$,

  \item [(Ax13)] $\triangle\exists x\varphi\leftrightarrow\exists x\triangle\varphi$,
  \item [(Ax14)] $\triangle\forall x\varphi\leftrightarrow\forall x\triangle\varphi$,
 
\end{itemize}

{\bf Inferences Rules}
\begin{itemize}
  \item [(R3)]$\dfrac{\varphi\to\psi}{\exists x\varphi\to\psi}$ where $x$ does not occur free in $\psi$,
  \item [(R4)]$\dfrac{\varphi\to\psi}{\varphi\to\forall x\psi}$ where $x$ does not occur free in $\varphi$.
\end{itemize}
\end{defi}

We denote by  $\vdash \alpha$ to a derivation of a formula $\alpha$ in $\mathcal{QH}_3^{\vee,\triangle}$ and  with $\Gamma\vdash\alpha$ to the derivation of $\alpha$ from a set of premises $\Gamma$. These notions are defined as the usual way. Besides, we denote $ \vdash\varphi\leftrightarrow \psi$ as an abbreviation of $ \vdash\varphi\rightarrow \psi$ and $ \vdash\varphi\rightarrow \psi$.

\begin{defi} \label{defim} Let $\Theta$ be a first-order signature. A $\Theta$-structure is a triple  $\mathfrak{S}=\langle A, S,\cdot^{\mathfrak{S}} \rangle$ such that $A$ is a complete $H^{\vee,\triangle}_3$-algebra, and $S$ is a non-empty set and $\cdot^\mathfrak{S}$ is an interpretation mapping defined on $\Theta$ as follows:
\begin{itemize}
  \item [1.] for each individual constant symbol $c$ of $\Theta$, $c^\mathfrak{S}$ of $S$,
  \item [2.] for each function symbol $f$ $n$-ary of $\Theta$, $f^\mathfrak{S}:S^n\to S$,
  \item [3.] for each predicate symbol $P$ $n$-ary of $\Theta$, $P^{\mathfrak{S}}:S^n\to A$.
\end{itemize}

\end{defi}

Given $\Theta$-structure $\mathfrak{S}=\langle A, S,\cdot^{\mathfrak{S}} \rangle$, a $\mathfrak{S}$-valuation is a function $v:Var \to S$. Given $a\in S$ and $\mathfrak{S}$-valuation $v$,  by $v[x\to a]$ we denote  the following $\mathfrak{S}$-valuation,  $v[x\to a](x)=a$ and $v[x\to a](y)=v(y)$ for any $y\in V$ such that $y\neq x$. 

 Let $\mathfrak{S}=\langle A, S,\cdot^{\mathfrak{S}} \rangle$ be a $\Theta$-structure and $v$ a $\mathfrak{S}$-valuation. A $\Theta$-structure $\mathfrak{S}=\langle A, S,\cdot^{\mathfrak{S}} \rangle$ and a $\mathfrak{S}$-valuation $v$ induce an interpretation map $||\cdot||^\mathfrak{S}_v$ for terms and formulas defined as follows

\begin{center}
$||x||^\mathfrak{S}_v=v(x)$,\\ [2mm]

$||c||^\mathfrak{S}_v=c^\mathfrak{A}$,\\ [2mm]

$||f(t_1,\cdots ,t_n)||^\mathfrak{S}_v=f_{\S}(||t_1||^\mathfrak{S}_v,\cdots ,||t_n||^\mathfrak{S}_v)$, for any $f\in\F$,\\ [2mm]

$||P(t_1,\cdots ,t_n)||^\mathfrak{S}_v=P_{\S}(||t_1||^\mathfrak{S}_v,\cdots ,||t_n||^\mathfrak{S}_v)$, for any $P\in\P$,\\ [2mm]
$||\alpha\to\beta||^\mathfrak{S}_v=||\alpha||^\mathfrak{S}_v\to ||\beta||^\mathfrak{S}_v$,\\ [2mm]
$||\alpha\vee\beta||^\mathfrak{S}_v=||\alpha||^\mathfrak{S}_v\vee ||\beta||^\mathfrak{S}_v$,\\ [2mm]
$||\triangle\alpha||^\mathfrak{S}_v=\triangle ||\alpha||^\mathfrak{S}_v$,\\ [2mm]
$||\forall x\alpha||^\mathfrak{S}_v=\underset{a\in S}{\bigwedge} ||\alpha||^\mathfrak{S}_{v[x\to a]}$,\\ [2mm]
$||\exists x\alpha||^\mathfrak{S}_v=\underset{a\in S}{\bigvee}||\alpha||^\mathfrak{S}_{v[x\to a]}$.

\end{center}

We say that $\mathfrak{S}$ and $v$ {\em satisfy} a formula $\varphi$, denoted by $\mathfrak{S}\vDash\varphi[v]$, if $||\varphi||^\mathfrak{S}_v=1$. Besides, we say that $\varphi$ is {\em true} $\mathfrak{S}$ if $||\varphi||^\mathfrak{S}_v=1$ for each  a $\mathfrak{S}$-valuation $v$ and we denote by $\mathfrak{S}\vDash\varphi$ . We say that $\varphi$ is a {\em semantical consequence} of $\Gamma$ in $\mathcal{QH}^{\vee,\triangle}_3$, if, for any structure $\mathfrak{S}$: if  $\mathfrak{S}\vDash\gamma$  for each $\gamma\in\Gamma$, then $\mathfrak{S}\vDash\varphi$. For a given set of formulas $\Gamma$, we say that the structure $\mathfrak{S}$  is a {\em model} of $\Gamma$ iff $\mathfrak{S}\vDash\gamma$  for each $\gamma\in\Gamma$. 

Now, it is worth mentioning the following property   $||\varphi(x/t)||^\mathfrak{A}_v= ||\varphi||^\mathfrak{A}_{v[x\to ||t||^\mathfrak{A}_v]}$ holds. Other important aspect of the definition of {\em semantical consequence} is that it is different to the propositional case because if one uses the definition of propositional case we are unable to prove an important rule as  $\alpha(x)\vDash \forall x \alpha(x)$. 

In addition, we need to exhibit some important property of complete $H^{\vee,\triangle}_3$-algebra.

\begin{lem}\label{LM}{\em \cite[Lemma 0.1.21]{LM}, see also \cite{PL}}
Let $A$ be a complete $H^{\vee,\triangle}_3$ and the set $\{a_i\}_{i\in I}$ of element of $A$ for a any non-empty set $I$. Then if there exist $\bigvee\limits_{i\in I} a_i$ ($\bigwedge\limits_{i\in I} a_i$) then there exist  $\bigvee\limits_{i\in I} \triangle a_i$    ($\bigwedge\limits_{i\in I} \triangle a_i$) and also, $\bigvee\limits_{i\in I} \triangle a_i = \triangle \bigvee\limits_{i\in I}  a_i $ and  $\bigwedge\limits_{i\in I} \triangle a_i = \triangle \bigwedge\limits_{i\in I}  a_i $.
\end{lem}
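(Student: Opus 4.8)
The plan is to reduce the statement to the propositional/algebraic facts already at our disposal, namely Lemma~\ref{lem2.1}(13.) (that $\triangle$ commutes with finite infima), the analogous property $\triangle(x \vee y) = \triangle x \vee \triangle y$ recorded in Lemma~\ref{lat1}($H^{\vee,\triangle}_3$5), and the fact that in a semisimple variety it suffices to verify an identity in the generating algebras. First I would recall that, by Corollary~\ref{genalg}, every $H^{\vee,\triangle}_3$-algebra is a subdirect product of copies of $\mathbb{C}_3^{\to,\vee}$ and $\mathbb{C}_2^{\to,\vee}$; since arbitrary meets and joins in a subdirect product are computed coordinatewise whenever they exist, and since $\triangle$ acts coordinatewise, the statement for a general complete $A$ follows once it is established for the simple algebras. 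In those two-element and three-element chains the claim about $\triangle$ and arbitrary joins/meets is almost immediate: the ranges of $\triangle$ on $\mathbb{C}_3$ is $\{0,1\}$ and $\triangle$ is monotone, so for a family $\{a_i\}_{i\in I}$ one checks that $\triangle\bigvee_i a_i = 1$ iff some $a_i = 1$ iff $\bigvee_i \triangle a_i = 1$, and dually for meets, taking care of the degenerate cases where the supremum or infimum equals $0$ or $\tfrac12$.

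An alternative, cleaner route — and the one I would actually write up since the lemma is attributed to \cite{LM} — is to argue directly inside the complete algebra $A$. Suppose $b = \bigvee_{i\in I} a_i$ exists. Since $\triangle$ is monotone (from (M1) and the Hilbert-algebra order, together with $\triangle x \leq \triangle y$ when $x \leq y$, which is derivable), each $\triangle a_i \leq \triangle b$, so $\triangle b$ is an upper bound of $\{\triangle a_i\}$. For the reverse, let $c$ be any upper bound of $\{\triangle a_i\}_{i\in I}$; I need $\triangle b \leq c$. Here I would use the characteristic algebraic feature of $\triangle$ in these trivalent structures, namely that $\triangle x = \triangle\triangle x$ and that $\nabla$ is its De Morgan dual in the sense $\triangle(x\vee y)=\triangle x \vee \triangle y$ together with $x \leq \triangle x \vee (\text{something})$; more concretely, the key identity to exploit is that for each $i$, $a_i \leq \triangle a_i \vee \sim\!\!\nabla\!$-type complement — but since we do not have a full De Morgan negation available, the honest thing is to pass through the representation. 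So the main structural step is: reduce to the simple algebras via semisimplicity and the coordinatewise computation of infinite meets/joins, then finish by the trivial monotonicity-plus-finite-range argument on $\mathbb{C}_2$ and $\mathbb{C}_3$.

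The one point requiring genuine care — and what I expect to be the main obstacle — is the interchange of ``arbitrary suprema'' with ``subdirect embedding''. A subdirect product $A \hookrightarrow \prod_{M} A/M$ need not be closed under infinite joins computed in the ambient product, so one cannot simply say ``compute $\bigvee a_i$ coordinatewise''. What one can say is: if $\bigvee_{i} a_i$ exists \emph{in $A$}, call it $b$; then for each coordinate projection $p_M \colon A \to A/M$ (which is a surjective homomorphism onto a simple, hence finite, chain), $p_M$ preserves this join because a surjection onto a finite chain preserves all existing joins. Hence $p_M(b) = \bigvee_i p_M(a_i)$ in $A/M$, and then $p_M(\triangle b) = \triangle p_M(b) = \triangle \bigvee_i p_M(a_i) = \bigvee_i \triangle p_M(a_i) = \bigvee_i p_M(\triangle a_i) = p_M(\bigvee_i \triangle a_i)$, the middle equality being the finite-chain case. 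Since this holds for every $M$ and the family $\{p_M\}$ is jointly injective, $\triangle b = \bigvee_i \triangle a_i$; in particular the right-hand join exists in $A$. The meet case is dual. I would present this as the proof, citing Corollary~\ref{genalg} for semisimplicity and Lemma~\ref{lat1} for the finitary $\triangle$-$\vee$ law, and simply verifying by hand the trivial fact that $\triangle$ preserves arbitrary joins and meets in $\mathbb{C}_2^{\to,\vee}$ and $\mathbb{C}_3^{\to,\vee}$.
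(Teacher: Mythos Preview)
The paper does not actually prove this lemma: it is stated with a citation to L.~Monteiro's thesis \cite{LM} (and to \cite{PL}) and then used as a black box. So there is no ``paper's own proof'' to compare against; your proposal is an attempt to supply what the paper omits.

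Your overall strategy via semisimplicity is reasonable, and you correctly flag the delicate point: passing infinite joins through the subdirect embedding. Unfortunately the justification you give for that step is wrong. The claim ``a surjection onto a finite chain preserves all existing joins'' is false in general, and it is false already inside the variety under discussion. Take $A = \mathbb{C}_2^{\mathbb{N}}$, which is a complete $H^{\vee,\triangle}_3$-algebra with $\triangle$ the identity, and let $M$ be the maximal modal deductive system coming from a non-principal ultrafilter on $\mathbb{N}$. With $a_n$ the characteristic function of $\{n\}$ one has $\bigvee_n a_n = 1 \in M$ while $a_n \notin M$ for every $n$, so the projection $p_M \colon A \to A/M \cong \mathbb{C}_2$ sends the join to $1$ but each $a_n$ to $0$. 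Thus $p_M$ does \emph{not} preserve this join, and the chain of equalities you wrote for $p_M(\triangle b)$ breaks at the very first step. (In this particular example the lemma itself is trivially true because $\triangle$ is the identity, which shows that the conclusion can hold even though your mechanism for proving it fails.)

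So the semisimplicity route, as written, does not go through. If you want to repair it you would need an argument specific to $\triangle$, not a general lattice-homomorphism fact --- for instance, showing that $\triangle$ takes values in the Boolean skeleton and exploiting a relation between $\triangle$, $\nabla$, and the order that forces any upper bound $c$ of $\{\triangle a_i\}$ to dominate $\triangle\bigl(\bigvee_i a_i\bigr)$ directly in $A$, without ever projecting to quotients. That is closer to how the cited references handle it; the subdirect-product detour is not needed and, as your own hesitation suggested, it is where the argument leaks.
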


This property is useful to prove the following theorem.

\begin{theo} \label{correc} Let $\Gamma\cup\{\varphi\}\subseteq\mathfrak{Fm}_{\Theta}$, if $\Gamma\vdash_\vee\varphi$ then  $\Gamma\vDash\varphi$.

\end{theo}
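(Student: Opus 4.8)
## Proof Plan for the Soundness Theorem (Theorem \ref{correc})

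The plan is to prove soundness by induction on the length of a derivation of $\varphi$ from $\Gamma$ in $\mathcal{QH}^{\vee,\triangle}_3$. Fix an arbitrary $\Theta$-structure $\mathfrak{S}=\langle A, S,\cdot^{\mathfrak{S}}\rangle$ that is a model of $\Gamma$; I must show $\mathfrak{S}\vDash\varphi$, i.e. $||\varphi||^{\mathfrak{S}}_v=1$ for every $\mathfrak{S}$-valuation $v$. The base cases split into two: if $\varphi\in\Gamma$, then $||\varphi||^{\mathfrak{S}}_v=1$ by hypothesis; if $\varphi$ is an instance of an axiom schema, I must check each schema (Ax1)--(Ax14) is valid in every complete $H^{\vee,\triangle}_3$-algebra. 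For (Ax1)--(Ax10) this reduces to the propositional soundness already established in Theorem \ref{correcfue}, since the interpretation $||\cdot||^{\mathfrak{S}}_v$ commutes with $\to$, $\vee$, $\triangle$ and any assignment of the atomic subformulas to elements of $A$ arises from some valuation. For (Ax11) and (Ax12) I use the substitution property $||\varphi(x/t)||^{\mathfrak{S}}_v=||\varphi||^{\mathfrak{S}}_{v[x\to\|t\|^{\mathfrak{S}}_v]}$ stated in the text, together with the fact that $\bigvee_{a\in S}||\varphi||^{\mathfrak{S}}_{v[x\to a]}$ is an upper bound and $\bigwedge_{a\in S}||\varphi||^{\mathfrak{S}}_{v[x\to a]}$ a lower bound of the family, so that the relevant implications evaluate to $1$ by property (1) of Lemma \ref{lem2.1}-type reasoning (i.e. $x\le y$ iff $x\to y=1$). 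For (Ax13) and (Ax14) I invoke Lemma \ref{LM}: $\triangle\bigvee_{a\in S}||\varphi||^{\mathfrak{S}}_{v[x\to a]}=\bigvee_{a\in S}\triangle||\varphi||^{\mathfrak{S}}_{v[x\to a]}$ and dually for $\bigwedge$, which says exactly that the two sides of each biconditional have equal value, hence the biconditional is $1$.

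For the inductive step I must show that each inference rule preserves the property of being true in $\mathfrak{S}$. For (MP): if $||\alpha||^{\mathfrak{S}}_v=1$ and $||\alpha\to\beta||^{\mathfrak{S}}_v=1$ for all $v$, then since $1\to b=b$ in any $H^{\vee,\triangle}_3$-algebra (property (wi1)/(H13)), $||\beta||^{\mathfrak{S}}_v=1$. For (NEC): if $||\alpha||^{\mathfrak{S}}_v=1$ for all $v$, then $||\triangle\alpha||^{\mathfrak{S}}_v=\triangle 1=1$ by Lemma \ref{lem2.1}(10). For the quantifier rules (R3) and (R4), the key point is the side condition on free variables: if $x$ does not occur free in $\psi$, then $||\psi||^{\mathfrak{S}}_{v[x\to a]}=||\psi||^{\mathfrak{S}}_v$ for every $a\in S$. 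Assuming $||\varphi\to\psi||^{\mathfrak{S}}_v=1$ for all $v$, i.e. $||\varphi||^{\mathfrak{S}}_v\le||\psi||^{\mathfrak{S}}_v$ for all $v$, for (R3) I get $||\varphi||^{\mathfrak{S}}_{v[x\to a]}\le||\psi||^{\mathfrak{S}}_{v[x\to a]}=||\psi||^{\mathfrak{S}}_v$ for every $a$, so $\bigvee_{a\in S}||\varphi||^{\mathfrak{S}}_{v[x\to a]}\le||\psi||^{\mathfrak{S}}_v$, which means $||\exists x\varphi\to\psi||^{\mathfrak{S}}_v=1$; the argument for (R4) is dual, using that $||\varphi||^{\mathfrak{S}}_v$ is a lower bound of the $||\psi||^{\mathfrak{S}}_{v[x\to a]}$.

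The main obstacle I anticipate is not any single rule but the bookkeeping around free and bound variables: one must verify carefully that substitution interacts correctly with the interpretation map (the property $||\varphi(x/t)||^{\mathfrak{S}}_v=||\varphi||^{\mathfrak{S}}_{v[x\to\|t\|^{\mathfrak{S}}_v]}$, which in turn requires $t$ to be free for $x$ in $\varphi$, exactly the side condition in (Ax11)--(Ax12)), and that the ``$x$ not free in $\psi$'' hypothesis genuinely yields valuation-independence in the required variable. A secondary subtlety is that the suprema and infima appearing in the clauses for $\exists$ and $\forall$ must actually exist: this is guaranteed because $A$ is assumed to be a \emph{complete} $H^{\vee,\triangle}_3$-algebra in Definition \ref{defim}, and Lemma \ref{LM} then ensures $\triangle$ is compatible with these infinite joins and meets, which is precisely what makes (Ax13) and (Ax14) sound. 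Once these points are in place, the induction closes and $\Gamma\vDash\varphi$ follows.
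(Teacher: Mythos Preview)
Your proposal is correct and follows essentially the same approach as the paper's own proof: fix a model $\mathfrak{S}$ of $\Gamma$, verify that each axiom schema is true in $\mathfrak{S}$ (using the substitution property for (Ax11)--(Ax12) and Lemma~\ref{LM} for (Ax13)--(Ax14)), and check that each rule preserves truth in $\mathfrak{S}$ (using the free-variable side condition for (R3)/(R4)). Your write-up is in fact more explicit than the paper's, which omits the routine verifications for (MP), (NEC), and the case $\varphi\in\Gamma$.
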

\begin{dem} 
In what follows we will consider a fixed structure $\mathfrak{S}=\langle A, S,\cdot^{\mathfrak{S}} \rangle$. It is clear that the propositional axioms are true in $\mathfrak{S}$. Now, we have to prove the new axioms (Ax11) and (Ax12) are true in $\mathfrak{S}$, and the new inference rules (R3) and
(R4) preserve trueness in $\mathfrak{S}$.

(Ax11) Suppose that $\varphi$ is $\alpha(x/t)\to\exists x\alpha$. Then, $||\varphi||_v^\mathfrak{S}=||\alpha||_{v[x\to||t||_v^\mathfrak{S}]}^\mathfrak{S}\to ||\exists x\alpha||_v^\mathfrak{S}$. It is clear that $||\alpha||_{v[x\to||t||_v^\mathfrak{M}]}^\mathfrak{S}\leq\underset{a\in S}{\bigvee}||\alpha||_{v[x\to a]}^\mathfrak{S}$ and then, $||\alpha||_{v[x\to||t||_v^\mathfrak{S}]}^\mathfrak{S}\leq||\exists x\alpha||_v^\mathfrak{S}$. Therefore $||\alpha(x/t)\to\exists x\alpha||_v^\mathfrak{S}=1$ for every  $\mathfrak{S}$-valuation $v$. (Ax12) is analogous to (Ax11). Now, according to Lemma \ref{LM}, the axioms (Ax13) (Ax14) are true in $\mathfrak{S}$.

(R4)  Let $\alpha\to \beta$ such that $x$ is not free in $\alpha$, and let $\alpha\to\forall x\beta$. Let us suppose that $||\alpha \to \beta||_v^\mathfrak{S}=1$ for every  $\mathfrak{S}$-valuation $v$. Now, consider a fix valuation $v$ then   $||\alpha\to\forall x\beta||_v^\mathfrak{S}= ||\alpha||_v^\mathfrak{S}\to||\forall x\beta||_v^\mathfrak{S}=  ||\alpha||_v^\mathfrak{S}\to\underset{a\in S}{\bigwedge} ||\beta||^\mathfrak{S}_{v[x\to a]}$. On the other hand, By hypothesis we know that  $||\alpha||_u^\mathfrak{S} \leq ||\beta||_u^\mathfrak{S}$ for every $\mathfrak{S}$-valuation $u$. In particular,  $||\alpha||_v^\mathfrak{S}=||\alpha||_{v[x\to a]}^\mathfrak{S} \leq ||\beta||_{v[x\to a]}^\mathfrak{S}$ for every $\mathfrak{S}$-valuation $v$. Then, $||\alpha||_v^\mathfrak{S}\leq\underset{a\in S}{\bigwedge} ||\beta||^\mathfrak{S}_{v[x\to a]}$ and so, $||\alpha||_v^\mathfrak{S}\to\underset{a\in S}{\bigwedge} ||\beta||^\mathfrak{S}_{v[x\to a]}=1$ for every $\mathfrak{S}$-valuation $v$. The proof of preservation of trueness for (R3) is  analogous to (R4).
\end{dem}

\

In  what follows, we will prove a strong version of completeness Theorem for $\mathcal{QH}_{\vee,\triangle}^{3}$ using the Lindenbaum-Tarski algebra in a similar way the propositional case. Let us observe the algebra of formulas is an absolutely free algebra generated by the atomic formulas and its quantified formulas.

Now, let us consider the relation $\equiv$ defined by $\alpha\equiv \beta$ iff $\vdash \alpha \to \beta$ and $\vdash \alpha\to \beta$, then we have the algebra $\mathfrak{Fm}_{\Theta'}/\equiv$ is a $H^{\vee,\triangle}_3$-algebra and the proof is exactly the same as in the propositional case (see, for instance, \cite{Bell}). On the other hand, it is clear that $\mathcal{QH}_3^{\vee,\triangle}$ is a tarskian and finitary logic. So, we can consider the notion of (maximal) consistent and closed theories with respect to some formula  in the same way as the propositional case. Therefore, we have that Lindenbaum- \L os' Theorem holds for $\mathcal{QH}_3^{\vee,\triangle}$. Then, we have the following

\begin{theo} \label{lig-maxim} Let $\Gamma\cup\{\varphi\}\subseteq\mathfrak{Fm}_{\Theta}$, with $\Gamma$ non-trivial maximal respect to $\varphi$ in $\mathcal{QH}^{\vee,\triangle}_3$. Let $\Gamma/\equiv=\{\overline{\alpha}:\alpha\in\Gamma\}$ be a subset of  $\mathfrak{Fm}_\Theta/\equiv$, then:
\begin{itemize}
  \item [1.] If $\alpha\in\Gamma$ and $\overline{\alpha}=\overline{\beta}$, then $\beta\in\Gamma$.  Besides, it is verified that $\Gamma/\equiv\,=\,\{\overline{\alpha}: \Gamma \vdash\alpha \}$ in this case we say that it is closed. 
  
  \item [2.] $\Gamma/\equiv$ is a modal deductive system of $\mathfrak{Fm}_\Theta/\equiv$. Also, if $\overline{\varphi}\notin\Gamma/\equiv$ and  for any  modal deductive system $\overline{D}$ being  closed in the sense of 1 and  containing properly to $\Gamma/\equiv$, then $\overline{\varphi}\in\overline{D}$.
\end{itemize}
\end{theo}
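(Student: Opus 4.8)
The plan is to mimic, almost verbatim, the proof of the propositional analogue (Theorem \ref{lig-maxim1}), since $\mathcal{QH}^{\vee,\triangle}_3$ is again tarskian and finitary and its Lindenbaum--Tarski algebra $\mathfrak{Fm}_\Theta/\equiv$ is a $H^{\vee,\triangle}_3$-algebra. For part 1, suppose $\alpha\in\Gamma$ and $\overline{\alpha}=\overline{\beta}$; then by definition of $\equiv$ we have $\vdash\alpha\to\beta$ and $\vdash\beta\to\alpha$, so from $\alpha\in\Gamma$ and $(\mathrm{MP})$ we get $\Gamma\vdash\beta$, and since $\Gamma$ is maximal (hence closed, as remarked before Lemma \ref{exmaxnotr}) we obtain $\beta\in\Gamma$. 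The identity $\Gamma/\equiv\,=\,\{\overline\alpha:\Gamma\vdash\alpha\}$ then follows: the inclusion $\subseteq$ is trivial, and for $\supseteq$, if $\Gamma\vdash\alpha$ then $\alpha\in\Gamma$ by closedness, so $\overline\alpha\in\Gamma/\equiv$. This shows $\Gamma/\equiv$ is closed in the stated sense.

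For part 2, I would first check the three defining conditions of a modal deductive system (Definition \ref{mds}) for $\Gamma/\equiv$. Condition $(D1)$: $\overline 1=\{\alpha:\vdash_\vee\alpha\}\subseteq\Gamma$ since $\Gamma$ is closed, so $1=\overline{1}\in\Gamma/\equiv$. Condition $(D2)$: if $\overline\alpha,\overline\alpha\to\overline\beta=\overline{\alpha\to\beta}\in\Gamma/\equiv$, then by part 1 we may take representatives $\alpha,\alpha\to\beta\in\Gamma$, and $(\mathrm{MP})$ together with closedness gives $\beta\in\Gamma$, i.e. $\overline\beta\in\Gamma/\equiv$. Condition $(D3)$: if $\overline\alpha\in\Gamma/\equiv$ with $\alpha\in\Gamma$, then $(\mathrm{NEC})$ gives $\Gamma\vdash\triangle\alpha$, hence $\triangle\alpha\in\Gamma$ and $\triangle\overline\alpha=\overline{\triangle\alpha}\in\Gamma/\equiv$.

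It remains to prove the maximality-type statement: if $\overline{\varphi}\notin\Gamma/\equiv$ and $\overline D$ is a closed modal deductive system properly containing $\Gamma/\equiv$, then $\overline\varphi\in\overline D$. Set $D=\{\alpha:\overline\alpha\in\overline D\}$; by closedness of $\overline D$ we have $D=\{\alpha:D\vdash\alpha\}$. Pick $\gamma\in D\setminus\Gamma$ (possible since the containment is proper and both sets are closed). Since $\Gamma$ is maximal with respect to $\varphi$ and $\gamma\notin\Gamma$, we get $\Gamma\cup\{\gamma\}\vdash\varphi$, hence $D\vdash\varphi$. Now I argue by induction on the length of a derivation $\alpha_1,\dots,\alpha_n=\varphi$ from $D$ that $\overline{\alpha_n}\in\overline D$, exactly as in Theorem \ref{lig-maxim1}: in the base and axiom cases, an instance of an axiom would give $\Gamma\vdash\alpha_i$, contradicting $\alpha_i\notin\Gamma$ when $\alpha_i=\varphi$ (and otherwise $\alpha_i\in D$ directly); the $(\mathrm{MP})$ step uses $(D2)$ of $\overline D$; the $(\mathrm{NEC})$ step uses $(D3)$ of $\overline D$. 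The genuinely new cases are the quantifier rules $(\mathrm{R3})$ and $(\mathrm{R4})$: if $\alpha_n$ is obtained by $(\mathrm{R4})$ from some earlier $\alpha_j$ of the form $\psi\to\chi$ with $x\notin\mathrm{FV}(\psi)$ and $\alpha_n=\psi\to\forall x\chi$, then by induction $\overline{\psi\to\chi}\in\overline D$; since $\overline D$ is closed under the consequence relation of $\mathcal{QH}^{\vee,\triangle}_3$ — which is precisely what ``closed modal deductive system'' buys us — applying $(\mathrm{R4})$ inside $D$ yields $\overline{\psi\to\forall x\chi}\in\overline D$, and similarly for $(\mathrm{R3})$. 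I expect this last point — making precise that a \emph{closed} (not merely modal) deductive system is stable under the quantifier rules, so that the inductive argument goes through for $(\mathrm{R3})$ and $(\mathrm{R4})$ — to be the only real subtlety; everything else is a transcription of the propositional proof, and I would simply say ``the remaining cases are handled as in Theorem \ref{lig-maxim1}.''
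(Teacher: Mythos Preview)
Your proposal is correct and follows essentially the same route as the paper's own proof: reduce to the propositional Theorem \ref{lig-maxim1} and add the two new inductive cases for the quantifier rules $(R3)$ and $(R4)$, using the closedness of $\overline{D}$ to pass from $\overline{\theta\to\beta}\in\overline{D}$ to $\overline{\exists x\theta\to\beta}\in\overline{D}$ (respectively $\overline{\theta\to\forall x\beta}\in\overline{D}$). If anything, you are more explicit than the paper about \emph{why} the ``closed'' hypothesis on $\overline{D}$ is needed precisely at the quantifier-rule steps; the paper simply writes ``From the latter, we obtain $\overline{\exists x\theta\to\beta}\in\overline{D}$'' without isolating this point.
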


\begin{dem}
According to the proof of Theorem \ref{lig-maxim1}, we only have to consider the rules $(R3)$ and $(R4)$. The fact that $\Gamma/\equiv$ is closed follows immediately.

In oder to complete the proof we have to consider two new cases. It is clear that  $\Gamma/\equiv$ a subset of $\overline{D}$. Now, let us consider $\overline{\phi}\in\overline{D}$ then $\overline{\phi}\notin\Gamma/\equiv$ and remember $D=\{\alpha:\overline{\alpha}\in\overline{D}\}$.
5. There exists $\{j,t_1,...,t_m\}\subseteq\{1,...,k-1\}$ such that $\alpha_{t_1},...,\alpha_{t_m}$ is a derivation of $\alpha_j=\theta\to\beta$. Let us suppose that $\alpha_n=\exists x\theta\to\beta$ is obtained by $\alpha_j$ applying $(R3)$. From induction hypothesis, we have that $\overline{\theta\to\beta}\in\overline{D}$.  From the latter, we obtain $\overline{\exists x\theta\to\beta}\in\overline{D}$.
6. There exists $\{j,t_1,...,t_m\}\subseteq\{1,...,k-1\}$ such that $\alpha_{t_1},...,\alpha_{t_m}$ is a derivation of $\alpha_j=\theta\to\beta$. Let us suppose that $\alpha_n=\theta\to\forall x\beta$ is obtained by $\alpha_j$ applying $(R4)$. From induction hypothesis, we have $\overline{\theta\to\beta}\in\overline{D}$ and then,  $\overline{\theta\to\forall x\beta}\in\overline{D}$.
\end{dem}

\

We note that for a given  maximal consistent theory $\Gamma$ of $\mathfrak{Fm}_{\Theta}$ we have  $\Gamma/\equiv$ is a maximal modal deductive system of $\mathfrak{Fm}_{\Theta}/\equiv$. If we denote  $A:=\mathfrak{Fm}_{\Theta}/\equiv$ and $\theta:= \Gamma/\equiv$ by well-known results of Universal algebras, we have the quotient algebra   $A/\theta$ is a simple algebra, see Corollary \ref{genalg}. From the latter and by adapting the first isomorphism theorem for Universal Algebras, we have that $A/\theta$ is an isomorphic to  $\mathfrak{Fm}_{\Theta}/\Gamma$ where it is defined by the congruence $\alpha \equiv_\Gamma \beta$ iff $\alpha\to \beta, \beta\to \alpha\in \Gamma$

\begin{theo} \label{correcfue1}  Let $\Gamma\cup\{\varphi\}$ be a set of sentences, then $\Gamma\vDash\varphi$ then $\Gamma\vdash\varphi$.
\end{theo}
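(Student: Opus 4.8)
The plan is to prove completeness by the standard Henkin-style argument adapted to our semisimple setting, mirroring the propositional proof of Theorem \ref{correcfue} but adding the treatment of quantifiers and witnesses. First I would argue by contraposition: assume $\Gamma\not\vdash\varphi$ and produce a structure $\mathfrak{S}$ together with an $\mathfrak{S}$-valuation that satisfies every sentence of $\Gamma$ but not $\varphi$. Since $\mathcal{QH}_3^{\vee,\triangle}$ is tarskian and finitary, the Lindenbaum--\L os Lemma (Lemma \ref{exmaxnotr}) yields a theory $\Gamma\subseteq\Omega$ that is maximal non-trivial with respect to $\varphi$. To interpret the quantifiers correctly one must first pass to a language $\Theta'$ obtained from $\Theta$ by adding a denumerable set of new individual constants, and enlarge $\Omega$ to a theory $\Omega^{*}$ over $\Theta'$ that is still maximal non-trivial with respect to $\varphi$ and is additionally \emph{Henkin} (i.e. $\exists$-complete: whenever $\exists x\,\psi\in\Omega^{*}$ there is a constant $c$ with $\psi(x/c)\in\Omega^{*}$, and dually for the $\forall$-witnessing needed to compute infima). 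This is the usual constants-adjunction construction; the rules (R3) and (R4) are exactly what makes it go through, and finitarity guarantees consistency is preserved at the limit.

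Next I would form the Lindenbaum--Tarski algebra $A:=\mathfrak{Fm}_{\Theta'}/\equiv$, which is an $H^{\vee,\triangle}_3$-algebra by the first-order analogue of Theorem \ref{linP} (already noted in the text), and set $\theta:=\Omega^{*}/\equiv$. By Theorem \ref{lig-maxim}, $\theta$ is a maximal modal deductive system of $A$, so by Lemma \ref{conS} and Corollary \ref{genalg} the quotient $A/\theta$ is a simple algebra, hence isomorphic to $\mathbb{C}_3^{\to,\vee}$ or $\mathbb{C}_2^{\to,\vee}$; in particular it is a \emph{complete} $H^{\vee,\triangle}_3$-algebra (being finite), which is precisely what Definition \ref{defim} demands of a structure. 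Let $q:A\to A/\theta$ be the canonical projection and let $\rho:\mathfrak{Fm}_{\Theta'}\to A/\theta$ be the composite of $q$ with the canonical surjection $\mathfrak{Fm}_{\Theta'}\to A$. Then $\rho(\psi)=1$ iff $\psi\in\Omega^{*}$ (using that $\Omega^{*}$ is closed, part 1 of Theorem \ref{lig-maxim}).

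Now I would build the structure: take $S$ to be the set of closed terms of $\Theta'$ (the term model), interpret each constant and function symbol syntactically, interpret each $n$-ary predicate $P$ by $P^{\mathfrak{S}}(t_1,\dots,t_n):=\rho(P(t_1,\dots,t_n))$, and take the algebra component to be $A/\theta$. Let $v$ be the $\mathfrak{S}$-valuation sending each variable to itself. The key lemma is the \emph{truth lemma}: $\|\psi\|^{\mathfrak{S}}_v=\rho(\psi)$ for every sentence $\psi$, proved by induction on $\psi$. The propositional connective cases are immediate since $\rho$ is a homomorphism; the quantifier cases are where the Henkin property is used, together with Lemma \ref{LM} to commute $\triangle$ past the suprema/infima when verifying (Ax13)--(Ax14) are respected — more precisely, one shows $\|\exists x\,\psi\|^{\mathfrak{S}}_v=\bigvee_{t}\rho(\psi(x/t))$ equals $\rho(\exists x\,\psi)$ because, on one hand $\rho(\psi(x/t))\le\rho(\exists x\,\psi)$ by (Ax11) for each $t$, and on the other hand the Henkin witness $c$ gives $\rho(\psi(x/c))=\rho(\exists x\,\psi)$, so the join is attained; dually for $\forall$ using (Ax12) and the $\forall$-witness. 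Once the truth lemma is in hand, $\mathfrak{S}\vDash\gamma[v]$ for all $\gamma\in\Gamma\subseteq\Omega^{*}$ while $\|\varphi\|^{\mathfrak{S}}_v=\rho(\varphi)\neq1$ since $\varphi\notin\Omega^{*}$; finally, since $\Gamma$ and $\varphi$ are over $\Theta$, one restricts/reducts $\mathfrak{S}$ back to $\Theta$, which does not change the truth values of $\Theta$-sentences. This contradicts $\Gamma\vDash\varphi$ and completes the proof.

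The main obstacle I expect is the Henkin/witnessing step: verifying that the constants can be adjoined and the theory extended to an $\exists$-complete maximal-non-trivial theory \emph{without} trivializing it with respect to $\varphi$, and then proving the quantifier clauses of the truth lemma so that $\bigvee$ and $\bigwedge$ in $A/\theta$ are actually witnessed by closed terms. A secondary subtlety is making sure the algebra $A/\theta$ really satisfies the completeness requirement of Definition \ref{defim} — here we get it for free because simple $H^{\vee,\triangle}_3$-algebras are finite chains, and this is exactly the simplification over Rasiowa's and Cignoli's approach advertised in the introduction, so no Dedekind--MacNeille completion is needed. The remaining bookkeeping (reduct to $\Theta$, closure of $\Omega^{*}$, preservation of the valuation under substitution via the stated property $\|\varphi(x/t)\|^{\mathfrak{S}}_v=\|\varphi\|^{\mathfrak{S}}_{v[x\to\|t\|^{\mathfrak{S}}_v]}$) is routine.
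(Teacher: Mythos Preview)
Your argument is correct but takes a different route from the paper. The paper does \emph{not} perform a Henkin expansion: it works over the original signature $\Theta$, takes the domain to be the set $T_\Theta$ of all terms (variables included), and verifies the quantifier clauses of the truth lemma directly from (Ax11)--(Ax12) and the rules (R3)--(R4), exploiting that in the Lindenbaum algebra the variable $x$ itself serves as a generic element so that $\overline{\forall x\alpha}=\bigwedge_{t}\overline{\alpha(x/t)}$ already holds before one passes to the simple quotient; this is essentially Rasiowa's canonical-model construction, with the advertised simplification that the simple quotient is a subalgebra of $\mathbb{C}_3^{\to,\vee}$ and hence automatically complete. Your route instead adjoins fresh constants, saturates to a Henkin theory, uses the witnesses to force the relevant joins and meets to be attained at closed terms, and then reducts back to $\Theta$. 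What you gain is a cleaner, more portable verification of the quantifier cases; what the paper gains is brevity --- no language expansion, no saturation step, no reduct.

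One caution on your version: the weak Henkin property you state (``whenever $\exists x\,\psi\in\Omega^{*}$ there is a constant $c$ with $\psi(x/c)\in\Omega^{*}$'') is not by itself enough in a three-valued setting to conclude that the join is attained, i.e.\ that $\rho(\psi(x/c))=\rho(\exists x\,\psi)$, when the latter value is $\tfrac12$. You need the strong form --- add the Henkin sentences $\exists x\,\psi\to\psi(x/c_\psi)$ and dually $\psi(x/d_\psi)\to\forall x\,\psi$ to $\Omega^{*}$ --- so that $\rho(\exists x\,\psi)\le\rho(\psi(x/c_\psi))$ holds unconditionally. This is presumably what you intend by ``the usual constants-adjunction construction'', and (R3)--(R4) do make it go through, but the statement should be tightened.
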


\begin{dem} 
Let us suppose $\Gamma\vDash \varphi$ and $\Gamma\not\vdash\gamma$. Then, by Lindenbaum- \L os' Lemma, there exists $\Delta$ maximal consistent theory such that $\Gamma \subseteq \Delta$. Now, consider $\mathfrak{Fm}$ the algebra of closed formulas and  the algebra $\mathfrak{Fm}_{\Theta}/\Delta$  defined by the congruence $\alpha \equiv_\Delta \beta$ iff $\alpha\to \beta, \beta\to \alpha\in \Delta$. We know that $\mathfrak{Fm}_{\Theta}/\Delta$ is isomorphic to a subalgebra of $\mathbb{C}_3^{\to,\vee}$ and so, complete as lattice, in view of the above observations. Thus, taking the canonical projection $\pi_\Delta:\mathfrak{Fm} \to  \mathfrak{Fm}_{\Theta}/\Delta$.  

On the other hand, consider the structure $\mathfrak{M}=\langle \mathfrak{Fm}_{\Theta}/\Delta, T_{\Theta}, \cdot^{T_{\Theta}}\rangle $ where $T_{\Theta}$ is a set of terms. Then, it is clear that for every $t\in T_{\Theta}$ we have a constant $\hat{t}$ of $\Theta$. Now, we can consider a function $\mu:Var\to T_{\Theta}$ defined by $v(x)=x$. Besides, we have the interpretation $||\cdot||^\mathfrak{M}_\mu: \mathfrak{Fm} \to  \mathfrak{Fm}_{\Theta}/\Delta$ defined by if $\hat{t}$ is a constant then $||\hat{t}||^\mathfrak{M}_\mu:= t$, if  $f\in\F$ then $||f(t_1,\cdots ,t_n)||^\mathfrak{M}_\mu= f(t_1,\cdots ,t_n)$; if  $P\in\P$ then $||P(t_1,\cdots ,t_n)||^\mathfrak{M}_\mu= \pi_\Delta(P(t_1,\cdots ,t_n))$. Our interpretation is defined for atomic formulas; but it is easy to see that $||\alpha||^\mathfrak{M}_\mu=\pi_\Delta(\alpha)$ for every quantifier-free formula $\alpha$. Moreover, it is easy to see that for every formula $\phi(x)$ and every term $t$ we have $||\phi(x/\hat{t})||^\mathfrak{M}_\mu=||\phi(x/t)||^\mathfrak{M}_\mu$. Therefore, from the latter property and by (Ax12) and (R4), we have $||\forall x\alpha||^\mathfrak{A}_\mu=\underset{a\in T_{\Theta}}{\bigwedge} ||\alpha||^\mathfrak{A}_{\mu[x\to a]}$ and now using (Ax11) and (R3), we obtain $||\exists x\alpha||^\mathfrak{A}_\mu=\underset{a\in T_{\Theta}}{\bigvee}||\alpha||^\mathfrak{A}_{\mu[x\to a]}$. So, $||\cdot||^\mathfrak{M}_\mu$ is an interpretation map  such that $||\alpha||^\mathfrak{M}_\mu =1$ iff $\alpha \in \Delta$. On the other hand, it is not hard to see  for every   formula $\beta$, we have  $||\beta||^\mathfrak{M}_\mu=||\beta||^\mathfrak{M}_v$ for every $\mathfrak{M}$-valuation $v$. Therefore, $\mathfrak{M}\vDash\gamma$ for every $\gamma\in\Gamma$ but $\mathfrak{M}\not\vDash\varphi$.
\end{dem}

\

Given a formula $\varphi$ and suppose $\{x_1,\cdots,x_n\}$ is the set of variable of $\varphi$, the {\em universal closure} of $\varphi$ is defined by $\forall x_1\cdots \forall x_n\varphi$. Thus, it is clear that if $\varphi$ is a sentence then the universal closure of $\varphi$ is itself. Now, we are in condition to proving the following completeness theorem for formulas:
 
\begin{theo}  Let $\Gamma\cup\{\varphi\}$ be a set formulas, then $\Gamma\vDash\varphi$ then $\Gamma\vdash\varphi$.
\end{theo}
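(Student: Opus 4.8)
The plan is to reduce the statement about arbitrary formulas to the already-proven Theorem \ref{correcfue1}, which handles the case of sentences, by passing to universal closures. The key observation is that a formula and its universal closure are interderivable in $\mathcal{QH}^{\vee,\triangle}_3$ and semantically equivalent in every structure, so the general case collapses to the case of sentences.

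First I would establish the syntactic half of the bridge. Writing $\varphi^\forall$ for the universal closure $\forall x_1\cdots\forall x_n\varphi$, I claim $\vdash \varphi^\forall \to \varphi$ and, from $\varphi$, that $\vdash \varphi^\forall$. The first direction follows by repeated use of (Ax12) together with $(M_i 8)$-style transitivity of $\to$; the second follows by iterated application of the rule (R4), noting at each stage that the variable being quantified does not occur free in a theorem with no free variables relevant to it (more carefully: from $\vdash\varphi$ one gets $\vdash \psi\to\varphi$ for a suitable tautology $\psi$ with the right free-variable condition, then applies (R4), or one argues directly that (R4) is applicable because $x_i$ is not free in the empty antecedent context). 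Consequently $\Gamma\vdash\varphi$ if and only if $\Gamma\vdash\varphi^\forall$, because each $\gamma\in\Gamma$ can likewise be replaced by $\gamma^\forall$ without changing the consequence relation.

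Next I would establish the semantic half: for any structure $\mathfrak{S}$ and any $\mathfrak{S}$-valuation $v$, $||\varphi^\forall||^{\mathfrak{S}}_v = \bigwedge_{\bar a} ||\varphi||^{\mathfrak{S}}_{v[\bar x\to\bar a]}$, and in particular $\mathfrak{S}\vDash\varphi$ iff $\mathfrak{S}\vDash\varphi^\forall$, simply by unfolding the clause for $||\forall x\alpha||^{\mathfrak{S}}_v$ in the interpretation map and using that truth in $\mathfrak{S}$ quantifies over all valuations anyway. Hence $\Gamma\vDash\varphi$ iff $\Gamma^\forall\vDash\varphi^\forall$, where $\Gamma^\forall=\{\gamma^\forall:\gamma\in\Gamma\}$.

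With both bridges in place the argument closes: assume $\Gamma\vDash\varphi$; then $\Gamma^\forall\vDash\varphi^\forall$, and since $\Gamma^\forall\cup\{\varphi^\forall\}$ consists of sentences, Theorem \ref{correcfue1} yields $\Gamma^\forall\vdash\varphi^\forall$; finally the syntactic bridge gives $\Gamma\vdash\varphi$. I expect the main obstacle to be the careful verification of the free-variable side conditions when applying (R4) to push universal quantifiers in front of a theorem — one must make sure the rule is genuinely applicable at each step, which is the standard generalization-on-constants subtlety; everything else is routine unwinding of definitions already set up in the excerpt.
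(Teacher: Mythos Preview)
Your proposal is correct and follows essentially the same route as the paper: pass to universal closures $\Gamma^\forall$ and $\varphi^\forall$, use the semantic equivalence to obtain $\Gamma^\forall\vDash\varphi^\forall$, invoke Theorem~\ref{correcfue1} on sentences, and then return to $\Gamma\vdash\varphi$ via (Ax12) and (R4). The paper's proof is much terser but relies on exactly the same ingredients; your more careful treatment of the free-variable side conditions for (R4) is a welcome addition.
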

\begin{dem}
Let us suppose $\Gamma\vDash\varphi$ and consider the set $\forall\Gamma$ all universal closure of $\Gamma$. From the latter and definition of $\vDash$, we have $\forall\Gamma\vDash\forall x_1\cdots \forall x_n\varphi$. Then, according to Theorem \ref{correcfue1}, $\forall\Gamma\vdash\forall x_1\cdots \forall x_n\varphi$. Now, from latter and (Ax12) and (R4), we have $\Gamma\vdash\varphi$ as desired.
\end{dem}

\begin{theo}[Compactness Theorem] Let $\Omega$ be a subset of $\mathfrak{Fm}_\Theta$. $\Omega$ has a model if and only if any finite subtheory of $\Omega$ has a model.

\end{theo}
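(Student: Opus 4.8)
The plan is to derive compactness from the strong completeness theorem just established (Theorem \ref{correcfue1} together with the completeness theorem for formulas), exactly as one does in classical first-order logic. One direction is trivial: if $\Omega$ has a model $\mathfrak{S}$, then $\mathfrak{S}$ is a model of every finite subtheory of $\Omega$, so every finite subtheory has a model. The substantive direction is the converse, and I would prove its contrapositive: assume $\Omega$ has no model and deduce that some finite subtheory has no model.

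First I would observe that, by the soundness part of the adequacy theorem (Theorem \ref{correc}), if $\Omega$ were consistent — i.e. if there were no formula $\varphi$ with $\Omega\vdash\varphi$ and $\Omega\vdash$ (a refutation witness) — one could run the Lindenbaum--\L os construction (Lemma \ref{exmaxnotr}, valid here since $\mathcal{QH}_3^{\vee,\triangle}$ is tarskian and finitary) to extend $\Omega$ to a maximal consistent theory, and then the canonical-model construction used in the proof of Theorem \ref{correcfue1} produces a structure $\mathfrak{M}$ with $\mathfrak{M}\vDash\gamma$ for all $\gamma\in\Omega$. Hence if $\Omega$ has no model, $\Omega$ must be inconsistent: there is a formula $\varphi$ such that $\Omega\vdash\varphi$ and $\Omega$ also proves its ``negation witness'' (equivalently, in this $\{\to,\vee,\triangle\}$-fragment, $\Omega\vdash\alpha$ for a fixed $\alpha$ that is not a theorem, so that $\Omega$ derives everything; one can pin this down using that a non-trivial maximal theory fails to exist).

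Next I would invoke finitarity of the calculus: a derivation of $\varphi$ from $\Omega$ uses only finitely many premises, so there is a finite $\Omega_0\subseteq\Omega$ with $\Omega_0\vdash\varphi$, and likewise a finite set handling the refutation witness; taking the union gives a finite $\Omega_0\subseteq\Omega$ that is already inconsistent. By soundness (Theorem \ref{correc}) an inconsistent theory can have no model — any model would force $\|\varphi\|^{\mathfrak S}_v=1$ for incompatible $\varphi$. Therefore $\Omega_0$ is a finite subtheory of $\Omega$ without a model, which is the contrapositive of the claim.

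The main obstacle, and the only place requiring care, is making precise what ``inconsistent'' means in this negation-free fragment and checking that ``no model'' is genuinely equivalent to ``inconsistent'' here: classically one uses $\bot$ or negation, but in $\mathcal{QH}_3^{\vee,\triangle}$ one must phrase consistency as the existence of some non-derivable formula (Definition \ref{maxi}) and then lean on the completeness proof of Theorem \ref{correcfue1} to convert consistency into a model. Once that equivalence is in hand, the finitarity of $\vdash$ does all the remaining work, and the argument is routine. I would also remark that no new algebraic input is needed beyond what Corollary \ref{genalg} and Lemma \ref{LM} already supply for the canonical model.
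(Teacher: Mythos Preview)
The paper states this theorem without proof, so there is no argument of the authors to compare against; presumably they regard it as a routine consequence of strong completeness and finitarity, which is the route you take.

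Your argument, however, has a genuine gap precisely at the point you yourself flag. In the signature $\{\to,\vee,\triangle\}$ there is no falsum and no negation, and each primitive operation sends $1$ to $1$. Hence if one takes any complete $H^{\vee,\triangle}_3$-algebra (for instance $\mathbb{C}_2^{\to,\vee}$), any non-empty domain $S$, and interprets every predicate symbol as the constant function with value $1$, then an easy induction shows that $\|\psi\|^{\mathfrak S}_v=1$ for \emph{every} formula $\psi$ and every valuation $v$. Thus every set $\Omega\subseteq\mathfrak{Fm}_\Theta$---including the inconsistent ones---has a model. Your step ``by soundness an inconsistent theory can have no model'' is therefore false in this fragment: soundness only yields $\Omega\vdash\psi\Rightarrow\Omega\vDash\psi$, and the all-$1$ structure is simultaneously a model of $\Omega$ and of $\psi$, so no contradiction arises. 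A second, related gap is the passage from ``$\Omega$ is inconsistent'' to ``some finite $\Omega_0\subseteq\Omega$ is inconsistent'': finitarity gives, for each $\psi$, a finite $\Omega_\psi$ with $\Omega_\psi\vdash\psi$, but without a single formula playing the r\^ole of $\bot$ you cannot select one finite subset deriving everything; your appeal to a ``refutation witness'' has no meaning here.

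The upshot is that the theorem, as literally stated, is trivially true because both sides of the biconditional hold for every $\Omega$. If one wants a non-vacuous compactness statement in this setting, one must either restrict the class of admissible structures (excluding degenerate interpretations) or enrich the language with a constant $0$; in either case the step ``inconsistent $\Rightarrow$ no model'' would require a genuine argument that your sketch does not supply.
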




\

\




\end{document}